\newcommand{\algmargin}{\the\ALG@thistlm}
\newtheorem{theorem}{Theorem}
\newtheorem{lemma}{Lemma}
\newtheorem{corollary}{Corollary}
\newtheorem{definition}{Definition}
\newtheorem{assumption}{Assumption}
\renewcommand\@biblabel[1]{#1.}
\definecolor{comment}{rgb}{0, 0, 0}
\newcommand{\tr}[1]{\text{Tr}\left( #1 \right)}
\newcommand{\abs}[1]{\left| #1 \right|}
\newcommand{\poly}{\mathrm{poly}}
\newcommand{\PKU}{Center on Frontiers of Computing Studies, School of Computer Science, Peking University, Beijing 100871, China}
\newcommand{\bnu}{School of Artificial Intelligence,
 Beijing Normal University, Beijing,
 100875, China}
\begin{document}


\title{A Dequantized Algorithm for the Guided Local Hamiltonian Problem}

\author{Yukun Zhang}
\thanks{These authors contributed equally.}
\affiliation{\PKU}

\author{Yusen Wu}
\thanks{These authors contributed equally.}
\affiliation{\bnu}

\author{Xiao Yuan}
\email{xiaoyuan@pku.edu.cn}
\affiliation{\PKU}

\date{\today}

\begin{abstract}
The local Hamiltonian (LH) problem, the quantum analog of the classical constraint satisfaction problem, is a cornerstone of quantum computation and complexity theory. It is known to be QMA-complete, indicating that it is challenging even for quantum computers. Interestingly, the guided local Hamiltonian (GLH) problem~---~an LH problem with a guiding state that has a non-trivial overlap with the ground state~---~can be efficiently solved on a quantum computer and is proved to be BQP-complete. This makes the GLH problem a valuable framework for exploring the fundamental separation between classical and quantum computation.
Remarkably, the quantum algorithm for solving the GLH problem can be `dequantized' (i.e., made classically simulatable) under certain conditions, such as when only constant accuracy is required and when the Hamiltonian satisfies an unrealistic constant operator norm constraint. In this work, we relieve these restrictions by introducing a dequantized classical algorithm for a randomized quantum imaginary-time evolution quantum algorithm. We demonstrate that it achieves either limited or arbitrary constant accuracy, depending on whether the guiding state’s overlap is general or exceeds a certain threshold. Crucially, our approach eliminates the constant operator norm constraint on the Hamiltonian, opening its applicability to realistic problems.
Our results advance the classical solution of the GLH problem in practical settings and provide new insights into the boundary between classical and quantum computational power.

\end{abstract}

\maketitle



The ground-state problem has emerged as one of the most significant challenges in science, owing to its profound importance in both physics and computer science. In quantum physics, it is pivotal for understanding exotic phenomena associated with low-lying energy states, such as superconductivity~\cite{wilson1983superconducting}, superfluidity~\cite{wheatley1975experimental}, and topological orders~\cite{wen1995topological,kane2005z}. These studies have broad applications across fields ranging from condensed matter physics~\cite{wheatley1975experimental,wilson1983superconducting,wen1995topological,kane2005z} and high-energy physics~\cite{kovchegov2013quantum} to quantum chemistry~\cite{levine2009quantum,cao2019quantum}.
In computer science, the problem is closely related to the local Hamiltonian (LH) problem~\cite{kempe2006complexity,gharibian2015quantum}, which has been extensively studied over the past two decades. The LH problem is proven to be QMA-complete (analogue to NP-complete and hence unlikely to be efficiently solvable on a quantum computer) for both synthetic~\cite{kitaev2002classical,kempe2006complexity} and physical quantum systems~\cite{childs2014bose,o2021electronic}, making it the quantum analog~\cite{aharonov2002quantum} of classical constraint satisfaction problems that characterize the complexity class NP, as established by the celebrated Cook-Levin theorem~\cite{karp2010reducibility}.
Furthermore, the quantum PCP conjecture~\cite{aharonov2013guest}, which posits that verifying the ground-state energy of an LH problem to within constant precision is QMA-hard in the worst case, remains one of the most profound and unresolved questions in quantum complexity theory over the past decade.

The search for exponential quantum advantages in solving the LH problem has been a central motivation and enduring pursuit in the study of quantum computation. Remarkably, the LH problem becomes efficiently solvable on a quantum computer when provided with a guiding state that has a non-trivial overlap with the ground state. Significant progress~\cite{lin2020near,dong2022ground,lin2022heisenberg,wan2022randomized,wang2023quantum,ding2023even,ni2023low} has been made in recent years. Near-optimal quantum algorithms~\cite{lin2020near,dong2022ground} have been developed, and specialized quantum algorithms tailored for early fault-tolerant quantum devices have been introduced, achieving substantial reductions in query depth~\cite{lin2022heisenberg,wan2022randomized,wang2023quantum,ding2023even,ni2023low}. 
In computer science, the LH problem with a non-trivial guiding state is referred to as the guided local Hamiltonian (GLH) problem~\cite{gharibian2022dequantizing}.
The GLH problem can also be viewed as a decision version of the ground-state energy estimation (GSEE) problem, where the goal is to approximate the ground-state energy to a specified accuracy.
This problem is proven to be BQP-complete~\cite{cade2022improved, cade2022complexity}, meaning it represents some of the “hardest” problems that can be efficiently solved on a quantum computer. These theoretical results suggest that exponential quantum advantage may be achievable for quantum systems where a reasonable guiding state is available.

However, the situation is more nuanced and delicate than it first appears. 
The GLH problem may become classically solvable, negating exponential quantum advantages when certain relaxed conditions are imposed. Specifically, it has been shown that the GLH problem can be efficiently solved classically if the guiding state is classically accessible, the required accuracy for estimating the ground-state energy is constant, and a constant bounds the operator norm of the Hamiltonian, $\|H\| \leq 1$ ~\cite{gharibian2022dequantizing,gall2024classical}. This conclusion was reached by constructing a dequantized classical algorithm based on the recently proposed near-optimal quantum algorithms~\cite{lin2020near, gilyen2019quantum} that efficiently solves the GLH problem under these conditions. Similar dequantization techniques have been explored for various quantum algorithms; see Ref.~\cite{tang2022dequantizing} for a comprehensive review.
However, the restriction on the operator norm of the Hamiltonian limits the applicability of the classical algorithm to most physical systems, including even the simplest Ising Hamiltonian, where $\|H\| = \mathcal{O}({\rm poly}(n))$, with $n$ representing the system size. While this constraint has been removed for 2D Hamiltonians~\cite{wu2024efficient}, finding an efficient classical algorithm for general and realistic GLH problems under mild and physically reasonable assumptions remains an open challenge.

Here, we address this challenge by presenting an efficient classical algorithm that dequantizes a novel quantum approach, termed randomized quantum imaginary-time evolution (RQITE), to solve the GLH problem. The RQITE algorithm shares structural similarities with recently proposed early fault-tolerant quantum algorithms but introduces an imaginary-time evolution filter~---~an exponential operator acting on the Hamiltonian. This construction allows for the use of cluster expansion to dequantize RQITE, providing an efficient classical solution to the GLH problem for general Hamiltonians without the operator norm constraint.
Our method draws inspiration from recent advancements in classical simulation of Hamiltonian evolution~\cite{wild2023classical,wu2024efficient} and Gibbs state preparation~\cite{yin2023polynomial,mann2024algorithmic,bakshi2024high}. However, the evolution time in the dequantized RQITE algorithm remains below a constant threshold, limiting it to only approximate the ground-state energy with finite accuracy. Extending beyond this threshold is theoretically infeasible, as the hardness of approximation results for certain Gibbs states~\cite{sly2012computational,goldberg2017complexity,mann2024algorithmic} demonstrate that classical simulation becomes NP-hard for larger evolution times.
Despite this limitation, we achieve arbitrary constant accuracy in energy estimation by leveraging analytic continuation techniques~\cite{wild2023classical} and further assuming a lower bound on the overlap of the guiding state with the ground state. Such an assumption has also been studied and utilized in quantum algorithms for ground-state energy estimation and quantum phase estimation~\cite{ding2023even,ni2023low}, where the query depth decreases as the overlap increases.
Finally, we discuss the implications of these findings for understanding the boundary between classical and quantum computation, shedding light on the potential and limitations of quantum computing in this domain. 

In the following, we focus on solving the GSEE and, consequently, the GLH problems under several assumptions. The main results are summarized in the main text, with detailed explanations provided in the Supplementary Materials.

\vspace{0.2cm}
\noindent\textbf{\emph{Randomized quantum imaginary-time evolution algorithm (RQITE).---}}
We first review the GSEE problem and introduce the RQITE algorithm for the GSEE problem. 

\begin{figure*}
    \centering
    \includegraphics[width=1\textwidth]{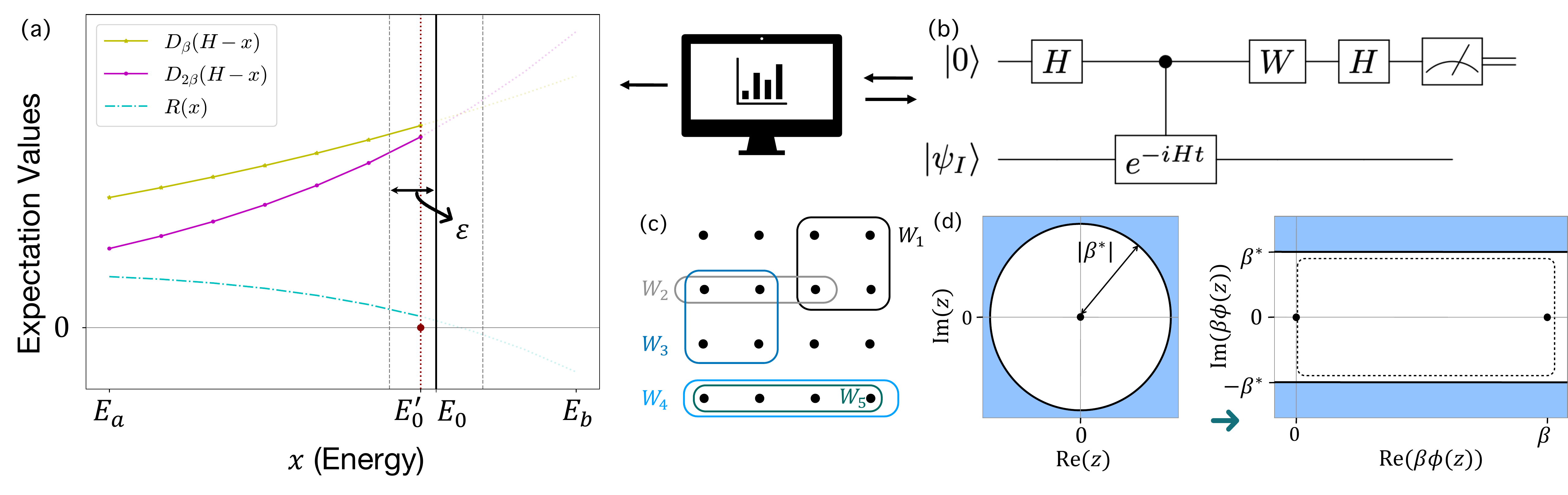}
    \caption{(a) Sketch of the $D_\beta(H-x)$, $D_{2\beta}(H-x)$ and $R(x)$ with $x$ evaluated at an interval $\varepsilon$. When close to $E_0$, $R(x)$ decreases monotonically. When $R(x)$ at $E_0^\prime$ is sufficiently small, the algorithm terminates and outputs $E_0^\prime$. (b) The Hamdard test circuit with the controlled operation as a Hamiltonian evolution operator. A classical computer determines the evolution time, and measurement results of the circuit are post-processed to estimate the partition functions. (c) Sketch of a cluster $\bm W$. The disconnected cluster consists of two connected clusters $\bm W=\bm{W}_a \cup \bm{W}_b$, where $\bm{W}_a=\{\bm{W}_1,\bm{W}_2,\bm{W}_3\}$ and $\bm{W}_b=\{\bm{W}_4,\bm{W}_5\}$. (d) Sketch of the analytic continuation approach. The function $\phi(z)$ is analytic for $\abs{z}\leq \beta^*$. The analytic continuation implements the map for a given $\beta\in\mathbb{R}$ to $\beta\phi(z)$, resulting in the elongated region.}
    \label{fig:main}
\end{figure*}

We consider $k$-local Hamiltonian that can be described as $H=\sum_{X\in S}\lambda_Xh_X$, where $S$ is a subsystem, each term $h_X$ acts nontrivially on subsystem $X$ of constant size $\leq k$, and $\lambda_X\in\mathbb{R}$ is the corresponding coefficient. Here, we suppose $\|h_X\|=1$ and $|\lambda_X|\leq 1$ for every $X\in S$ without loss of generality. Note that the operator norm of the Hamiltonian $\|H\|$ generally scales polynomially with the number of qubits.
Meanwhile, we utilize the related interaction graph $G$~\cite{wild2023classical,haah2024learning} to characterize the locality of $H$. The vertices of $G$ correspond to each term in the Hamiltonian. An edge exists between two vertices if and only if the qubits acted upon by the corresponding terms overlap. We denote the maximal degree of the interaction graph by $\mathfrak{d}=\max_{h_X\in H}\left\{\mathfrak{d}(h_X)\right\}$. Given these notations, we can formally introduce the GSEE problem.
\begin{definition}[The Ground State Energy Estimation (GSEE) problem]
\label{def:gsee}
Consider an $n$-qubit (local) Hamiltonian $H$, let $E_0<E_1\leq\cdots\leq E_{N-1}$, where $N=2^n$, be eigenvalues of $H$ with corresponding eigenstates $\ket{\psi_0},\ket{\psi_1},\cdots,\ket{\psi_{N-1}}$. Suppose a lower-bounded estimation $\Delta>0$ for the energy gap is given such that $E_1-E_0\geq \Delta$. Suppose we can also prepare an initial state $\ket{\psi_I}$ such that its overlap with the ground state satisfies $|\langle \psi_I| \psi_0\rangle|\geq \gamma$. The GSEE task estimates $E_0^\prime$ for the ground-state energy such that $|E_0-E_0^\prime|\leq\varepsilon$.
\end{definition}
The RQITE algorithm is inspired by recent randomized Fourier estimation (RFE) algorithms~\cite {lin2022heisenberg,wan2022randomized,wang2023quantum} that estimate $E_0$ by analyzing the convolution function using a Hadamard test circuit. The convolution function is defined as 
\begin{equation}
\begin{aligned}
    C(x)&:=(P\ast F)(x)=\sum_{j=0}^{N-1} p_j \int_{-\infty}^\infty\delta(\tau-E_j)  F(x-\tau)d\tau\\
    &=\sum_{j=0}^{N-1} p_j F(x-E_j),
\end{aligned}
\end{equation}
where $P(x):=\sum_{j=0}^{N-1} p_j \delta(\tau-E_j)$ is the spectral function of the guiding state, $p_j:=|\langle \psi_j|\psi_I \rangle|^2$, $\delta(\cdot)$ is the Dirac Delta function, and $F(x)$ is function that to be appointed. The convolution function is the linear combination of spectral weighted $F(x)$ located at each eigenenergy. As such, we may estimate $E_0$ by selecting functions $F(x)$ that behave distinctively at $x=0$. Then, using the lower-bounded overlap information $\gamma$, we can take the smallest value approaches the behavior at $x=0$ as an approximation to $E_0$ in an interval $[E_a,E_b]$ promised to contain $E_0$. For instance, Ref.~\cite{lin2022heisenberg} applies the Heaviside function, which leaps at the origin. This also makes $C(x)$ the cumulative distribution function.

For the RQITE, we first define a partition function: 
\begin{equation}\label{eq:partition_func}
\begin{aligned}
    D_\beta(H-x):=& \bra{\psi_I} e^{-\beta (H-x)}\ket{\psi_I}\\
    =&\sum_{j=0}^{N-1} p_j e^{-\beta (E_j-x)}=\sum_{j=0}^{N-1} p_j F(x-E_j),
\end{aligned}
\end{equation}
where $F(x)=e^{\beta x}$ with $\beta$ the parameter, $(\beta-x)\equiv (\beta-x\cdot\mathds{1})$ and we omit the identity $\mathds{1}$ for simplicity. The equivalence of the partition and convolution function can be easily checked. See Theorem~\ref{theorem:partition_convolution_equivalence} for a more rigorous proof. The partition function can be seen as the (square of) normalization factor for the imaginary-time evolved state $e^{-\beta(H-x)/2}\ket{\psi_I}$. Yet, we could not distinguish the ground-state energy from $D_\beta(H-x)$. To this end, we define the residue function as 
\begin{equation}\label{eq:partition_difference}
    R(x):= D_\beta(H-x) - D_{2\beta}(H-x).
\end{equation}
As shown in Fig.~\ref{fig:main}~(a), $R(x)$ decades monotonically to zero when $x\leq E_0$ approaches $E_0$. Hence, our scheme to pinpoint the ground-state energy is to estimate $R(x)$ beginning at $x=E_a$ and an interval of $\varepsilon$. The algorithm outputs the estimation $E_0^\prime$ when $R(x)$ decays below the termination threshold $\Xi$ (details refer to Theorem~\ref{thm:rqite}) that is sufficiently close to zero and outputs the estimation $E_0^\prime$. A key requirement for the algorithm is that the spectral gap is larger than the accuracy. The rationale is that around $x=E_0$, the effect of the first-excited state component in $R(x)$ can be controlled by $e^{-\beta \Delta}$ so that it will not affect the determination of $E_0^\prime$ dramatically. Specifically, we assume that
\begin{equation}\label{eq:gap_assumption}
    \frac{\Delta}{\varepsilon}\geq \ln(\gamma^{-2}\varepsilon^{-1}),
\end{equation}
a requirement that is satisfiable by various quantum molecular systems ranging from medium to relatively large system size~\cite{wang2023quantum}. We provide more detailed explanations in the Assumption~\ref{assume:gap_accuracy_relation} in the Appendix.

The estimation of the partition (convolution) function follows the standard procedure in RFE algorithms~\cite{lin2022heisenberg,wan2022randomized,wang2023quantum} by inspecting the Fourier transformation of 
\begin{equation}\label{eq:exp_fourier}
    e^{-\beta(H-x)}=\int_{-\infty}^\infty\mathrm{~d}t \hat{g}(t) e^{- i t(H-x) },~(H-x)\succeq0,
\end{equation}
where $\hat{g}(t)=\frac{\beta}{\pi\left(\beta^2+t^2\right)}$ is the Cauchy-Lorentz distribution. Here, the $(H-x)\succeq0$ results from that, without loss of generality, we assume that the algorithm terminates at $x\leq E_0$. As such, the function can be equivalently viewed. We then truncate the integral to approximate the partition function
\begin{equation}\label{eq:trunc_exp_fourier}
\begin{aligned}
    D_\beta(H-x)&\approx \int_{-T}^T \mathrm{~d}t\frac{\beta}{\pi\left(\beta^2+t^2\right)}  \bra{\psi_I}e^{- i (H-x)t}\ket{\psi_I}\\
    &=\mathbb{E}_{t\sim\hat{g}_T^\prime(t)} [ \mathbf{Z}(x)]\equiv \|\hat{g}_T\|e^{i xt}(\mathbb{E}\left[\mathbf{X}(t)\right]+\mathbb{E}\left[\mathbf{Y}(t)\right]),
\end{aligned}
\end{equation}
where $\hat{g}_T(x)$ is the Cauchy-Lorentz distribution truncated to the domain $[-T,T]$, $\hat{g}_T^\prime(x)=\hat{g}_T(x)/\|\hat{g}_T\|$, $\|\hat{g}_T\|$ is the normalization factor, $\mathbf{Z}(x):=\|\hat{g}_T\|\bra{\psi_I}e^{- i (H-x)t}\ket{\psi_I}$, $\mathbb{E}\left[\mathbf{X}(t)\right]={\rm{Re}}\left({\rm tr}\left[\rho_I e^{-i Ht}\right]\right)$, and $\mathbb{E}\left[\mathbf{Y}(t)\right]={\rm{Im}}\left(\mathrm{tr}\left[\rho_I e^{-i Ht}\right]\right)$ are real and imaginary estimators with $\rho_I=\ket{\psi_I}\bra{\psi_I}$. We thus estimate the partition function by sample $t$ from $\hat{g}_T(x)$ and for each $t$ we evaluate the real and imaginary estimator using the Hadamard circuit as shown in Fig.~\ref{fig:main} (b) with $W$ set to $I$ or $S^\dagger$. In the Appendix, we meticulously analyze both the parameter setting of the algorithm and its complexity, which we summarized in the following theorem.
\begin{theorem}\label{thm:rqite}
Given the GSEE problem given by Definition~\ref{def:gsee} and condition in Eq.~\eqref{eq:gap_assumption}, the RQITE algorithms can estimate the ground-state energy within additive error $\varepsilon$ by setting $\beta=\frac{\ln(\gamma^{-2}\varepsilon^{-1})}{\Delta}$, $T=\frac{4}{\pi\gamma^2\varepsilon}$, and termination threshold $\Xi=\left(\frac{\beta}{2}+1\right)\gamma^2\varepsilon$. The cost of the algorithm is given by
\begin{itemize}
    \item The maximal Hamiltonian evolution time is $t_{\text{max}}=\mathcal{O}(\varepsilon^{-1}\gamma^{-2})$;
    \item The total Hamiltonian evolution time is $t_{\text{total}}=\mathcal{\widetilde{O}}(\varepsilon^{-3}\Delta^2\gamma^{-6})$;
    \item The classical postprocess time is $\mathcal{\widetilde{O}}(\varepsilon^{-4}\Delta^2\gamma^{-6})$.
\end{itemize}
\end{theorem}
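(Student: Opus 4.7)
The plan is to prove Theorem~\ref{thm:rqite} in three stages: (i) an analytic study of the residue function $R(x)$ establishing that the sweep terminates within one $\varepsilon$-step of $E_0$; (ii) control of the Fourier-truncation error in approximating $D_\beta(H-x)$; and (iii) a concentration analysis of the Hadamard-test sample complexity. Each stage contributes an independent $O(\gamma^2\varepsilon)$-level error that, combined with the threshold $\Xi = (\beta/2+1)\gamma^2\varepsilon$, guarantees both correctness and the cost bounds.

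First, I would use the spectral decomposition from Eq.~\eqref{eq:partition_func} to write
\begin{equation*}
R(x) = \sum_{j=0}^{N-1} p_j\, e^{-\beta(E_j-x)}\bigl(1 - e^{-\beta(E_j-x)}\bigr),
\end{equation*}
and split it into the ground-state contribution and the excited-state tail. On the tail, bounding $\sum_{j\geq 1} p_j e^{-\beta(E_j-x)} \leq e^{-\beta(E_1-x)}$ and using the gap assumption~\eqref{eq:gap_assumption}, $\beta\Delta \geq \ln(\gamma^{-2}\varepsilon^{-1})$, yields an exponential suppression below $\gamma^2\varepsilon$ for $x \leq E_0$. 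On the ground-state term, the factor $p_0 \geq \gamma^2$ combined with the monotonicity of $u \mapsto e^{-u}(1-e^{-u})$ on $[0,\infty)$ shows that $R(E_0 - \varepsilon)$ exceeds $\Xi$ while $R$ ultimately drops below $\Xi$ as $x$ approaches $E_0$, so the sweep stops at some $E_0' \in [E_0 - \varepsilon, E_0 + \varepsilon]$. Second, for the truncation in Eq.~\eqref{eq:trunc_exp_fourier} I would estimate the Cauchy-Lorentz tail mass $\int_{|t|>T} \hat{g}(t)\,dt = O(\beta/T)$; since the integrand $\bra{\psi_I}e^{-i(H-x)t}\ket{\psi_I}$ is bounded by $1$, setting $T = 4/(\pi\gamma^2\varepsilon)$ keeps the bias below a constant fraction of $\Xi$.

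Third, each Hadamard test realizes a bounded random variable with $|\mathbf{Z}(x)| \leq \|\hat{g}_T\|\leq 1$, so Hoeffding's inequality supplies $O(\log(1/\delta)/\Xi^2) = \widetilde{O}(\gamma^{-4}\varepsilon^{-2})$ samples per query point $x$, each costing at most $t_{\max} = T = O(\gamma^{-2}\varepsilon^{-1})$ Hamiltonian-evolution time, establishing the first bullet. Multiplying by the number of sweep points, which is controlled by the width $O(\beta^{-1}) = O(\Delta/\ln(\gamma^{-2}\varepsilon^{-1}))$ of the transition region of $R$ in $\varepsilon$-steps and by the $(\beta/2+1)^{-2}$ factor in the per-point budget, yields the total evolution time $\widetilde{O}(\varepsilon^{-3}\Delta^2\gamma^{-6})$. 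The classical postprocessing reuses the same samples and adds one extra $\widetilde{O}(\varepsilon^{-1})$ factor from the Fourier inversion at each step, producing the stated $\widetilde{O}(\varepsilon^{-4}\Delta^2\gamma^{-6})$ bound.

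The main obstacle is the analysis of the termination rule: I must simultaneously show sensitivity (that $R$ crosses $\Xi$ no later than $E_0$) and robustness (that the excited-state spectrum and Fourier/sampling noise cannot drive $R$ below $\Xi$ before $x$ enters the $\varepsilon$-ball of $E_0$). This forces the three error budgets (excited-state leakage, truncation bias, Monte-Carlo deviation) to be balanced against the single threshold $\Xi$, which in turn fixes the choices $\beta = \ln(\gamma^{-2}\varepsilon^{-1})/\Delta$ and $T = 4/(\pi\gamma^2\varepsilon)$ exactly as stated. Everything else is essentially accounting once this balance is pinned down.
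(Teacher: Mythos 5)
Your stages (i) and (ii) follow the same route as the paper's own proof: the ground/excited splitting of $R(x)$ with the gap assumption suppressing the tail to $\mathcal{O}(\gamma^2\varepsilon)$ is exactly Lemma~\ref{lemma:beta_value}, and the Cauchy--Lorentz tail bound fixing $T=4/(\pi\gamma^2\varepsilon)$ is exactly the first part of Theorem~\ref{theorem:gsee_main_theorem}. One slip worth flagging: $u\mapsto e^{-u}(1-e^{-u})$ is \emph{not} monotone on $[0,\infty)$ (it peaks at $u=\ln 2$), so the monotonicity you invoke is only valid on the subintervals where $\beta(E_0-x)\leq\ln 2$; the paper's argument confines itself to such ranges via $\beta\varepsilon\leq 1$, and your sensitivity claim $R(E_0-\varepsilon)>\Xi$ needs the same kind of care rather than a blanket monotonicity statement.

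The genuine gap is in stage (iii), the cost accounting, and it is not merely cosmetic. First, the number of query points is not the transition width $O(\beta^{-1})$ in $\varepsilon$-steps: the sweep starts at $E_a$ and advances in $\varepsilon$-increments until termination, so it can visit $M=\mathcal{O}(\varepsilon^{-1})$ points, a quantity unrelated to $\beta$. Second, and decisively, if each query point pays its own Hoeffding budget $\widetilde{\mathcal{O}}(\Xi^{-2})=\widetilde{\mathcal{O}}(\gamma^{-4}\varepsilon^{-2}\Delta^2)$ and you multiply by $M=\mathcal{O}(\varepsilon^{-1})$ points and by $t_{\max}=\mathcal{O}(\gamma^{-2}\varepsilon^{-1})$, the quantum cost comes out as $\widetilde{\mathcal{O}}(\varepsilon^{-4}\Delta^2\gamma^{-6})$ --- one factor $\varepsilon^{-1}$ worse than the stated $t_{\text{total}}$. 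The paper avoids precisely this via sample reuse (Lemma~\ref{lemma:z_sample_complexity}): the times $t_i\sim\hat{g}'_T$ are drawn \emph{once}, the Hadamard outcomes $\bm{X}(t_i),\bm{Y}(t_i)$ are recorded once, and every query point $x_j$ is handled purely classically by attaching the phase $e^{it_ix_j}$ to the same data, with a union bound placing $M$ inside a logarithm; hence the quantum cost is $S\cdot t_{\max}$ independent of $M$, and only the classical postprocessing pays the extra $\varepsilon^{-1}$. Your proposal mentions reuse only when discussing postprocessing, while the evolution-time bullet is derived by multiplying a per-point budget by a point count --- taken literally, that derivation does not yield the claimed bound. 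Relatedly, the $\Delta^2$ must come from inside the per-point accuracy requirement $\epsilon'=\Theta(\gamma^2\beta\varepsilon)$ (since $\beta^{-1}=\Delta/\ln(\gamma^{-2}\varepsilon^{-1})$); writing the sample count as $\widetilde{\mathcal{O}}(\gamma^{-4}\varepsilon^{-2})$ and then re-introducing $(\beta/2+1)^{-2}$ as a separate multiplicative factor either double-counts it or omits it, depending on how one reads the sentence, so this bookkeeping needs to be pinned down before the three bullets follow.
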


\noindent We note that the complexity is not optimal with respect to these parameters in general; however, it is sufficient for our dequantization algorithm.

\begin{table*}
    \footnotesize
    \centering
    \caption{Comparisons of our results with related previous studies on solving the GLH problem, focusing on accuracy, input constraints, and computational time complexity.}\label{Table}
    \begin{tabular}{c|cccc}
      \hline\hline
      \textbf{Algorithm} & \textbf{Method} &\textbf{Accuracy} & \textbf{Constraints} &
      \textbf{Complexity}\\
      \hline\multirow{3}{*}{Refs~\cite{cade2022improved, cade2022complexity}} & \multirow{3}{*}{---}& \multirow{3}{*}{$\varepsilon=\frac{1}{{\rm poly}(n)}$} & \multirow{3}{*}{\makecell[c]{$k$-local, $\|H\|\leq 1$, \\$\gamma\in(\frac{1}{{\rm poly}(n)},1-\frac{1}{{\rm poly}(n)})$}} & \multirow{3}{*}{BQP-complete} \\ \\ \\
      \hline\multirow{2}{*}{Ref~\cite{gharibian2022dequantizing}} & \multirow{2}{*}{Dequantized QSVT} & \multirow{2}{*}{$\varepsilon=\mathcal O(1)$} & \multirow{2}{*}{$s$-sparse, $\|H\|\leq 1$} & \multirow{2}{*}{ $\mathcal{O}\left(\gamma^{-4}(\abs{S}2^s+1)^{(2+4/\varepsilon)\log(1/\gamma)}\right)$}\\ \\
      \hline\multirow{2}{*}{Ref~\cite{gall2024classical}} & \multirow{2}{*}{Dequantized QSVT} & \multirow{2}{*}{$\varepsilon\|H\|=\mathcal O(\|H\|)$} & \multirow{2}{*}{$k$-local} & \multirow{2}{*}{$(\mathcal{O}(1))^{\log(1/\gamma)/\varepsilon}$}\\ \\
       \hline\multirow{2}{*}{Ref~\cite{wu2024efficient}} & \multirow{2}{*}{Dequantized 2D Dynamics} & \multirow{2}{*}{$\varepsilon=\mathcal O(1)$} & \multirow{2}{*}{$k$-local, 2D Symmetry} & \multirow{2}{*}{$\mathcal{O}\left(n^{e^{\log(1/\gamma)/\varepsilon}\log n}\right)$}\\ \\
      \hline\multirow{2}{*}{Theorem~\ref{thm:deq_rqite}} & \multirow{2}{*}{Dequantized RQITE} & \multirow{2}{*}{$\varepsilon>\varepsilon^*=\Omega(1)$} & \multirow{2}{*}{$k$-local} & \multirow{2}{*}{$\varepsilon^{-1}{\rm poly}\left[(\abs{S}\Delta\gamma^{-2}\varepsilon^{-1})^{\log(\Delta/(2\varepsilon^*\log(\gamma^{-2}\varepsilon^{-1})))}\right]$}\\ \\
      \hline\multirow{2}{*}{Corollary~\ref{coro:1}} & \multirow{2}{*}{Dequantized RQITE} & \multirow{2}{*}{$\varepsilon>\varepsilon^*/\|H\|$} & \multirow{2}{*}{$k$-local, $\tilde{H}=H/\|H\|$} & \multirow{2}{*}{---}\\ \\
      \hline\multirow{3}{*}{Theorem~\ref{thm:deq_rqite_2}} & \multirow{3}{*}{Dequantized RQITE} & \multirow{3}{*}{$\varepsilon=\mathcal O(1)$} & \multirow{3}{*}{\makecell[c]{$k$-local, $\gamma=1/\sqrt{2}$, \\ $\mathcal{O}(1)$-dimension}} & \multirow{3}{*}{$\mathcal{O}\left(\left(\Delta\varepsilon^{-2}{\rm poly}(\abs{S})\right)^{e^{\log(1/(\gamma\varepsilon^{2}))/\Delta}}\right)$}\\ \\ \\
      \hline\multirow{3}{*}{Corollary~\ref{coro:2}} & \multirow{3}{*}{Dequantized RQITE} & \multirow{3}{*}{$\varepsilon=\mathcal O(1/\|H\|)$} & \multirow{3}{*}{\makecell[c]{$k$-local, $\gamma=1/\sqrt{2}$, \\$\tilde{H}=H/\|H\|$,\\ $\mathcal{O}(1)$-dimension}} & \multirow{3}{*}{---}\\ \\ \\
      \hline\hline
    \end{tabular}
\end{table*}

\vspace{0.2cm}
\noindent\textbf{\emph{Dequantization of the RQITE algorithm with limited accuracy.---}} Although approximating the partition function within a constant relative error is $\# P$-hard for complex $\abs{\beta}=\mathcal{O}(1)$ in the worst-case scenario~\cite{bravyi2022quantum,goldberg2017complexity}, we demonstrate that the computational complexity may have a surprising transition when $\abs{\beta}<\beta^*$, a constant threshold related to the target Hamiltonian $H$. Specifically, we propose a dequantized algorithm for the RQITE algorithm when $\beta$ is small so that only limited accuracy can be achieved. At the heart of our dequantized algorithm is the cluster expansion for the logarithm of the partition function~\cite{wild2023classical,yin2023polynomial}, which have found utilities in approximating the partition function of high-temperature Gibbs states~\cite{yin2023polynomial,mann2024algorithmic,bakshi2024high} and Loschmidt echo~\cite{wild2023classical}. See Fig.~\ref{fig:main} (c) for an illustration of clusters; we refer to Appendix \ref{sec:cluster} for further details.
Key to the cluster expansion is that disconnected clusters will not contribute to the approximation of partition function in the form~\cite{wild2023classical}
$$\tr{\ket{x}\bra{y}e^{-\beta H}},$$
where $\ket{x}$ and $\ket{y}$ are product states. Another important assumption is that the guiding (initial) state must be classically accessible in a semi-classical form~\cite{grilo2015qma}: $\ket{\psi_c}=\sum_{j=1}^R a_j \ket{x_j}$, where $R=\mathcal{O}(\mathrm{poly}(n))$ and $\ket{x_j}$ are product states. Such an assumption will not compromise the complexity of the GLH problem as BQP-hardness holds for semi-classical guiding states~\cite{gharibian2022dequantizing}. Subsequently, our strategy is to approximate the partition function 
\begin{equation}
\begin{aligned}
    D_{\beta}(H-x)&=\bra{\psi_c}e^{-\beta (H-x)}\ket{\psi_c}\\
    &=\sum_{x,y}a_xa_y^{*}\langle y|e^{-\beta (H-x})|x\rangle
\end{aligned}
\end{equation}
by estimating each $\log(\langle y|e^{-\beta (H-x})|x\rangle)$ with the cluster expansion, which gives us the following result.
\begin{theorem}\label{thm:deq_rqite}
Suppose an $R$-configurational semi-classical guiding state $\ket{\psi_c}$ is given, and the condition in Eq.~\ref{eq:gap_assumption} holds. Then, there exists a classical algorithm to solve the GSEE problem with a runtime of
\begin{align}\label{eq:GSEE_limited}
        \frac{R^2\abs{S}}{\varepsilon}{\rm poly}\left[\left(\frac{\abs{S}}{\gamma^2\beta\varepsilon[1-\beta/\beta^*]}\right)^{\log(\beta^*/\beta)}\right],
    \end{align}
where $\beta=\Delta^{-1}\ln(\gamma^{-2}\varepsilon^{-1})$ and $\beta^*=(2e^2\mathfrak{d}(\mathfrak{d}+1))^{-1}$. The algorithm is efficient when $\beta<\beta^*$, which corresponds to the accuracy limit:
\begin{eqnarray}
    \varepsilon> \varepsilon^*=2e^2\mathfrak{d}(\mathfrak{d}+1).
\end{eqnarray}
\end{theorem}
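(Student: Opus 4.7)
The plan is to classically emulate the RQITE procedure from Theorem~\ref{thm:rqite} by approximating $D_\beta(H-x)$ and $D_{2\beta}(H-x)$ to precision $\mathcal{O}(\gamma^2\varepsilon)$ on the sweep of $x$ values inside $[E_a,E_b]$; the residue $R(x)$ and the termination rule of RQITE can then be applied verbatim on a classical computer. Expanding the semi-classical guiding state as $\ket{\psi_c}=\sum_{i=1}^{R}a_i\ket{x_i}$ and using linearity gives
\begin{equation*}
D_\beta(H-x)=e^{\beta x}\sum_{i,j=1}^{R}a_i^{*}a_j\,\bra{x_i}e^{-\beta H}\ket{x_j},
\end{equation*}
so the task reduces to estimating the $R^2$ matrix elements $m_{ij}(\beta):=\bra{x_i}e^{-\beta H}\ket{x_j}$. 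Since $\abs{m_{ij}}\leq 1$ and $\sum_{i,j}\abs{a_i a_j}\leq R$ by Cauchy--Schwarz, an additive error $\delta=\Theta(\gamma^2\varepsilon/R)$ per matrix element suffices to meet the termination threshold $\Xi$ of Theorem~\ref{thm:rqite}.

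The second step is to apply the cluster expansion to $\log m_{ij}(\beta)$ on the interaction graph $G$, in the spirit of Refs.~\cite{wild2023classical,yin2023polynomial,mann2024algorithmic}. The formal $\beta$-series for the logarithm is indexed by connected multisets (clusters) of Hamiltonian terms on $G$ weighted by Ursell-type coefficients; disconnected clusters cancel, and each cluster $\bm W$ with boundary product states $\ket{x_i},\ket{x_j}$ contributes a locally computable amplitude supported on $\supp(\bm W)$. A Kotecky--Preiss-style estimate bounds the aggregate contribution of clusters of size $m$ by $\abs{S}\,(\beta\,e\mathfrak{d}(\mathfrak{d}+1))^m$, so the series converges absolutely once $\beta<\beta^*=(2e^2\mathfrak{d}(\mathfrak{d}+1))^{-1}$, and truncation beyond size $m$ costs $\mathcal{O}\!\bigl(\abs{S}(\beta/\beta^*)^m/(1-\beta/\beta^*)\bigr)$. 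Choosing $m=\Theta\!\bigl(\log[\abs{S}R/(\gamma^2\varepsilon(1-\beta/\beta^*))]/\log(\beta^*/\beta)\bigr)$ drives the truncation error in $\log m_{ij}$, and hence via $\abs{m_{ij}}\leq 1$ in $m_{ij}$ itself, below $\Theta(\gamma^2\varepsilon/R)$.

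Finally, I would account for the cost and close the argument. Rooted connected clusters of size at most $m$ can be enumerated in $\abs{S}\cdot(e\mathfrak{d}(\mathfrak{d}+1))^m=\abs{S}^{\mathcal{O}(m)}$ time by graph traversal, and each cluster's Ursell weight and boundary overlap with $\ket{x_i},\ket{x_j}$ is computable in $\poly(m)$ time. Multiplying by the $R^2$ product-state pairs and the $\mathcal{O}(1/\varepsilon)$ sweep in $x$ reproduces the runtime in Eq.~\eqref{eq:GSEE_limited}; substituting $\beta=\Delta^{-1}\log(\gamma^{-2}\varepsilon^{-1})$ into $\beta<\beta^*$ together with the gap assumption in Eq.~\eqref{eq:gap_assumption} yields the stated accuracy floor. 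The main obstacle I anticipate is the second step: producing a clean boundary-terminated convergence and truncation bound when $\ket{x_i}\neq\ket{x_j}$, since the off-diagonal product-state overlaps supported on each cluster must be absorbed into the Ursell weights in a manner that still lets $\beta<\beta^*$ tame the $(e\mathfrak{d}(\mathfrak{d}+1))^m$ combinatorial growth, rather than directly invoking the diagonal Gibbs-state estimates of the cited works.
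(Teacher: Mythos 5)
Your overall route is the same as the paper's: reduce GSEE to estimating $D_\beta(H-x)$ and $D_{2\beta}(H-x)$ along the RQITE sweep (Theorem~\ref{thm:rqite}, Lemma~\ref{lemma:beta_value}), expand the semi-classical state into $R^2$ product-state matrix elements, cluster-expand the logarithm of each element with truncation order $M=\mathcal{O}\bigl(\log(\cdot)/\log(\beta^*/\beta)\bigr)$, pay $\abs{S}\exp(\mathcal{O}(M))$ per element times $R^2$ and the $\mathcal{O}(1/\varepsilon)$ sweep, and obtain the accuracy floor from $\beta<\beta^*$ plus Eq.~\eqref{eq:gap_assumption}. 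The obstacle you flag at the end (off-diagonal product states $\ket{x_i}\neq\ket{x_j}$) is not a real barrier: the paper resolves it by directly invoking Proposition~9 of Ref.~\cite{wild2023classical} (Lemma~\ref{lemma:error1}), which is stated precisely for cluster amplitudes of the form $\langle y|h^{\bm V}|x\rangle$ with distinct product states.

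There is, however, a genuine gap in your error propagation. You factor the shift out as $e^{\beta x}$ and assert $\abs{m_{ij}}=\abs{\bra{x_i}e^{-\beta H}\ket{x_j}}\leq 1$; this is false in general, since $\|e^{-\beta H}\|=e^{-\beta E_0}$ and the theorem is aimed exactly at Hamiltonians with $\abs{E_0}=\Theta(\mathrm{poly}(n))$. Concretely: if $E_0>0$, then along the sweep $e^{\beta x}$ can reach $e^{\beta E_0}$, so your total error $e^{\beta x}R\delta$ with $\delta=\Theta(\gamma^2\varepsilon/R)$ overshoots the required precision by an exponential factor; if $E_0<0$, then $\abs{m_{ij}}$ can be as large as $e^{\beta\abs{E_0}}$, and converting the cluster expansion's intrinsically \emph{relative} (logarithmic) error into your \emph{absolute} per-element target $\delta$ forces a truncation error $\sim\delta e^{-\beta\abs{E_0}}$, hence $M=\Omega(\beta\abs{E_0})=\Omega(\mathrm{poly}(n))$ and a runtime $\abs{S}^{\mathcal{O}(M)}$ that is superpolynomial~---~defeating the point of removing the $\|H\|\leq1$ constraint. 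The paper's proof (Theorem~\ref{thm:partition_cluster}) avoids this by never leaving the relative-error picture: the log-estimate gives $\abs{\hat{d}_{x,y,\beta}-d_{x,y,\beta}}\leq\epsilon^\prime\abs{d_{x,y,\beta}}$, and since the sweep has $x\leq E_0$, the shifted elements obey $\abs{\bra{x_i}e^{-\beta(H-x)}\ket{x_j}}\leq e^{-\beta(E_0-x)}\leq 1$, so the factors $e^{\beta x}$ and $e^{-\beta E_0}$ cancel and the reassembled additive error on $D_\beta(H-x)$ is at most $R\epsilon^\prime$, with only $\log(R/\epsilon^\prime)$ entering $M$. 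Relatedly, the target precision should be $\Theta(\gamma^2\beta\varepsilon)$ (the resolution fixed by Lemma~\ref{lemma:beta_value} and Theorem~\ref{theorem:gsee_main_theorem}), not $\mathcal{O}(\gamma^2\varepsilon)$, which is too coarse whenever $\beta<1$; this $\beta$ is exactly where the factor $\gamma^2\beta\varepsilon$ in Eq.~\eqref{eq:GSEE_limited} originates.
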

\noindent We refer to Appendix \ref{sec:deq_limit} for the detailed proof. It is important to note that, given a constant lower bound on the accuracy $\varepsilon > \varepsilon^*$, the algorithm exhibits polynomial scaling with respect to all other parameters of the quantum system. Consequently, exponential quantum advantage is not anticipated for such problems. 

Furthermore, the accuracy scaling can be improved by normalizing the Hamiltonian to have a constant operator norm.
\begin{corollary}\label{coro:1}
The dequantization algorithm in Theorem~\ref{eq:GSEE_limited} remains efficient for a normalized Hamiltonian $\tilde{H} = H / \|H\|$ with an accuracy threshold of
$\varepsilon > \varepsilon^* / \|H\|$.
\end{corollary}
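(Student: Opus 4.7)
My plan is to apply Theorem~\ref{thm:deq_rqite} verbatim to the normalized Hamiltonian $\tilde H = H/\|H\|$, relying on the observation that its local-term decomposition has uniformly suppressed coefficients. Writing $H = \sum_{X\in S}\lambda_X h_X$ with $\|h_X\|=1$ and $|\lambda_X|\le 1$, one immediately gets $\tilde H = \sum_{X\in S}\tilde\lambda_X h_X$ with $\tilde\lambda_X = \lambda_X/\|H\|$, and hence $|\tilde\lambda_X|\le 1/\|H\|$. The interaction graph is unchanged, so $\tilde{\mathfrak d}=\mathfrak d$; the eigenvalues are rescaled to $\tilde E_j=E_j/\|H\|$ and the gap to $\tilde\Delta=\Delta/\|H\|$; the guiding state $\ket{\psi_c}$ and its overlap $\gamma$ are untouched. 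In particular $\ket{\psi_c}$ is still semi-classical with the same configuration number $R$, so the input-model hypothesis of Theorem~\ref{thm:deq_rqite} is preserved.

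The quantitative content of the corollary is a refined cluster-expansion threshold. Retracing the argument behind $\beta^{*}=(2e^2\mathfrak d(\mathfrak d+1))^{-1}$, the bound on the contribution of a cluster $\bm W$ factorizes over its constituent local terms, each of which now carries a coefficient of modulus at most $1/\|H\|$ instead of $1$. Every factor of $\beta$ in the original geometric-series estimate is therefore replaced by $\beta/\|H\|$, so the cluster expansion of $D_{\tilde\beta}(\tilde H-x)=\bra{\psi_c}e^{-\tilde\beta(\tilde H-x)}\ket{\psi_c}$ converges absolutely throughout the enlarged disc $|\tilde\beta|<\tilde\beta^{*}:=\|H\|\,\beta^{*}$, and the per-cluster truncation error inherits the same improvement.

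With this rescaled threshold in hand, applying Theorem~\ref{thm:deq_rqite} to $\tilde H$ with target accuracy $\varepsilon$, gap $\tilde\Delta$, and Eq.~\eqref{eq:gap_assumption} invoked in the form $\tilde\Delta/\varepsilon\ge\ln(\gamma^{-2}\varepsilon^{-1})$ gives $\tilde\beta=\tilde\Delta^{-1}\ln(\gamma^{-2}\varepsilon^{-1})\le 1/\varepsilon$. Efficiency holds whenever $\tilde\beta<\tilde\beta^{*}$, i.e.\ whenever $1/\varepsilon<\|H\|/\varepsilon^{*}$, which rearranges to the announced threshold $\varepsilon>\varepsilon^{*}/\|H\|$. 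Plugging $\tilde\beta$ and $\tilde\beta^{*}$ into Eq.~\eqref{eq:GSEE_limited} yields an explicit runtime with the same functional form, and the RQITE oracle queries to $e^{-it\tilde H}$ are simulated classically by the same cluster-based routine used in the proof of Theorem~\ref{thm:deq_rqite}.

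The one step that is not entirely mechanical is verifying that the improvement really is a uniform factor of $\|H\|$ in $\beta^{*}$ rather than a weaker gain. This is where I would spend the most care: the proof of Theorem~\ref{thm:deq_rqite} uses the bound $|\lambda_X|\le 1$ at each occurrence of a local term inside a cluster weight, so one must re-check that the whole derivation only ever uses $|\lambda_X|$ multiplicatively (so the substitution $|\lambda_X|\le 1/\|H\|$ propagates cleanly), and that the tail bound controlling the residue $R(x)$ in Eq.~\eqref{eq:partition_difference} benefits from the same rescaling rather than being dominated by $\|\tilde H\|\le 1$ in a way that would cancel it. Once this bookkeeping is confirmed, the remainder of the argument transfers without change.
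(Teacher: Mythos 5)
Your proof is correct and coincides with the paper's (implicit) reasoning: the paper states Corollary~\ref{coro:1} without an explicit proof, treating it as an immediate consequence of scale invariance~---~estimating the ground energy of $\tilde{H}=H/\|H\|$ to accuracy $\varepsilon$ is the same task as estimating that of $H$ to accuracy $\varepsilon\|H\|$, which Theorem~\ref{thm:deq_rqite} handles exactly when $\varepsilon\|H\|>\varepsilon^*$. Your cluster-expansion bookkeeping (coefficients suppressed to $|\lambda_X|/\|H\|$ enlarging the convergence disc to $\tilde{\beta}^*=\|H\|\beta^*$) is this same rescaling expressed in the variable $\tilde{\beta}=\|H\|\beta$, since $e^{-\tilde{\beta}(\tilde{H}-x)}=e^{-(\tilde{\beta}/\|H\|)(H-x\|H\|)}$ makes the two expansions identical term by term, and the residue-function analysis you flagged indeed transfers verbatim because it only uses $\tilde{\beta}\varepsilon\leq 1$ and $e^{-\tilde{\beta}\tilde{\Delta}}=\gamma^2\varepsilon$, both of which hold for the tilded quantities.
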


\noindent For instance, if $\|H\| = n^2$, the lower bound on the accuracy can be relaxed to $2e^2\mathfrak{d}(\mathfrak{d}+1)/n^2$. This observation underscores that the BQP-hardness of the GLH problem significantly depends on requiring an arbitrarily small inverse polynomial accuracy.


\vspace{0.2cm}
\noindent\textbf{\emph{Dequantization of the RQITE algorithm with arbitrary constant accuracy.---}}
We discuss the possibility of extending the dequantized algorithm to arbitrary constant accuracy. In general, extending the imaginary-evolution time to an arbitrary constant is still an open problem since this problem relates closely to the zero points of the (complex) partition functions of Gibbs states such that their logarithm becomes non-analytic. This problem has long been studied in (quantum) phase transition~\cite{lee1952statistical,fisher1965nature,barvinok2016combinatorics,harrow2020classical}, dating decades back. When the inverse temperature exceeds a constant threshold, the problem becomes classically intractable from the hardness of approximation statements~\cite{sly2012computational,galanis2016inapproximability,goldberg2017complexity,mann2024algorithmic}. Although these results were only found for certain quantum models, we believe their implications are general, and relate to restrictions in the imaginary-evolution time in our cases.

We circumvented the problem by determining analytic regions for the logarithm of the partition function when the guiding state overlap is large enough, which is summarized in the following result. 
\begin{lemma}\label{thm:analytic_region}
When the guiding-state overlap with the ground state satisfies $|\langle \psi_I| \psi_0\rangle|\geq \gamma= \frac{1}{\sqrt{2}},$
we have that the logarithm of the partition function $\log(D_\beta(H))$ is analytic for 
$\rm{Re}(\beta)> 0$.
\end{lemma}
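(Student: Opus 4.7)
The plan is to reduce analyticity of $\log(D_\beta(H))$ on the right half-plane to the statement that $D_\beta(H)$ has no zeros there. Since $D_\beta(H) = \sum_{j=0}^{N-1} p_j\, e^{-\beta E_j}$ is a finite linear combination of complex exponentials in $\beta$, it is an entire function of $\beta$. The domain $\{\beta\in\mathbb{C} : \mathrm{Re}(\beta) > 0\}$ is open and simply connected, so if I can show that $D_\beta(H)$ never vanishes on this half-plane, then a single-valued analytic branch of $\log(D_\beta(H))$ is automatically defined and analytic there.

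To show non-vanishing, I would isolate the ground-state contribution by writing
\begin{equation*}
    D_\beta(H) \;=\; e^{-\beta E_0}\!\left(p_0 + \sum_{j\geq 1} p_j\, e^{-\beta(E_j - E_0)}\right),
\end{equation*}
so it suffices to bound the bracketed factor away from zero. Writing $\beta = a + ib$ with $a = \mathrm{Re}(\beta) > 0$, the pointwise bound $|e^{-\beta(E_j-E_0)}| = e^{-a(E_j-E_0)} \leq e^{-a\Delta} < 1$ holds uniformly for $j \geq 1$ by the spectral-gap assumption $E_j - E_0 \geq \Delta > 0$ from Definition~\ref{def:gsee}. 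Combining this with the triangle inequality and $\sum_{j\geq 1} p_j = 1 - p_0$, one obtains
\begin{equation*}
    \left|\sum_{j\geq 1} p_j\, e^{-\beta(E_j-E_0)}\right| \;\leq\; e^{-a\Delta}(1-p_0).
\end{equation*}

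The hypothesis $|\langle \psi_I|\psi_0\rangle| \geq 1/\sqrt{2}$ translates into $p_0 \geq 1/2$, so $1 - p_0 \leq p_0$, and a reverse triangle inequality gives
\begin{equation*}
    \left|p_0 + \sum_{j\geq 1} p_j\, e^{-\beta(E_j-E_0)}\right| \;\geq\; p_0 - e^{-a\Delta}(1-p_0) \;\geq\; p_0\!\left(1 - e^{-a\Delta}\right) \;>\; 0
\end{equation*}
for every $\beta$ with $a > 0$. Since $|e^{-\beta E_0}| = e^{-aE_0} \neq 0$ as well, this yields $D_\beta(H) \neq 0$ on the entire right half-plane, and the analyticity of the logarithm follows.

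Honestly, this argument has no serious obstacle; the only subtlety is the role of the threshold $\gamma = 1/\sqrt{2}$. It is this value precisely, and no smaller, that guarantees $p_0 \geq 1 - p_0$, which in turn allows the ground-state amplitude to strictly dominate any phase-aligned conspiracy of excited-state contributions. For any $p_0 < 1/2$ the complex exponentials in the tail could in principle conspire so that their sum cancels $p_0$ exactly, allowing a zero of $D_\beta(H)$ at some finite $\beta$ with $\mathrm{Re}(\beta) > 0$, which is exactly the Lee--Yang/Fisher zero phenomenon cited earlier in the text. Thus the only work in the proof is to display the dominance bound cleanly and to invoke simple connectedness of the half-plane to promote the local existence of $\log$ to a globally single-valued analytic function.
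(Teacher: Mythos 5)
Your proof is correct, and it follows the same overall strategy as the paper's own argument (Theorem~\ref{thm:zero-free}): factor out the ground-state term, modulo a sign typo in the paper's prefactor $e^{\beta E_0}$, and use $p_0 \geq 1-p_0$ to rule out zeros of $S_\beta(H) = p_0 + \sum_{j\geq 1} p_j e^{-\beta(E_j-E_0)}$. However, the key step is executed quite differently, and your version is the more rigorous one. The paper argues that the ``worst case'' for cancellation is when all excited weight concentrates on the first excited state, reduces to the two-level function $p_0 + (1-p_0)e^{-\beta\Delta}$, solves explicitly for its zeros $\beta = \Delta^{-1}\left(\pm i\pi + \ln\frac{1-p_0}{p_0}\right)$, and notes these have non-positive real part when $p_0\geq 1/2$. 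That reduction is phrased as the pointwise inequality $|S_\beta(H)| \geq |p_0 + (1-p_0)e^{-\beta\Delta}|$, which taken literally is false in general (for real $\beta>0$ it reverses, since $e^{-\beta(E_j-E_0)}\leq e^{-\beta\Delta}$ termwise); it is really an extremality heuristic about where zeros can appear. Your direct estimate, $\left|\sum_{j\geq 1}p_j e^{-\beta(E_j-E_0)}\right| \leq e^{-a\Delta}(1-p_0)\leq e^{-a\Delta}p_0 < p_0$ for $a=\mathrm{Re}(\beta)>0$, sidesteps this issue entirely and yields a quantitative zero-freeness bound $|D_\beta(H)|\geq e^{-aE_0}p_0\left(1-e^{-a\Delta}\right)$. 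You also make explicit the complex-analytic bookkeeping the paper leaves implicit: $D_\beta(H)$ is entire and the open right half-plane is simply connected, so zero-freeness promotes a local logarithm to a single-valued analytic branch. What the paper's route buys in exchange is the explicit zero location, used in its follow-up remark that for $p_0=1/2$ the nearest zeros sit at $\pm i\pi/\Delta$, matching the asymptotic optimality discussion of Ref.~\cite{mann2024algorithmic}; your closing observation about phase-aligned cancellation for $p_0<1/2$ recovers the same sharpness statement qualitatively, since the two-level zero then moves into the open right half-plane.
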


\noindent The intuition behind this result is that when the ground-state component in the guiding state dominates (i.e., $\gamma^2 \geq 0.5$), the excited-state contributions diminish under the action of the imaginary-time evolution (ITE) operator, provided $\rm{Re}(\beta) > 0$. Consequently, the ground-state contribution remains unaffected by cancellations from other components, ensuring a non-vanishing partition function. We leave proof details to Theorem~\ref{thm:zero-free} in the Appendix.

The analytic region established in Lemma~\autoref{thm:analytic_region} enables the extension to arbitrary constant imaginary evolution times $\beta \in \mathbb{R} = \mathcal{O}(1)$. This approach aligns with techniques in Refs.~\cite{wild2023classical,wu2024efficient}, where analytic continuation is employed to estimate observable expectation values for constant-time quantum dynamics. Specifically, the analytic continuation is performed via the map $\beta \mapsto \beta \phi(z)$, where $\phi(z)$ transforms the disk into an elongated region encompassing the target value of $\beta$, as illustrated in Fig.~\ref{fig:main} (d). The complex function $\phi(z)$ satisfies (i) $\phi(0)=0$, (ii) $\phi(1)=1$ and (iii) $\phi(z)$ is analytic on a disk $\abs{z}\leq \nu$ with $\nu=\mathcal{O}(1)$. We provide elaborate details in the Appendix~\ref{sec:analytic_continuation}.
We assume that the semi-classical guiding state is prepared using a constant-depth quantum circuit, $\ket{\psi_c} = U\ket{0^n}$, denote the similarity-transformed Hamiltonian as $H^\prime = U^\dagger H U$, and define the complex function $f(z)=\log\left(D_{\beta\phi(z)}(H-x)\right)$. Thus, the objective becomes:
\begin{align}\label{eq:Target}
    f(1)=\log\left(D_{\beta\phi(1)}(H-x)\right)=\log\left(\langle0^n|e^{-\beta\phi(1)(H^{\prime}-x)}|0^n\rangle\right).
\end{align}
The strategy is then to (i) compute the $H^\prime=U^\dagger HU$ with the maximal degree of the corresponding interaction graph increases to $\mathfrak{d}^\prime$, (ii) approximate $f(1)$ using the cluster expansion derived by the complex Taylor approximation method [Lemma 5, Ref.~\cite{wild2023classical}]. This gives the following result:
\begin{theorem}\label{thm:deq_rqite_2}
Let the semi-classical guided state $\ket{\psi_c}=U\ket{0^n}$ that is prepared by a constant-depth quantum circuit $U$ and $\gamma=1/\sqrt{2}$. Suppose $H$ representing a $k$-local Hamiltonian defined on a $\mathcal{O}(1)$-dimensional lattice. Let the similarity-transformed Hamiltonian $H^\prime=U^\dagger HU$ and the maximum degree of its corresponding interaction graph be denoted as $\mathfrak{d}^\prime$. Then, if the condition in Eq.~\eqref{eq:gap_assumption} holds, there exists a classical algorithm that solves the GSEE problem with a run time
\begin{align}\label{eq:final_cost}
        \left[\frac{e^{2\pi\beta/\beta^*}}{\beta\varepsilon^2}{\rm{poly}}(\abs{S})\right]^{ e^{2\pi\beta/\beta^*}},
    \end{align}
where $\beta=\Delta^{-1}\ln(\gamma^{-2}\varepsilon^{-1})$, and $\beta^*=(2e^2\mathfrak{d}^\prime(\mathfrak{d}^\prime+1))^{-1}$.
\end{theorem}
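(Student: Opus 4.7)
The plan is to follow the roadmap sketched in the paragraph preceding the statement: reduce the GSEE task to classical estimation of $\log D_\beta(H-x)$ via the RQITE framework of Theorem~\ref{thm:rqite}, and then estimate this quantity by combining the cluster expansion with analytic continuation along a carefully chosen map $\phi(z)$. The key new ingredient relative to Theorem~\ref{thm:deq_rqite} is Lemma~\ref{thm:analytic_region}, which supplies a large zero-free region for $D_\beta(H-x)$ and so lets us analytically continue $\log D_{\beta\phi(z)}(H-x)$ from the small-$\beta$ regime, where the cluster expansion converges, all the way to the target value $\beta\phi(1)=\beta=\mathcal{O}(1)$.

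First I would fold the guiding-state preparation into the Hamiltonian by writing $D_\beta(H-x)=\bra{0^n}e^{-\beta(H^\prime-x)}\ket{0^n}$ with $H^\prime=U^\dagger HU$. Because $U$ has constant depth on an $\mathcal{O}(1)$-dimensional lattice, a Lieb-Robinson-type light-cone argument shows that $H^\prime$ remains $\mathcal{O}(1)$-local with bounded interaction-graph degree $\mathfrak{d}^\prime$, so $\beta^*=(2e^2\mathfrak{d}^\prime(\mathfrak{d}^\prime+1))^{-1}=\Omega(1)$. Next I would set $f(z)=\log\bra{0^n}e^{-\beta\phi(z)(H^\prime-x)}\ket{0^n}$ and pick $\phi$ satisfying $\phi(0)=0$, $\phi(1)=1$, analytic on $\{|z|\leq\nu\}$ for some $\nu>1$, with $\mathrm{Re}(\phi(z))>0$ on the whole disk and $|\beta\phi(z)|<\beta^*$ on a small inner sub-disk around the origin. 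By Lemma~\ref{thm:analytic_region}, $f$ is then analytic on $\{|z|\leq\nu\}$, and on the inner sub-disk its Taylor expansion at $z=0$ coincides termwise with the differentiated cluster expansion.

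Second, I would approximate $f(1)$ by the truncated Taylor polynomial $\tilde f(1)=\sum_{m=0}^{M}f^{(m)}(0)/m!$, using the standard bound for analytic functions on a disk (Lemma~5 of Ref.~\cite{wild2023classical}) to conclude that the truncation error decays like $\nu^{-M}$; hence $M=\mathcal{O}(\log(1/\delta)/\log\nu)$ suffices to achieve additive error $\delta=\Theta(\gamma^2\varepsilon)$ per evaluation of $\log D_\beta$. Each derivative $f^{(m)}(0)$ is then computed from the finite set of connected clusters of size at most $m$ in the interaction graph of $H^\prime$, whose enumeration and evaluation take time $(|S|\mathfrak{d}^\prime)^{\mathcal{O}(m)}$. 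Plugging the resulting estimator of $D_\beta(H-x)$ into the RQITE sweep of Theorem~\ref{thm:rqite}, which queries $\mathcal{O}(1/\varepsilon)$ values of $x$ and tolerates $\mathcal{O}(\gamma^2\varepsilon)$ accuracy per query, gives the total runtime.

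The main obstacle is the construction and quantitative analysis of $\phi$, which controls the entire blowup: one needs $\phi$ analytic on a disk of radius $\nu>1$, mapping it into the intersection of the Lemma~\ref{thm:analytic_region} zero-free sector $\{\mathrm{Re}(w)>0\}$ and the cluster-expansion disk $\{|w|<\beta^*/\beta\}$ near the origin, while still reaching $\phi(1)=1$. A standard conformal-mapping/Koebe distortion estimate gives $\log\nu\sim e^{-2\pi\beta/\beta^*}$, which is exactly what turns the polynomial truncation order into the double-exponential $M\sim e^{2\pi\beta/\beta^*}\log(1/\varepsilon)$ and, after substituting $\beta=\Delta^{-1}\log(\gamma^{-2}\varepsilon^{-1})$, reproduces the runtime in Eq.~\eqref{eq:final_cost}. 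A secondary technical subtlety is verifying that the Taylor coefficients extracted from the cluster expansion on the inner sub-disk indeed equal those of the globally continued analytic branch of $f$; this follows from uniqueness of analytic continuation once the two expressions are shown to agree on an open neighborhood of $z=0$.
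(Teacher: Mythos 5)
Your proposal follows essentially the same route as the paper: fold the constant-depth preparation circuit into the Hamiltonian via $H^\prime=U^\dagger HU$, invoke the zero-freeness of Lemma~\ref{thm:analytic_region}, analytically continue $f(z)=\log D_{\beta\phi(z)}(H^\prime-x)$ along a map $\phi$ with $\phi(0)=0$, $\phi(1)=1$, extract Taylor coefficients at $z=0$ from the cluster expansion, and feed the estimates into the RQITE sweep of Theorem~\ref{thm:rqite}. However, there is a genuine gap at the quantitative core of the argument. The complex Taylor truncation bound you invoke (Lemma~5 of Ref.~\cite{wild2023classical}, Eq.~\eqref{ineq:complextaylor} in the paper) does \emph{not} say the error decays like $\nu^{-M}$; it says the error is at most $\frac{\alpha^{M+1}}{1-\alpha}\max_{z\in D_{\alpha\nu}}\abs{f(z)}$. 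Zero-freeness guarantees that $f$ is analytic on the disk, but it gives no upper bound on $\abs{f}=\abs{\log D_{\beta\phi(z)}}$ there: the partition function could a priori become doubly exponentially small somewhere on the disk, making $\max\abs{f}$ exponential in $n$ and forcing $M$ (hence the runtime $\exp(M)$) to blow up. Your proposal never bounds $\max\abs{f}$, and this is exactly where the bulk of the paper's proof lives (Eqs.~\eqref{eq:op_norm0}--\eqref{ineqmax}): the evolution is split as $e^{-\mathrm{Re}[\beta\phi(z)]H^\prime}e^{-i\,\mathrm{Im}[\beta\phi(z)]H^\prime}$, the real (imaginary-time) part is controlled using the shift $x\in[E_a,E_0]$ (which is why the paper proves $c^{1/2}\leq 1$ in Theorem~\ref{thm:partition_analytic_continuation}), and the unitary part is bounded through a \emph{convergent} cluster expansion, which requires $\abs{\mathrm{Im}[\beta\phi(z)]}<\beta^*$ on the \emph{entire} disk, not just near the origin.

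This exposes a second, related flaw in your constraint set for $\phi$: you require $\abs{\beta\phi(z)}<\beta^*$ only on a small inner sub-disk (to identify Taylor coefficients with cluster expansions) together with $\mathrm{Re}(\phi(z))>0$ globally. The paper instead needs the global strip condition $-\beta^*<\mathrm{Im}[\beta\phi(z)]<\beta^*$ \emph{and} $\mathrm{Re}[\beta\phi(z)]>0$ everywhere on the disk (the "analytic domain" displayed at the end of Appendix~\ref{sec:analytic_continuation}); the bounded imaginary part is needed not for analyticity but for the magnitude bound on $f$, and it is what dictates the explicit choice $\phi(z)\propto\log(1-z/\nu^\prime)$ with $w=0.5\,\beta^*/\beta$. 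Your conformal-mapping heuristic $\log\nu\sim e^{-2\pi\beta/\beta^*}$ and the final bookkeeping are otherwise consistent with the paper (modulo a dropped factor of $\beta$: the RQITE termination threshold requires accuracy $\Theta(\gamma^2\beta\varepsilon)$ per query, not $\Theta(\gamma^2\varepsilon)$, and the log-to-linear error conversion for $D_\beta$ must also be spelled out). Repairing the proposal therefore amounts to imposing the strip constraint on $\phi$ and supplying the missing estimate $\max_z\abs{f(z)}\leq\abs{\mathrm{Re}(\beta\phi(z))E_0}+\abs{S}\frac{2e^2\mathfrak{d}^\prime(\mathfrak{d}^\prime+1)w\beta}{1-2e^2\mathfrak{d}^\prime(\mathfrak{d}^\prime+1)w\beta}+\mathrm{poly}(n)$, with the $E_0$ term removed by the $x$-shift argument.
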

\noindent The cluster expansion with the analytic continuation gives us a classical algorithm that solves the GSEE problem that scales doubly exponentially with the inverse gap and exponentially with the accuracy. 
For normalized Hamiltonian $\tilde H = H/\|H\|$, we can also improve the accuracy to $\varepsilon = \mathcal O(1/\|H\|)$.
\begin{corollary}\label{coro:2}
    The dequantization algorithm in Theorem~\ref{thm:deq_rqite_2} remains efficient for a normalized Hamiltonian $\tilde{H} = H / \|H\|$ with an accuracy $\varepsilon = \mathcal O(1/\|H\|)$.
\end{corollary}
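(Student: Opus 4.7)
The plan is to instantiate Theorem~\ref{thm:deq_rqite_2} with $\tilde{H}=H/\|H\|$ in place of $H$ and re-examine how the resulting complexity depends on the rescaled parameters. All structural hypotheses of Theorem~\ref{thm:deq_rqite_2}---$k$-locality, the $\mathcal{O}(1)$-dimensional lattice geometry, the $\gamma=1/\sqrt{2}$ overlap condition, and the constant-depth guiding-state preparation $\ket{\psi_c}=U\ket{0^n}$---are preserved verbatim under normalization, since they depend only on the supports of the $h_X$ and on the guiding state, not on the magnitudes of the couplings. In particular, the similarity-transformed Hamiltonian $U^\dagger \tilde{H} U = H'/\|H\|$ has the same interaction graph as $H'$, so the maximum degree remains $\mathfrak{d}'$, and the zero-free region supplied by Lemma~\ref{thm:analytic_region} is scale invariant and still applies to $\tilde{H}$.

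The first step is to track how the parameters in Theorem~\ref{thm:deq_rqite_2} transform. Eigenvalues rescale to $\tilde{E}_j=E_j/\|H\|$, giving $\tilde{\Delta}=\Delta/\|H\|$, while the local coefficients obey $|\tilde{\lambda}_X|\leq 1/\|H\|$. The crucial observation is that the cluster-expansion radius $\beta^*$ appearing in the proof of Theorem~\ref{thm:deq_rqite_2} depends on $\beta$ only through the product $\beta\cdot\max_X|\lambda_X|$: each connected cluster $\bm{W}$ contributes a monomial bounded by $(\beta\,\max_X|\lambda_X|)^{|\bm{W}|}$ times a purely combinatorial factor controlled by $\mathfrak{d}'$, so the convergence condition is $\beta\cdot\max_X|\lambda_X|<(2e^2\mathfrak{d}'(\mathfrak{d}'+1))^{-1}$. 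Substituting the rescaled couplings therefore enlarges the convergence radius to
\begin{equation*}
\tilde{\beta}^* \;=\; \|H\|\,\beta^* \;=\; \frac{\|H\|}{2e^2\,\mathfrak{d}'(\mathfrak{d}'+1)},
\end{equation*}
while the parameter choice prescribed by Theorem~\ref{thm:deq_rqite_2} becomes $\beta=\tilde{\Delta}^{-1}\ln(\gamma^{-2}\varepsilon^{-1})=(\|H\|/\Delta)\ln(\gamma^{-2}\varepsilon^{-1})$.

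The second step is to control the ratio $\beta/\tilde{\beta}^*$ that governs the doubly-exponential factor in Eq.~\eqref{eq:final_cost}. A direct substitution yields
\begin{equation*}
\frac{\beta}{\tilde{\beta}^*} \;=\; \frac{2e^2\,\mathfrak{d}'(\mathfrak{d}'+1)}{\Delta}\,\ln(\gamma^{-2}\varepsilon^{-1}).
\end{equation*}
Applying the gap--accuracy assumption Eq.~\eqref{eq:gap_assumption} to the normalized problem gives $\ln(\gamma^{-2}\varepsilon^{-1})\leq \tilde{\Delta}/\varepsilon=\Delta/(\|H\|\varepsilon)$, so that $\beta/\tilde{\beta}^*\leq \varepsilon^*/(\|H\|\varepsilon)$ with $\varepsilon^*=2e^2\mathfrak{d}'(\mathfrak{d}'+1)$. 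As long as $\varepsilon=\Omega(1/\|H\|)$, this ratio stays at $\mathcal{O}(1)$, which collapses $\exp(2\pi\beta/\tilde{\beta}^*)$ to a constant and makes the runtime in Eq.~\eqref{eq:final_cost} polynomial in $|S|$ and $\varepsilon^{-1}$---exactly the efficiency claim of the corollary.

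The main technical obstacle is to justify the replacement $\beta^*\to\|H\|\beta^*$ rigorously inside the proof of Theorem~\ref{thm:deq_rqite_2}. It requires revisiting the cluster-expansion bounds together with the complex Taylor/analytic-continuation argument (along the lines of Lemma~5 of Ref.~\cite{wild2023classical}) and verifying that the combinatorial estimate on each cluster truly factorizes as $(\beta\max_X|\lambda_X|)^{|\bm{W}|}$ times a $\lambda$-independent combinatorial term. Once this factorization is in hand, the remainder of Theorem~\ref{thm:deq_rqite_2}'s argument---the complex Taylor expansion, the domain enlargement $\beta\mapsto\beta\phi(z)$, and the truncation error analysis---transfers unchanged, because $\beta\phi(z)$ remains strictly inside the enlarged zero-free disk $|\beta|<\|H\|\beta^*$ whenever $\beta/\tilde{\beta}^*=\mathcal{O}(1)$, in direct parallel to the derivation of Corollary~\ref{coro:1}.
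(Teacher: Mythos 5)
Your proposal is correct, and it reaches the same conclusion the paper intends, but by a more internal route: the paper states Corollary~\ref{coro:2} without proof, treating it as an immediate consequence of Theorem~\ref{thm:deq_rqite_2} under rescaling — estimating the ground-state energy of $\tilde{H}=H/\|H\|$ to accuracy $\varepsilon=\mathcal{O}(1/\|H\|)$ is, after multiplying all energies back by $\|H\|$, literally the same problem as estimating the ground-state energy of $H$ to constant accuracy $\varepsilon\|H\|=\mathcal{O}(1)$, which Theorem~\ref{thm:deq_rqite_2} already solves efficiently (the same unit-conversion logic underlies Corollary~\ref{coro:1}, and is the reverse of the normalization trick the paper attributes to Ref.~\cite{gharibian2022dequantizing}). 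You instead re-instantiate the whole machinery of Theorem~\ref{thm:deq_rqite_2} on $\tilde{H}$ and track the enlarged cluster-expansion radius $\tilde{\beta}^*=\|H\|\beta^*$ coming from the shrunken couplings $|\tilde{\lambda}_X|\leq 1/\|H\|$. Your key technical claim is genuine and does hold: in Lemma~\ref{lemma:error1} the monomial $\bm{\lambda}^{\bm W}$ is identical for every partition $P$ of $\bm{W}$ (multiplicities add across a partition), so it factors out of $\sum_{P\in\mathcal{P}_c(\bm W)}C(P)\prod_{\bm V\in P}(\bm{V}!)^{-1}\mathcal{D}_{\bm V}$, and the bound improves to $\left[2e(\mathfrak{d}^\prime+1)\abs{\beta}\max_X|\tilde{\lambda}_X|\right]^m$; your remaining bookkeeping ($\beta=\tilde{\Delta}^{-1}\ln(\gamma^{-2}\varepsilon^{-1})$, $\beta/\tilde{\beta}^*\leq\varepsilon^*/(\|H\|\varepsilon)=\mathcal{O}(1)$, and scale invariance of the zero-free half-plane ${\rm Re}(\beta)>0$ from Lemma~\ref{thm:analytic_region}) is sound. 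However, the step you flag as the "main technical obstacle" can be bypassed entirely: the identity $D_\beta(\tilde{H}-x)=D_{\beta/\|H\|}(H-x\|H\|)$ shows that the cluster expansion for $\tilde{H}$ at inverse temperature $\beta$ coincides term-by-term with the expansion for $H$ at $\beta/\|H\|$, so no re-derivation of the combinatorial bounds is needed. In short, the rescaling viewpoint buys a one-line proof; your route buys an explicit confirmation that all hypotheses of Theorem~\ref{thm:deq_rqite_2} (locality, $\gamma=1/\sqrt{2}$, constant-depth preparation, and the gap condition Eq.~\eqref{eq:gap_assumption} imposed on the rescaled instance) survive normalization, at the cost of having to reopen the proof of the cluster bound.
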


\vspace{0.2cm}
\noindent\textbf{\emph{Discussion \& Conclusion.---}}
This work introduces an efficient dequantization algorithm for solving the guided local Hamiltonian problem. By applying cluster expansion to a randomized quantum imaginary-time evolution algorithm, we solve the ground-state energy estimation problem with limited constant accuracy. Furthermore, leveraging analytic continuation enables us to achieve arbitrary constant accuracy, provided the guiding state has a sufficiently large overlap with the ground state.
Unlike prior approaches, our results eliminate the artificial and unrealistic operator norm constraint $\|H\| \leq 1$, making them applicable to practical problems where $\|H\| = \poly(n)$. Since GLH is generally BQP-complete, this work refines the boundary between quantum and classical computational complexity, contributing to the fundamental understanding of quantum advantages.

Despite these advancements, several open questions remain. First, our algorithm achieves general constant accuracy only when the overlap $\gamma$ is large. A natural question is whether dequantizing the GSEE algorithm is possible for physical Hamiltonians with overlap $\gamma = \Omega(1/\poly(n))$, or whether classical computation faces intrinsic limitations in such cases. A promising direction might involve tensor network states~\cite{perez2007peps,anshu2024circuit,malz2024computational,wei2022sequential}, which lie at the interface of quantum and classical computing. Notably, injective PEPS (projected entangled pair states)~\cite{perez2007peps} with unique parent Hamiltonians have been shown to be quantum-efficient~\cite{schwarz2012preparing,ge2016rapid} but classically hard~\cite{anshu2024circuit} in specific settings. Further exploration of their computational complexity could offer insights and inspire new approaches to studying quantum systems and their ground states.
Another key challenge lies in the preparation of the guiding state. Our algorithm relies on the assumption of a classically accessible guiding state, which facilitates efficient classical dequantization under specific conditions. However, if the guiding state is accessible only through a quantum process, the proposed classical algorithms remain inapplicable. While numerical evidence suggests that obtaining a high-quality quantum guiding state efficiently is challenging~\cite{Lee2023NC}, this is a critical avenue for future research.

\begin{acknowledgments}
\noindent We thank Jue Xu and Runsheng Ouyang for the helpful discussion on related topics. This work is supported by the Innovation Program for Quantum Science and Technology (Grant No.~2023ZD0300200), the National Natural Science Foundation of China Grant (No.~12175003 and No.~12361161602),  NSAF (Grant No.~U2330201). 
\end{acknowledgments}

\bibliography{main}

\begin{thebibliography}{10}

\bibitem{wilson1983superconducting}
Martin~N Wilson.
\newblock Superconducting magnets.
\newblock 1983.

\bibitem{wheatley1975experimental}
John~C Wheatley.
\newblock Experimental properties of superfluid he 3.
\newblock {\em Reviews of modern physics}, 47(2):415, 1975.

\bibitem{wen1995topological}
Xiao-Gang Wen.
\newblock Topological orders and edge excitations in fractional quantum hall states.
\newblock {\em Advances in Physics}, 44(5):405--473, 1995.

\bibitem{kane2005z}
Charles~L Kane and Eugene~J Mele.
\newblock Z 2 topological order and the quantum spin hall effect.
\newblock {\em Physical review letters}, 95(14):146802, 2005.

\bibitem{kovchegov2013quantum}
Yuri~V Kovchegov and Eugene Levin.
\newblock {\em Quantum chromodynamics at high energy}.
\newblock Cambridge University Press, 2013.

\bibitem{levine2009quantum}
Ira~N Levine, Daryle~H Busch, and Harrison Shull.
\newblock {\em Quantum chemistry}, volume~6.
\newblock Pearson Prentice Hall Upper Saddle River, NJ, 2009.

\bibitem{cao2019quantum}
Yudong Cao, Jonathan Romero, Jonathan~P Olson, Matthias Degroote, Peter~D Johnson, M{\'a}ria Kieferov{\'a}, Ian~D Kivlichan, Tim Menke, Borja Peropadre, Nicolas~PD Sawaya, et~al.
\newblock Quantum chemistry in the age of quantum computing.
\newblock {\em Chemical reviews}, 119(19):10856--10915, 2019.

\bibitem{kempe2006complexity}
Julia Kempe, Alexei Kitaev, and Oded Regev.
\newblock The complexity of the local hamiltonian problem.
\newblock {\em Siam journal on computing}, 35(5):1070--1097, 2006.

\bibitem{gharibian2015quantum}
Sevag Gharibian, Yichen Huang, Zeph Landau, Seung~Woo Shin, et~al.
\newblock Quantum hamiltonian complexity.
\newblock {\em Foundations and Trends{\textregistered} in Theoretical Computer Science}, 10(3):159--282, 2015.

\bibitem{kitaev2002classical}
Alexei~Yu Kitaev, Alexander Shen, and Mikhail~N Vyalyi.
\newblock {\em Classical and quantum computation}.
\newblock Number~47. American Mathematical Soc., 2002.

\bibitem{childs2014bose}
Andrew~M Childs, David Gosset, and Zak Webb.
\newblock The bose-hubbard model is qma-complete.
\newblock In {\em Automata, Languages, and Programming: 41st International Colloquium, ICALP 2014, Copenhagen, Denmark, July 8-11, 2014, Proceedings, Part I 41}, pages 308--319. Springer, 2014.

\bibitem{o2021electronic}
Bryan O'Gorman, Sandy Irani, James Whitfield, and Bill Fefferman.
\newblock Electronic structure in a fixed basis is qma-complete.
\newblock {\em arXiv preprint arXiv:2103.08215}, 2021.

\bibitem{aharonov2002quantum}
Dorit Aharonov and Tomer Naveh.
\newblock Quantum np-a survey.
\newblock {\em arXiv preprint quant-ph/0210077}, 2002.

\bibitem{karp2010reducibility}
Richard~M Karp.
\newblock {\em Reducibility among combinatorial problems}.
\newblock Springer, 2010.

\bibitem{aharonov2013guest}
Dorit Aharonov, Itai Arad, and Thomas Vidick.
\newblock Guest column: the quantum pcp conjecture.
\newblock {\em Acm sigact news}, 44(2):47--79, 2013.

\bibitem{lin2020near}
Lin Lin and Yu~Tong.
\newblock Near-optimal ground state preparation.
\newblock {\em Quantum}, 4:372, 2020.

\bibitem{dong2022ground}
Yulong Dong, Lin Lin, and Yu~Tong.
\newblock Ground-state preparation and energy estimation on early fault-tolerant quantum computers via quantum eigenvalue transformation of unitary matrices.
\newblock {\em PRX Quantum}, 3(4):040305, 2022.

\bibitem{lin2022heisenberg}
Lin Lin and Yu~Tong.
\newblock Heisenberg-limited ground-state energy estimation for early fault-tolerant quantum computers.
\newblock {\em PRX Quantum}, 3(1):010318, 2022.

\bibitem{wan2022randomized}
Kianna Wan, Mario Berta, and Earl~T Campbell.
\newblock Randomized quantum algorithm for statistical phase estimation.
\newblock {\em Physical Review Letters}, 129(3):030503, 2022.

\bibitem{wang2023quantum}
Guoming Wang, Daniel~Stilck Fran{\c{c}}a, Ruizhe Zhang, Shuchen Zhu, and Peter~D Johnson.
\newblock Quantum algorithm for ground state energy estimation using circuit depth with exponentially improved dependence on precision.
\newblock {\em Quantum}, 7:1167, 2023.

\bibitem{ding2023even}
Zhiyan Ding and Lin Lin.
\newblock Even shorter quantum circuit for phase estimation on early fault-tolerant quantum computers with applications to ground-state energy estimation.
\newblock {\em PRX Quantum}, 4(2):020331, 2023.

\bibitem{ni2023low}
Hongkang Ni, Haoya Li, and Lexing Ying.
\newblock On low-depth algorithms for quantum phase estimation.
\newblock {\em Quantum}, 7:1165, 2023.

\bibitem{gharibian2022dequantizing}
Sevag Gharibian and Fran{\c{c}}ois Le~Gall.
\newblock Dequantizing the quantum singular value transformation: hardness and applications to quantum chemistry and the quantum pcp conjecture.
\newblock In {\em Proceedings of the 54th Annual ACM SIGACT Symposium on Theory of Computing}, pages 19--32, 2022.

\bibitem{cade2022improved}
Chris Cade, Marten Folkertsma, Sevag Gharibian, Ryu Hayakawa, Fran{\c{c}}ois~Le Gall, Tomoyuki Morimae, and Jordi Weggemans.
\newblock Improved hardness results for the guided local hamiltonian problem.
\newblock {\em arXiv preprint arXiv:2207.10250}, 2022.

\bibitem{cade2022complexity}
Chris Cade, Marten Folkertsma, and Jordi Weggemans.
\newblock Complexity of the guided local hamiltonian problem: Improved parameters and extension to excited states.
\newblock {\em arXiv preprint arXiv:2207.10097}, 2022.

\bibitem{gall2024classical}
Fran{\c{c}}ois~Le Gall.
\newblock Classical algorithms for constant approximation of the ground state energy of local hamiltonians.
\newblock {\em arXiv preprint arXiv:2410.21833}, 2024.

\bibitem{gilyen2019quantum}
Andr{\'a}s Gily{\'e}n, Yuan Su, Guang~Hao Low, and Nathan Wiebe.
\newblock Quantum singular value transformation and beyond: exponential improvements for quantum matrix arithmetics.
\newblock In {\em Proceedings of the 51st Annual ACM SIGACT Symposium on Theory of Computing}, pages 193--204, 2019.

\bibitem{tang2022dequantizing}
Ewin Tang.
\newblock Dequantizing algorithms to understand quantum advantage in machine learning.
\newblock {\em Nature Reviews Physics}, 4(11):692--693, 2022.

\bibitem{wu2024efficient}
Yusen Wu, Yukun Zhang, and Xiao Yuan.
\newblock An efficient classical algorithm for simulating short time 2d quantum dynamics.
\newblock {\em arXiv preprint arXiv:2409.04161}, 2024.

\bibitem{wild2023classical}
Dominik~S Wild and {\'A}lvaro~M Alhambra.
\newblock Classical simulation of short-time quantum dynamics.
\newblock {\em PRX Quantum}, 4(2):020340, 2023.

\bibitem{yin2023polynomial}
Chao Yin and Andrew Lucas.
\newblock Polynomial-time classical sampling of high-temperature quantum gibbs states.
\newblock {\em arXiv preprint arXiv:2305.18514}, 2023.

\bibitem{mann2024algorithmic}
Ryan~L Mann and Romy~M Minko.
\newblock Algorithmic cluster expansions for quantum problems.
\newblock {\em PRX Quantum}, 5(1):010305, 2024.

\bibitem{bakshi2024high}
Ainesh Bakshi, Allen Liu, Ankur Moitra, and Ewin Tang.
\newblock High-temperature gibbs states are unentangled and efficiently preparable.
\newblock {\em arXiv preprint arXiv:2403.16850}, 2024.

\bibitem{sly2012computational}
Allan Sly and Nike Sun.
\newblock The computational hardness of counting in two-spin models on d-regular graphs.
\newblock In {\em 2012 IEEE 53rd Annual Symposium on Foundations of Computer Science}, pages 361--369. IEEE, 2012.

\bibitem{goldberg2017complexity}
Leslie~Ann Goldberg and Heng Guo.
\newblock The complexity of approximating complex-valued ising and tutte partition functions.
\newblock {\em computational complexity}, 26:765--833, 2017.

\bibitem{haah2024learning}
Jeongwan Haah, Robin Kothari, and Ewin Tang.
\newblock Learning quantum hamiltonians from high-temperature gibbs states and real-time evolutions.
\newblock {\em Nature Physics}, pages 1--5, 2024.

\bibitem{bravyi2022quantum}
Sergey Bravyi, Anirban Chowdhury, David Gosset, and Pawel Wocjan.
\newblock Quantum hamiltonian complexity in thermal equilibrium.
\newblock {\em Nature Physics}, 18(11):1367--1370, 2022.

\bibitem{grilo2015qma}
Alex~Bredariol Grilo, Iordanis Kerenidis, and Jamie Sikora.
\newblock Qma with subset state witnesses.
\newblock In {\em International Symposium on Mathematical Foundations of Computer Science}, pages 163--174. Springer, 2015.

\bibitem{lee1952statistical}
Tsung-Dao Lee and Chen-Ning Yang.
\newblock Statistical theory of equations of state and phase transitions. ii. lattice gas and ising model.
\newblock {\em Physical Review}, 87(3):410, 1952.

\bibitem{fisher1965nature}
M.E. Fisher.
\newblock {\em The Nature of Critical Points}.
\newblock University of Colorado Press, 1965.

\bibitem{barvinok2016combinatorics}
Alexander Barvinok.
\newblock {\em Combinatorics and complexity of partition functions}, volume~30.
\newblock Springer, 2016.

\bibitem{harrow2020classical}
Aram~W Harrow, Saeed Mehraban, and Mehdi Soleimanifar.
\newblock Classical algorithms, correlation decay, and complex zeros of partition functions of quantum many-body systems.
\newblock In {\em Proceedings of the 52nd Annual ACM SIGACT Symposium on Theory of Computing}, pages 378--386, 2020.

\bibitem{galanis2016inapproximability}
Andreas Galanis, Daniel {\v{S}}tefankovi{\v{c}}, and Eric Vigoda.
\newblock Inapproximability of the partition function for the antiferromagnetic ising and hard-core models.
\newblock {\em Combinatorics, Probability and Computing}, 25(4):500--559, 2016.

\bibitem{perez2007peps}
David Perez-Garcia, Frank Verstraete, J~Ignacio Cirac, and Michael~M Wolf.
\newblock Peps as unique ground states of local hamiltonians.
\newblock {\em arXiv preprint arXiv:0707.2260}, 2007.

\bibitem{anshu2024circuit}
Anurag Anshu, Nikolas~P Breuckmann, and Quynh~T Nguyen.
\newblock Circuit-to-hamiltonian from tensor networks and fault tolerance.
\newblock In {\em Proceedings of the 56th Annual ACM Symposium on Theory of Computing}, pages 585--595, 2024.

\bibitem{malz2024computational}
Daniel Malz and Rahul Trivedi.
\newblock Computational complexity of isometric tensor network states.
\newblock {\em arXiv preprint arXiv:2402.07975}, 2024.

\bibitem{wei2022sequential}
Zhi-Yuan Wei, Daniel Malz, and J~Ignacio Cirac.
\newblock Sequential generation of projected entangled-pair states.
\newblock {\em Physical Review Letters}, 128(1):010607, 2022.

\bibitem{schwarz2012preparing}
Martin Schwarz, Kristan Temme, and Frank Verstraete.
\newblock Preparing projected entangled pair states on a quantum computer.
\newblock {\em Physical review letters}, 108(11):110502, 2012.

\bibitem{ge2016rapid}
Yimin Ge, Andr{\'a}s Moln{\'a}r, and J~Ignacio Cirac.
\newblock Rapid adiabatic preparation of injective projected entangled pair states and gibbs states.
\newblock {\em Physical review letters}, 116(8):080503, 2016.

\bibitem{Lee2023NC}
Seunghoon Lee, Joonho Lee, Huanchen Zhai, Yu~Tong, Alexander~M. Dalzell, Ashutosh Kumar, Phillip Helms, Johnnie Gray, Zhi-Hao Cui, Wenyuan Liu, Michael Kastoryano, Ryan Babbush, John Preskill, David~R. Reichman, Earl~T. Campbell, Edward~F. Valeev, Lin Lin, and Garnet Kin-Lic Chan.
\newblock Evaluating the evidence for exponential quantum advantage in ground-state quantum chemistry.
\newblock {\em Nature Communications}, 14(1):1952, Apr 2023.

\bibitem{tang2019quantum}
Ewin Tang.
\newblock A quantum-inspired classical algorithm for recommendation systems.
\newblock In {\em Proceedings of the 51st annual ACM SIGACT symposium on theory of computing}, pages 217--228, 2019.

\bibitem{bravyi2021classical}
Sergey Bravyi, David Gosset, and Ramis Movassagh.
\newblock Classical algorithms for quantum mean values.
\newblock {\em Nature Physics}, 17(3):337--341, 2021.

\bibitem{giovannetti2004quantum}
Vittorio Giovannetti, Seth Lloyd, and Lorenzo Maccone.
\newblock Quantum-enhanced measurements: beating the standard quantum limit.
\newblock {\em Science}, 306(5700):1330--1336, 2004.

\bibitem{an2023linear}
Dong An, Jin-Peng Liu, and Lin Lin.
\newblock Linear combination of hamiltonian simulation for nonunitary dynamics with optimal state preparation cost.
\newblock {\em Physical Review Letters}, 131(15):150603, 2023.

\bibitem{huo2023error}
Mingxia Huo and Ying Li.
\newblock Error-resilient monte carlo quantum simulation of imaginary time.
\newblock {\em Quantum}, 7:916, 2023.

\bibitem{bjorklund2008fast}
Andreas Bj{\"o}rklund, Thore Husfeldt, Petteri Kaski, and Mikko Koivisto.
\newblock The fast intersection transform with applications to counting paths.
\newblock {\em arXiv preprint arXiv:0809.2489}, 2008.

\bibitem{ansari2005more}
Arsalan~Hojjat Ansari, Mohammad~Sal Moslehian, et~al.
\newblock More on reverse triangle inequality in inner product spaces.
\newblock {\em International journal of mathematics and mathematical sciences}, 2005:2883--2893, 2005.

\end{thebibliography}
\clearpage
\widetext
\appendix

\section{Comparison with existing results}
This section discusses the dequantization of algorithms that solve the ground-state energy estimation (GSEE) problem. We will use the definition and notation defined in the main text for the GSEE problem.

In recent years, rapid development in algorithm design has been witnessed in solving the GSEE problem. On the one hand, quantum algorithms~\cite{lin2020near,dong2022ground} with near-optimal query complexity performance have been found using the quantum singular value transformation (QSVT)~\cite{gilyen2019quantum} or the quantum eigenvalue transformation algorithms. The algorithm can be viewed as a way of performing polynomial transformations of the Hamiltonian so that one can approximate the target function with a polynomial function. Quantum algorithms tailored for early fault-tolerance quantum devices~\cite{lin2022heisenberg,wan2022randomized,wang2023quantum,ding2023even,ni2023low} thrive intending to reduce the query depth. On the other hand, theoretical understandings have depended. It is long known that the local Hamiltonian problem is QMA-complete~\cite{kitaev2002classical,kempe2006complexity}, indicating even quantum computers cannot efficiently solve the problem unless BQP=QMA. Yet, it is found~\cite{gharibian2022dequantizing} that when a guiding state $\ket{\phi_I}$ that has a non-trivial (i.e., at least polynomially small) overlap with the ground state is accessible, such problems become BQP-hard. It should be noted that a lower-bounded estimation of the overlap $\gamma$ (see the main text), which satisfies $|\langle \psi_I |\psi_0\rangle|\geq \gamma$, is assumed to be known. The first dequantization algorithm is proposed by Ref.~\cite{gharibian2022dequantizing} such that when the guiding state is classically accessible through the sample and query (SQ) model~\cite{tang2019quantum,tang2022dequantizing}, we can dequantize the quantum algorithm based on QSVT for solving the GSEE problem as long as the accuracy $\epsilon$ stays a constant. Unfortunately, the dequantization algorithms provided in Ref.~\cite{gharibian2022dequantizing} only work when the Hamiltonian satisfies $\|H\|\leq 1$, a condition violated by most physical Hamiltonians such that $\|H\|=\mathcal{O}(\text{poly}(n))$. To see why this is the case, we briefly review the dequantization idea and mildly adapt the original singular value transformation to the eigenvalue transformation that better suits our purpose. Suppose that we are given SQ access to a guiding state $u(x)=\sum_i u_i\ket{i}$. That is one can sample $i$ according to the probability $p_i:=|u_i|^2$ and query the values $u_i$ for arbitrary $i$. When the Hamiltonian $H$ is $s$-sparse, we can then efficiently estimate the following quantity within desirable accuracy $\bra{u} P(H) \ket{u}$, where $P(x)$ is a polynomial function of constant degree $d$. At each iteration, the dequantization algorithm works by first sampling a component $i$ from the guiding state and then computing the estimator $z_i:= \bra{i} P(H) \ket{u}/u_i.$ The key to the dequantization idea lies in the sparsity of the Hamiltonian. That is to compute $\bra{i} P(H) \ket{u}$, we only need to compute $s^d$ entries of $P(H)\ket{u}$, which takes $O(s^{2d})$ steps. The final estimation to $\bra{u} P(H) \ket{u}$ is given by the empirical mean of the estimator $\bar{z}:=\sum_i^M z_i$, where $M$ is the number of samples. Conceptually, the formulation can be seen as a classical evaluation method for the eigenvalue transformation of a sparse matrix (e.g., some Hamiltonian) when the transformation consists of polynomial functions of constant degree. For the GSEE problem, the shifted sign function along with a polynomial function of degree $\widetilde{\mathcal{O}}(\epsilon^{-1})$ to approximate the function is found in the QSVT algorithm to provide near-optimal performance~\cite{lin2020near}. For physical Hamiltonians, the truncation degree of the polynomial function that approximates the shifted sign function will grow up to $\widetilde{\mathcal{O}}(\|H\|\epsilon^{-1})$, rendering the dequantization scheme intractable.

Our dequantization algorithm on the other hand focuses on early fault-tolerant quantum algorithms such as the randomized Fourier estimation (RFE) methods~\cite{lin2022heisenberg,wan2022randomized,wang2023quantum}, which we will discuss in the next section. To this end, we first propose the randomized quantum imaginary-time evolution (RQITE) algorithm, where the primary task is to estimate the partition function. The form of the partition function (exponential of the Hamiltonian resulting from the imaginary-time evolution) then admits the approximation of cluster expansion~\cite{haah2024learning,wild2023classical,mann2024algorithmic,yin2023polynomial,bakshi2024high}, an approach that has been applied to study classical simulation of Hamiltonian dynamics and Gibbs state. We assume that the guiding state is a semi-classical state~\cite{grilo2015qma}, which is strictly weaker than the SQ model. This gives the result that for accuracy no lesser than $2e^2\mathfrak{d}(\mathfrak{d}+1)$, where $\mathfrak{d}$ is the degree of the Hamiltonian, we achieve classical algorithms with polynomial dependence on $\gamma$, $\mathfrak{d}$, $\epsilon$ and superpolynomial on the spectral gap $\Delta$. In general, the result cannot be improved to arbitrary constant accuracy due to the hardness of approximation results from the study of Gibbs states. Yet, we find that when the guiding state has a large enough (constant) overlap with the ground state, we are allowed to apply analytic continuation to accomplish arbitrary constant accuracy of the ground-state energy. The analytic continuation is first proposed in Ref.~\cite{wild2023classical} for realizing expectation value estimation of local observable for Hamiltonian evolution of arbitrary constant time. The result has later been improved~\cite{wu2024efficient} to estimate global observables for 2D quantum systems using techniques from shallow quantum circuit simulation~\cite{bravyi2021classical}. The cluster expansion in combination with analytic continuation gives us classical algorithms that have double exponential dependence on $\Delta$, $\mathfrak{d}$, and superpolynomial on $\epsilon$. Ref.~\cite{wu2024efficient} also discusses the application of dequantization of RFE algorithms. The performance of its classical algorithm is similar to this work. The key technique there is the utility of the ancillary-free Hadamard test. Yet, the construction is only available for 2D geometrically local Hamiltonians, whereas the results in this work can be applied to Hamiltonians with arbitrary geometry as long as the interaction is sparse. This makes our methods suitable for quantum systems ranging from quantum spin to many-body, and molecular systems.

Finally, we comment on the relationship between our dequantized algorithm with recent progress regarding the dequantization of algorithms for the guided local Hamiltonian (GLH) problems. In Ref.~\cite{gharibian2022dequantizing}, the authors proved that the GLH problem is BQP-hard for Hamiltonian $\|H\|\leq1$ by encoding a BQP machine (i.e.~quantum circuit) into the GLH problem using the Feynman-Kiteav clock construction~\cite{kitaev2002classical,kempe2006complexity}. That is our classical algorithms achieve $\mathcal{O}(1)$-accuracy estimation of Hamiltonian with $\|H\|=\mathcal{O}(\rm{poly}(n))$ may imply inverse-polynomial-accuracy of Hamiltonian with $\|H\|=\mathcal{O}(1)$. Here, we remark that the constraint on the operator norm of the Hamiltonian does not contradict our result. But it is rather more natural to think general physical Hamiltonian that $\|H\|=\mathcal{O}(\rm{poly}(n))$ as suggested by the Feynman-Kiteav clock construction~\cite{kitaev2002classical,kempe2006complexity}, whose hardness result is now only proven for inverse-polynomial accuracy. In fact, the author in Ref.~\cite{gharibian2022dequantizing} achieves such constraint by manually dividing the Hamiltonian by its operator norm.

\section{Recent developments of ground-state energy estimation problems}
\label{sec:gsee_dev}
Recent progress has witnessed great advancement in the GSEE problem, especially algorithms tailored for early-fault tolerant quantum devices. Here, we briefly review the recent developments of such algorithms in two aspects: random Fourier estimation (RFE)~\cite{lin2022heisenberg,wan2022randomized,wang2023quantum} and eigenvalue estimation based on classical signal processing techniques~\cite{ding2023even,ni2023low}.

We now provide a formal description of the GSEE problem and corresponding assumptions:
\begin{definition}[Assumptions of the initial state and energy gap, and the ground state energy estimation problem]
\label{def:gsee}
Given a $n$-qubit Hamiltonian $\hat H$, let $E_0<E_1\leq\cdots\leq E_{N-1}$, where $N=2^n$, be its eigenvalues with corresponding eigenstates $\ket{\psi_0},\ket{\psi_1},\cdots,\ket{\psi_{N-1}}$. Suppose a lower-bounded estimation $\Delta$ for the energy gap is given such that $E_1-E_0\geq \Delta$. Suppose we can also prepare an initial state $\ket{\psi_I}$ such that its overlap with the ground state satisfies $|\langle \psi_I| \psi_0\rangle|\geq \gamma$. The GSEE task is to obtain an estimate of $E_0^\prime$ for the ground-state energy such that $|E_0-E_0^\prime|\leq\epsilon$.
\end{definition}
These problems are known as local Hamiltonian (LH) problems in the study of Hamiltonian complexity~\cite{kitaev2002classical,kempe2006complexity,gharibian2015quantum}, where QMA-complete results are proven for both artificial~\cite{kitaev2002classical,kempe2006complexity} and physical quantum systems~\cite{childs2014bose,o2021electronic}. Yet, the near-optimal quantum algorithm for solving the GSEE problem gives the query complexity $\widetilde{\mathcal{O}}(\gamma^{-1}\epsilon^{-1})$, indicating that when a reasonable initial state (i.e.~has greater than exponentially small overlap with the ground state) is available, the problem becomes efficiently solvable. This point of view is theoretically verified by Ref.~\cite{gharibian2022dequantizing}, where the guided local Hamiltonian (GLH) problem is formalized to characterize LH problems with a given guiding (initial) state. The GLH is proven to be BQP-hard and a dequantization scheme is discussed as we mentioned in the last section.

While great progress has been made in solving the GSEE problem, we mainly focus on the development of methods based on the Hadamard test realization. To this end,  Ref.~\cite{lin2022heisenberg} gives exemplary results that achieve the Heisenberg-limited performance~\cite{giovannetti2004quantum}, which is known to be optimal for the dependence on the accuracy. We elaborate on their idea here with some small modifications to facilitate later discussion. First, let us define the spectrum function $P(x)$ of the initial state and the convolution function
\begin{equation}\label{eq:intital_state_spectrum}
\begin{aligned}
    P(x):=&\sum_{j=0}^{N-1} p_j\delta(x-E_j),\\
    C(x):=&(f*P)(x)=\sum_{j=0}^{N-1} p_j\cdot f(x-E_j),
\end{aligned}
\end{equation}
where $p_j=|\langle \psi_I|\psi_j\rangle |^2$, and $f(x)$ is a chosen function that we will specify later. The key idea of Ref.~\cite{lin2022heisenberg} is by analyzing the convolution function with a properly chosen function $f(x)$, one can pinpoint the ground-state energy with the knowledge based on the initial-state overlap, i.e.~$\gamma$ in Definition \ref{def:gsee}. Specifically, Ref.~\cite{lin2022heisenberg} applies a $2\pi$-periodic Heaviside function $(f(x):=0,\textrm{if}~x<0;1,\textrm{otherwise})$, which makes the $C(x)$ the cumulative distribution function (CDF): $C(x)=\sum_{j:E_j\leq x}p_j$. As such, one could binary search on a promised interval $[E_a,E_b]$ that contains $E_0$. The question then left to estimate the convolution function by a quantum algorithm. To this end, the convolution theorem is used: The Fourier transformation of a convolution function equals the multiplication of the Fourier transformations of the two functions involved in the convolution. As such, we have
\begin{equation}\label{eq:c_ft}
\begin{aligned}
    C(x)&=\mathcal{F}^{-1}\left(\hat{f}\cdot \hat{P}\right)(x)\\
    &=\int_{-\infty}^{\infty} \hat{f}(t) e^{i x t} \operatorname{Tr}\left(\rho_I e^{-i H t}\right) \mathrm{d} t\equiv \mathbb{E}_{t\sim\hat{f}(t)} [ e^{i x t}X(t)],
\end{aligned}
\end{equation}
where $\hat{f}(t)$ and $\hat{P}(t)$ are the Fourier-transformed function of $f(x)$ and $P(x)$ respectively, $\rho_I:=\ket{\psi_I}\bra{\psi_I}$, and $X(t)=\operatorname{Tr}\left(\rho_I e^{-i H t}\right)$. 
Henceforth, we may refer to $f(x)$ as the filter function.
For $\hat{f}(t)$ that measures non-negative over $\mathbb{R}$, one can view it as a probabilistic distribution over $t$ up to some normalization.
Otherwise, we may extract the phase and sample according to the modulus of the function and add the phase back to the samples.
Furthermore, a truncation of the integral in Eq.~\eqref{eq:c_ft} to $[-T,T]$ may be applied. Thus, in the last line of Eq.~\eqref{eq:c_ft}, we have viewed $X(t)$ as a random variable. The algorithm then follows by generating a set of $\{t\}$ from $\hat{f}(t)$ and estimating each $X(t)$ through the Hamdamard test circuit with the controlled unitary to be a Hamiltonian evolution operation. Subsequently, we estimate $C(x)$ for different $x$ by multiplying the corresponding phase given by Eq.~\eqref{eq:c_ft} and taking the empirical mean for approximating the expectation value. To detect the ground-state energy, we need to distinguish whether the CDF is close enough to zero or at least $\gamma^2$. Hence, $\gamma^2$-accuracy of estimating the CDF is required, yielding a $\widetilde{\mathcal{O}}(\gamma^{-4})$ sample complexity according to the Chernoff bound. The maximal evolution time is given by $t_{\text{max}}=T=\widetilde{\mathcal{O}}(\epsilon^{-1})$. This gives a total evolution time $t_{\text{total}}=T=\widetilde{\mathcal{O}}(\epsilon^{-1}\gamma^{-4})$. 

Significant progress has been made in developing the RFE methods with the central variants to be the choice of the function $f(x)$ in the convolution function. To this end, we resort to the exponential function. Interestingly, we show the connection between the choice of filter function and the imaginary-time evolution (ITE) methods. Under the assumption that energy gap $\Delta$ is greater than the accuracy, our methods can efficiently solve the GSEE problem. Perhaps the most relevant work to ours is Ref.~\cite{wang2023quantum}, where the Gaussian derivative function $-\frac{1}{\sqrt{2 \pi} \sigma^3} x e^{-\frac{x^2}{2 \sigma^2}}$ is considered. Applying such a function, the determination of the ground-state energy then becomes an evaluation of whether the convolution function goes to zero, which is similar to our construction, which will be discussed in the next section. As such, the maximal and total evolution time are $\mathcal{\widetilde{O}}(\Delta^{-1})$ and $\widetilde{\mathcal{O}}\left(\epsilon^{-2} \gamma^{-4} \Delta\right)$ for our and Ref.~\cite{wang2023quantum}.

Interestingly, when the initial state overlap is large enough (i.e.~beyond some constant threshold), one can further reduce the query depth: as the overlap becomes larger, the Hamiltonian evolution time in one coherent run becomes smaller. The first such algorithm is proposed by Ref.~\cite{ding2023even}, where the authors cast the eigenvalue estimation to an optimization problem from signals detected from the Hadmard test circuit. The overlap threshold is given by $\gamma^2=0.71$. This result is later improved~\cite{ni2023low} by means of robust phase estimation to $\gamma^2=4-2\sqrt{3}$.

\section{GSEE algorithm based on imaginary-time evolution}
\subsection{The randomized quantum imaginary-time evolution algorithm}
\label{sec:rqite}
In this section, we propose a nascent algorithm, dubbed randomized quantum imaginary-time evolution (RQITE), with the filter (projection) function in Eq.~\eqref{eq:c_ft} given by the exponential function.

Let us first define the filter operator $\Upsilon_\beta(x)$ as
\begin{equation}\label{eq:filter_function}
    \Upsilon_\beta(x):=e^{\beta x}.
\end{equation}
We also define the partition function w.r.t. the initial state as
\begin{equation}\label{eq:partition_function}
    D_\beta(H-x):=\bra{\psi_I} e^{-\beta (H-x)} \ket{\psi_I}.
\end{equation}
Then, we have the following theorem which states the equivalence between the partition function $D_\beta(x)$ and convolution function given by Eq.~\eqref{eq:c_ft} with $f(x)$ being substituted by the projection function $\Upsilon_\beta(x)$. This equivalence is key to our construction of the dequantization of the quantum algorithms. That is to evaluate the complicated convolution function, we can equivalently estimate the partition function, which is more amenable to the cluster expansion.
\begin{theorem}[Equivalence between $C(x)$ and $D_\beta(H-x)$]
\label{theorem:partition_convolution_equivalence}
Let $D_\beta(H-x)$ be defined by Eq.~\eqref{eq:partition_function}. Let also $C(x)$ be the convolution of the initial state spectrum with projection function defined by Eq.~\eqref{eq:filter_function}. Then, we have
\begin{equation}
    D_\beta(H-x)\equiv C(x).
\end{equation}
\end{theorem}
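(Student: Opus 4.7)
The plan is to prove the equivalence by direct spectral decomposition of the initial state in the eigenbasis of $H$, and then matching the resulting expression term-by-term with the convolution $C(x)$ defined via the spectral measure $P(x)=\sum_j p_j\delta(x-E_j)$ and the filter $\Upsilon_\beta(x)=e^{\beta x}$.

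First, I would expand $\ket{\psi_I}=\sum_{j=0}^{N-1} c_j\ket{\psi_j}$ with $c_j=\langle\psi_j|\psi_I\rangle$, so that $p_j=|c_j|^2=|\langle\psi_j|\psi_I\rangle|^2$. Since $H\ket{\psi_j}=E_j\ket{\psi_j}$, the exponential admits the spectral form $e^{-\beta(H-x)}=\sum_j e^{-\beta(E_j-x)}\ket{\psi_j}\bra{\psi_j}$ (the shift $x\mathds{1}$ commutes with $H$, which is the one line worth checking carefully). Substituting this into the definition of the partition function gives
\begin{equation*}
D_\beta(H-x)=\sum_{j=0}^{N-1} p_j\, e^{-\beta(E_j-x)}=\sum_{j=0}^{N-1} p_j\, e^{\beta(x-E_j)}=\sum_{j=0}^{N-1} p_j\,\Upsilon_\beta(x-E_j).
\end{equation*}

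Second, I would compute $C(x)$ directly from its definition. Using the sifting property of the Dirac delta,
\begin{equation*}
C(x)=(\Upsilon_\beta\ast P)(x)=\int_{-\infty}^{\infty}\Upsilon_\beta(x-\tau)\sum_{j=0}^{N-1} p_j\,\delta(\tau-E_j)\,d\tau=\sum_{j=0}^{N-1} p_j\,\Upsilon_\beta(x-E_j),
\end{equation*}
which coincides exactly with the expression obtained for $D_\beta(H-x)$, establishing the claim.

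There is no real obstacle here; the statement is essentially a restatement of spectral decomposition. The only point requiring care is a bookkeeping one, namely aligning the sign convention between the imaginary-time exponential $e^{-\beta(H-x)}$ and the forward filter $\Upsilon_\beta(y)=e^{\beta y}$ with argument $y=x-E_j$, so that the two representations yield identical weights $e^{\beta(x-E_j)}$. Once this is verified, the equivalence justifies replacing the Fourier-based estimation of $C(x)$ employed in prior RFE methods by the partition-function representation $D_\beta(H-x)$, which is exactly the handle exploited later via the cluster expansion for dequantization.
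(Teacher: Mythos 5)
Your proposal is correct and follows essentially the same route as the paper: expanding $D_\beta(H-x)$ in the eigenbasis of $H$ to obtain $\sum_j p_j e^{\beta(x-E_j)}=\sum_j p_j\Upsilon_\beta(x-E_j)$ and identifying this with $C(x)$. The only difference is that you explicitly verify the convolution side via the delta-function sifting property, a step the paper treats as immediate from the definition.
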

\begin{proof}
The equivalence between $C(x)$ and $D_\beta(H-x)$ can be shown by  Eq.~\eqref{eq:partition_function} in the eigenbasis of the Hamiltonian,
\begin{equation}\label{eq:convolution_partition_equiv}
\begin{aligned}
    D_\beta(H-x)&=\bra{\psi_I} e^{-\beta (H-x)} \ket{\psi_I}\\
    &=\sum_{j=0}^{N-1} |\langle\psi_I|\psi_j\rangle |^2 \cdot e^{\beta (x-E_j)}\\
    &=\sum_{j=0}^{N-1} p_j \cdot \Upsilon_\beta(x-E_j) \equiv C(x).
\end{aligned}
\end{equation}
\end{proof}

By establishing the above equivalence, we can analyze the convoluted spectrum through the partition function $D_\beta(H-x)$.

Given the lower-bounded $E_a$ and upper-bounded estimation $E_b$ to the ground state energy $E_0$, we provide the following strategy for estimating $E_0$ using the partition function $D_\beta(H-x)$ of the shifted ITE operator $\Lambda_\beta(H-x)$. First, let us consider two different partition functions $D_\beta(H-x)$ and $D_{2\beta}(H-x)$. We define the difference between the two partition functions between the interval $[E_a,E_b]$:
\begin{equation}\label{eq:partition_difference}
    R(x) := D_\beta(H-x) - D_{2\beta}(H-x),x\in [E_a,E_b].
\end{equation}
We plot a sketch of $D_\beta(H-x)$, $D_{2\beta}(H-x)$ and $R(x)$ in Fig.~\ref{fig:main} (a)  in the main text. Starting from $x=E_a$, the value of $R(x)$ first rises until it reaches the point that is denoted as $E_\mathrm{max}$. Then, the value of $R(x)$ falls and approaches zero as  $x$ becomes closer to $E_0$. Therefore, we can shrink the interval that we find $E_0$ down to $[E_\textrm{max},E_b]$. Note that there are cases in which the value of $R(x)$ falls directly at the starting point $E_a$, e.g., $E_\mathrm{max}=E_a$, which is the case in our illustrative graph.

The above observation prompts us to evaluate $R(x)$ at each $\epsilon$ interval starting from $E_a$ for finding $E_0$. Intuitively, if the value of $R(x)$ starts to go down and we keep estimating $R(x)$ until its value bellows a certain threshold, then, we find the proper estimation $E_0^\prime$ for $E_0$ that $|E_0^\prime-E_0|\leq\epsilon$. Formally, the following lemma specifies the value of $\beta$ and the threshold that we use to determine if the approximation for the ground state energy is found, and the RQITE algorithm can terminate.

Furthermore, as pointed out in Ref.~\cite{wang2023quantum}, the energy gap of quantum molecular systems is often greater than the accuracy $\epsilon$ by orders. Taking advantage of this property, Ref.~\cite{wang2023quantum} is able to suppress the maximal evolution time from $\mathcal{O}(\epsilon^{-1})$ to $\mathcal{O}(\Delta^{-1})$ for the GSEE problems. To exploit this property, we make the following assumption.
\begin{assumption}[Relation between the lower bound of the energy gap and the accuracy]
\label{assume:gap_accuracy_relation}
Let $\Delta$ and $\gamma$ be the quantities that are defined in Definition~\ref{def:gsee}, and $\epsilon$ is the accuracy that we want to achieve for estimating the ground state energy. Then, it is assumed that the gap satisfies
\begin{equation}
    \frac{\Delta}{\epsilon}\geq\ln(\gamma^{-2}\epsilon^{-1}).
\end{equation}
\end{assumption}
\noindent In the following, let us provide justification and evidence for Assumption~\ref{assume:gap_accuracy_relation}.

Exploiting the above-proposed assumption, we provide the following lemma to specify the threshold that the RQITE algorithm can stop.
\begin{lemma}[Value of $\beta$ and termination threshold of the RQITE algorithm for the GSEE problems]
\label{lemma:beta_value}
Let $\gamma$ and $\Delta$ be the quantities that are given in Definition~\ref{def:gsee}. Let $D_\beta(H-x)$ and $D_{2\beta}(H-x)$ be defined by Eq.~\eqref{eq:partition_function}, and their difference $R(x)$ is defined by Eq.~\eqref{eq:partition_difference}. Let $\Xi$ be the termination threshold for the RQITE algorithm. Then, for
\begin{equation}\label{eq:beta_value}
    \beta:=\frac{\ln(\gamma^{-2}\epsilon^{-1})}{\Delta},
\end{equation}
we have
\begin{equation}\label{eq:r_value}
\begin{aligned}
    R(x)&<0.5p_0\beta\epsilon+p_0\epsilon,&\forall  x\in[E_0-\frac{\epsilon}{2}, E_0];\\
    R(x)&>p_0\beta\epsilon,&\forall x\in[E_{\mathrm{max}}, E_0-\epsilon].
\end{aligned}
\end{equation}
Then, the termination threshold is given by
\begin{equation}\label{eq:termination_threshold}
    \Xi:=\left(\frac{\beta}{2}+1\right)p_0\epsilon.
\end{equation}
\end{lemma}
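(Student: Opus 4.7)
The plan is to expand everything in the eigenbasis of $H$ and reduce the lemma to a term-by-term estimate on a one-dimensional scalar function. Writing $u_j := e^{-\beta(E_j - x)}$, one immediately gets $D_\beta(H-x) = \sum_j p_j u_j$ and $D_{2\beta}(H-x) = \sum_j p_j u_j^2$, hence $R(x) = \sum_j p_j u_j(1-u_j)$. For $x \leq E_0$, every $u_j$ lies in $(0,1]$, so the sum is non-negative and splits cleanly into a ground-state contribution $p_0 u_0(1-u_0)$ and an excited-state tail $\sum_{j\geq 1} p_j u_j(1-u_j)$. The whole proof then reduces to (i) showing the tail is uniformly small via the gap assumption, and (ii) controlling the ground-state term on each of the two relevant sub-intervals.

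For (i), I would note that for $j \geq 1$ and any $x \leq E_0$ one has $E_j - x \geq \Delta$, so $u_j \leq e^{-\beta\Delta} = \gamma^2\epsilon$ by the choice $\beta = \ln(\gamma^{-2}\epsilon^{-1})/\Delta$. Since $u_j(1-u_j) \leq u_j$ and $\sum_{j\geq 1} p_j \leq 1$, the tail is at most $\gamma^2\epsilon \leq p_0\epsilon$ using $p_0 \geq \gamma^2$. This furnishes the ``$+ p_0\epsilon$'' slack appearing in both inequalities and also shows that the ground-state term dominates $R(x)$ throughout the relevant region, so everything else is a one-dimensional calculation about $u_0(1-u_0)$.

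For the upper bound on $[E_0-\epsilon/2, E_0]$, I would apply $1 - e^{-y} \leq y$ with $y = \beta(E_0-x) \in [0, \beta\epsilon/2]$ to get $u_0(1-u_0) \leq \beta\epsilon/2$, whence $p_0 u_0(1-u_0) \leq 0.5\, p_0 \beta\epsilon$. Combining with the tail bound yields $R(x) < 0.5\, p_0 \beta\epsilon + p_0\epsilon$, which is exactly the threshold $\Xi$. For the lower bound on $[E_\mathrm{max}, E_0-\epsilon]$, I would use that $R$ decreases monotonically past its peak $E_\mathrm{max}$, so its minimum on this interval is attained at $x = E_0-\epsilon$; there, $R(x) \geq p_0 u_0(1-u_0) = p_0 e^{-\beta\epsilon}(1 - e^{-\beta\epsilon})$. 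Assumption~\ref{assume:gap_accuracy_relation} guarantees $\beta\epsilon \leq 1$, keeping the argument of the exponential in a regime where elementary estimates relate this expression back to $\beta\epsilon$.

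The main obstacle is this last step: the curve $y \mapsto e^{-y}(1 - e^{-y})$ sits just below $y$ near the origin, so extracting the bound $R(x) > p_0 \beta\epsilon$ cleanly requires either a careful second-order expansion exploiting $\beta\epsilon \leq 1$, or a sharper use of monotonicity by comparing to an interior point of $[E_\mathrm{max}, E_0-\epsilon]$ where $u_0$ is bounded away from $1$ (and hence $u_0(1-u_0)$ is a constant rather than an $O(\beta\epsilon)$ quantity). Once this lower bound is in hand, the two inequalities separate the ``terminate'' and ``continue'' regimes, and the termination threshold $\Xi = (\beta/2+1)p_0\epsilon$ emerges as the smallest constant compatible with both, completing the proof.
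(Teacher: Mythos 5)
Your decomposition is exactly the paper's: expand $R(x)=\sum_j p_j u_j(1-u_j)$ with $u_j=e^{-\beta(E_j-x)}$ in the eigenbasis, bound the excited-state tail by $e^{-\beta\Delta}=\gamma^2\epsilon\leq p_0\epsilon$ using the gap and the choice of $\beta$, and control the ground-state term on each sub-interval. Your treatment of the first inequality is correct and in fact slightly cleaner than the paper's (you use $1-e^{-y}\leq y$ where the paper runs an alternating Taylor-series argument), and your tail estimate coincides with Eq.~\eqref{eq:beta_proof_4}.

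The obstacle you flagged for the lower bound on $[E_{\mathrm{max}},E_0-\epsilon]$ is a genuine gap in your proposal, but it is also exactly the point where the paper's own proof fails, and no amount of second-order expansion or interior-point comparison will rescue the constant as stated. After reducing to $x=E_0-\epsilon$, the paper (Eqs.~\eqref{eq:beta_proof_5}--\eqref{eq:beta_proof_6}) asserts
\begin{equation*}
e^{-\beta\epsilon}-e^{-2\beta\epsilon}=\beta\epsilon+\sum_{k\geq2}\tfrac{1}{k!}\left((-\beta\epsilon)^k-(-2\beta\epsilon)^k\right)>\beta\epsilon,
\end{equation*}
claiming the tail sums to a positive factor because the positive terms dominate the adjacent negative ones when $\beta\epsilon\leq1$. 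This is backwards: the $k=2$ term is $-\tfrac{3}{2}(\beta\epsilon)^2$ while the $k=3$ term is $+\tfrac{7}{6}(\beta\epsilon)^3$, and $\tfrac{7}{6}y^3<\tfrac{3}{2}y^2$ whenever $y<9/7$, i.e., always under Assumption~\ref{assume:gap_accuracy_relation}. Indeed, as you observed, $e^{-y}-e^{-2y}<y$ for \emph{all} $y>0$. Consequently the second inequality of the lemma is false as written: take $p_0=1$ (the guiding state is the ground state), so the excited tail vanishes, and at $x=E_0-\epsilon$ one gets $R(x)=e^{-\beta\epsilon}-e^{-2\beta\epsilon}<\beta\epsilon=p_0\beta\epsilon$. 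A workable repair weakens the constant: since $y\mapsto(e^{-y}-e^{-2y})/y$ is decreasing on $(0,1]$, one has $R(x)\geq p_0\left(e^{-\beta\epsilon}-e^{-2\beta\epsilon}\right)\geq\left(e^{-1}-e^{-2}\right)p_0\beta\epsilon\approx0.23\,p_0\beta\epsilon$ under $\beta\epsilon\leq1$; but then the threshold $\Xi$ and the separation between the ``terminate'' and ``continue'' regimes must be re-derived (the corrected separation holds only when $\beta\epsilon$ is sufficiently small and $\beta$ is bounded below by roughly $2$), rather than read off from the stated bounds. So your instinct was right: this is an obstacle to the lemma itself, not merely to your proof of it.
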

\begin{proof}
First, let us explain why we care about $x$ in the interval $[E_0-\frac{\epsilon}{2}, E_0]$ but not $[E_0-\frac{\epsilon}{2}, E_0+\frac{\epsilon}{2}]$. This is because as soon as $x$ reaches the interval $[E_0, E_0+\frac{\epsilon}{2}]$, we either get $R(x)\approx 0$ or $R(x)\leq 0$ from which we can directly determine that the estimation for the ground state energy is found.
Therefore, we only need to discriminate $R(x)$ for $x\in[E_0-\frac{\epsilon}{2}, E_0]$ from $x\in[E_{\mathrm{max}}, E_0-\epsilon]$.

Next, to prove this lemma, we take advantage of the equivalence between the partition function that we proved in Theorem~\ref{theorem:partition_convolution_equivalence}. In the following, we discuss the proof in the language of the spectrum function of the initial state that is defined in Eq.~\eqref{eq:intital_state_spectrum}.

First, let us consider $x$ in the interval $[E_0-\frac{\epsilon}{2}, E_0]$. We have
\begin{equation}\label{eq:beta_proof_1}
\begin{aligned}
    R(x)&=p_0\left( \Upsilon_\beta(x-E_0)-\Upsilon_{2\beta}(x-E_0) \right)+\sum_{j=1}^{N-1}p_j \left(\Upsilon_\beta(x-E_j)-\Upsilon_{2\beta}(x-E_j) \right)\\
    &\leq p_0\left( \Upsilon_\beta(x-E_0)-\Upsilon_{2\beta}(x-E_0) \right)+\max_{j\in[1,N-1]}  \left(\Upsilon_\beta(x-E_j)-\Upsilon_{2\beta}(x-E_j) \right).
\end{aligned}
\end{equation}
We note that for $x\in[E_0-\frac{\epsilon}{2}, E_0]$, the first term dominants most part of $R(x)$ and has the same tendency as $x$ varies that it decays as $x$ becomes larger in this interval. Hence, the first term takes the maximum value at point $x=E_0-\frac{\epsilon}{2}$, and we get
\begin{equation}\label{eq:beta_proof_2}
\begin{aligned}
    p_0\left( \Upsilon_\beta(x-E_0)-\Upsilon_{2\beta}(x-E_0) \right) &\leq p_0 \left(\Upsilon_\beta\left(-\frac{\epsilon}{2}\right) - \Upsilon_{2\beta}\left(-\frac{\epsilon}{2}\right)\right)\\
    &=p_0(e^{-\frac{\beta\epsilon}{2}}-e^{-\beta\epsilon})\\
    &=0.5p_0\beta\epsilon+p_0\sum_{k=2}^\infty \frac{1}{k!}\left(\left(\frac{-\beta\epsilon}{2}\right)^k-(-\beta\epsilon)^k \right)\\
    &<0.5p_0\beta\epsilon,
\end{aligned}
\end{equation}
where in the third line we have used the Taylor expansion. The last line of Eq.~\eqref{eq:beta_proof_2} follows from Assumption~\ref{assume:gap_accuracy_relation} that $\beta\epsilon=\frac{\ln(\gamma^{-2}\epsilon^{-1})\epsilon}{\Delta}\leq1$. Note that the second term in the third line of Eq.~\eqref{eq:beta_proof_2} is a summation of negative and positive terms, alternatively. That is $\frac{1}{2!}\cdot\frac{-3(\beta\epsilon)^2}{4}+\frac{1}{3!}\cdot\frac{7(\beta\epsilon)^3}{8}+\cdots$. Besides, the negative terms are larger than the adjacent positive terms in magnitude because $\beta\epsilon\leq1$. Therefore, the summation adds up to a negative factor.

For the second term in the last line of Eq.~\eqref{eq:beta_proof_1}, let us first note that
\begin{equation}\label{eq:beta_proof_3}
\begin{aligned}
    x-E_j&\leq E_0-E_j\\
    &\leq -\Delta.
\end{aligned}
\end{equation}
Note that for the two terms in the second term, $\Upsilon_{\beta}(x-E_j)$ and $\Upsilon_{2\beta}(x-E_j)$, their values are exponentially suppressed owing to $x-E_j<0$ for $x\in[E_0-\frac{\epsilon}{2}, E_0]$. Because $\Upsilon_{\beta}(x-E_j)$ decays slower than $\Upsilon_{2\beta}(x-E_j)$, the maximum value of the second term is taken when $x-E_j$ is maximized. Therefore, we have
\begin{equation}\label{eq:beta_proof_4}
\begin{aligned}
    \max_{j\in[1,N-1]} \left(\Upsilon_\beta(x-E_j)-\Upsilon_{2\beta}(x-E_j) \right) &\leq \Upsilon_\beta(-\Delta)-\Upsilon_{2\beta}(-\Delta) \\
    &= e^{-\beta\Delta}-e^{-2\beta\Delta}\\
    &=\gamma^2\epsilon-(\gamma^2\epsilon)^2\\
    &< p_0 \epsilon,
\end{aligned}
\end{equation}
where in the first line we have relaxed $p_j$ to $1$; we plug in Eq.~\eqref{eq:beta_value} in the third line; and in the last line, we have used the fact that $\gamma$ serves as a lower-bounded estimation for $p_0$.
This completes the proof for the first line of Eq.~\eqref{eq:r_value}. The maximal value of $R(x)$ in this interval also decides the termination threshold $\Xi$.

Next, for $x\in[E_{\mathrm{max}}, E_0-\epsilon]$, we have
\begin{equation}\label{eq:beta_proof_5}
\begin{aligned}
    R(x)&=p_0\left( \Upsilon_\beta(x-E_0)-\Upsilon_{2\beta}(x-E_0) \right)+\sum_{j=1}^{N-1}p_j \left(\Upsilon_\beta(x-E_j)-\Upsilon_{2\beta}(x-E_j) \right)\\
    &\geq p_0\left( \Upsilon_\beta(x-E_0)-\Upsilon_{2\beta}(x-E_0) \right)+\min_{j\in[1,N-1]} p_j \left(\Upsilon_\beta(x-E_j)-\Upsilon_{2\beta}(x-E_j) \right)\\
    &> p_0\left( \Upsilon_\beta(-\epsilon)-\Upsilon_{2\beta}(-\epsilon) \right)\\
    &=p_0(e^{-\beta\epsilon}-e^{-2\beta\epsilon}),
\end{aligned}
\end{equation}
where in the third line we let $x=E_0-\epsilon$ because the first term decays as $x-E_0$ becomes large in the interval $[E_{\mathrm{max}}, E_0-\epsilon]$, and we have omitted the second term, which is positive. Then, by using the Taylor expansion, we have
\begin{equation}\label{eq:beta_proof_6}
\begin{aligned}
    R(x)&=p_0\beta\epsilon+p_0\sum_{k=2}^\infty \frac{1}{k!}\left(\left(-\beta\epsilon\right)^k-(-2\beta\epsilon)^k \right)\\
    &>p_0\beta\epsilon.
\end{aligned}
\end{equation}
Again, using $\beta\epsilon\leq1$, it is simple to show that the second term in
Eq.~\eqref{eq:beta_proof_6} adds up to a positive factor because the positive terms are always larger than the adjacent negative terms in magnitude.
\end{proof}

Note that Lemma \ref{lemma:beta_value} basically says that in order to locate the ground state energy within desired accuracy, the resolution we want to achieve is $\epsilon^\prime=p_0\beta\epsilon$. This precision decides the complexity of the algorithm, which affects the performance of the dequantized algorithm that is discussed in Sec.~\ref{sec:deq}.

Now, let us discuss the quantum implementation for estimating the partition function given by Eq.~\eqref{eq:partition_function}. The overall idea is also to perform Fourier transformation to the partition function, resulting in a similar quantum execution as discussed in Sec.~\ref{sec:gsee_dev}. One caveat is that the exponential function is a not bounded function, which may cause trouble to the evaluation of the partition function. We manage to circumvent this problem by observing that $D_\beta(H-x)$ admits $x\in [E_a,E_0]$ in the process of the algorithm execution so that $e^{-\beta(H-x)}\preccurlyeq0$ is a trace non-increasing operator. In light of this observation, we can equivalently view the operator as $e^{-\beta|H-x|}$ with the function being altered to $e^{-\beta |x|}$. This insight has also been adopted for solving (dissipative) ordinary differentiable equations in Ref.~\cite{an2023linear}. Here, we remark that we do not directly apply this absolute-value exponential function because it is non-analytic, which could be challenging for dequantization using cluster expansion. 

Now, following from the equivalence between the convolution and partition function, we have the following result
\begin{equation}\label{eq:exp_fourier}
    D_\beta(H-x)=\int_{-\infty}^\infty\mathrm{~d}t \frac{\beta}{\pi\left(\beta^2+t^2\right)}  \bra{\psi_I}e^{- i (H-x)t}\ket{\psi_I},~(H-x)\succeq0.
\end{equation}
In Eq.~\eqref{eq:exp_fourier}, we have used the following Fourier transformation for the exponential function $g(x)=e^{-\beta|x|}$:
\begin{equation}
    \hat{g}(t)=\frac{1}{2 \pi} \int_{-\infty}^\infty e^{-\beta|x|} e^{-i t x} \mathrm{~d} x=\frac{\beta}{\pi\left(\beta^2+t^2\right)},
\end{equation}
where $\hat{g}(t)$ is the Cauchy-Lorentz distribution function. We then truncate the integral to approximate the partition function
\begin{equation}\label{eq:trunc_exp_fourier}
\begin{aligned}
    D_\beta(H-x)&\simeq \int_{-T}^T \mathrm{~d}t\frac{\beta}{\pi\left(\beta^2+t^2\right)}  \bra{\psi_I}e^{- i (H-x)t}\ket{\psi_I}\\
    &=\mathbb{E}_{t\sim\hat{g}_T^\prime(t)} [ \mathbf{Z}(x)],
\end{aligned}
\end{equation}
where $\hat{g}_T(x)$ is the Cauchy-Lorentz distribution truncated to the domain $[-T,T]$, $\hat{g}_T^\prime(x)=\hat{g}_T(x)/\|\hat{g}_T\|$, $\|\hat{g}_T\|$ is the normalization factor and $\mathbf{Z}(x):=\|\hat{g}_T\|\bra{\psi_I}e^{- i(H-x)t}\ket{\psi_I}$. Using the Hadamard test circuits, we then estimate the real and imaginary parts of $\mathbf{Z}(x,t)$. As such, we have the following two quantities
\begin{equation}
    \mathbb{E}\left[\mathbf{X}(t)\right]=\rm{Re}\left(\mathrm{tr}\left[\rho e^{-i Ht}\right]\right),\quad\mathbb{E}\left[\mathbf{Y}(t)\right]=\rm{Im}\left(\mathrm{tr}\left[\rho e^{-i Ht}\right]\right),
\end{equation}
where $\rm{Re}(\cdot)$ and $\rm{Im}(\cdot)$ denote taking the real and imaginary parts, separately. It is straightforward to verify the two parts comprise the unbiased estimator for Eq.~\eqref{eq:trunc_exp_fourier}: $\|\hat{g}_T\|e^{i xt}\left(\mathbb{E}\left[\mathbf{X}(t)\right]+\mathbb{E}\left[\mathbf{Y}(t)\right]\right)=\mathbb{E}\left[\mathbf{Z}(x)\right]$, where $\|\hat{g}_T\|e^{i xt}$ can be added through classical postprocessing.

\subsection{Complexity of the RQITE algorithm}
\label{sec:gsee_complexity}
In this section, we analyze the complexity of our RQITE algorithm in both maximal and total evolution times for the RQITE algorithm to solve the GSEE problem. We use the results from Ref.~\cite{wang2023quantum} for the sample complexity and refer the readers to [Lemma B.2, Ref.~\cite{wang2023quantum}] for detailed proof.
\begin{lemma}[Lemma B.2, Ref.~\cite{wang2023quantum}]\label{lemma:z_sample_complexity}
Let $\{t_i, \bm X(t_i), \bm Y(t_i)\}_{i=1}^S$ be the set of i.i.d.~samples such that $t_i\sim \hat{g}^\prime_T(t)$ and $\bm X(t_i)$ and $\bm Y(t_i)$ are measurement outcome from the Hadamard circuits.
Let also $\{x_j\}_{j=1}^M$ be the set of different points we need to evaluate for $D_\beta(H-x)$.
For each $j\in[M]$, define
\begin{equation}\label{eq:omega_estimator_mean}
    \overline{\bm{Z}}(x_j):=\frac{\|\hat{g}_T\|}S\sum_{i=1}^Se^{i t_i x_j}\left(\bm X(t_i)+i \bm Y(t_i)\right).
\end{equation}
Then for any $\epsilon^\prime>0$ and $\mu\in(0,1)$, when the total number of samples $S$ is given by
\begin{equation}\label{eq:z_sample}
    S:=\left\lceil \frac{\|\hat{g}_T\|^2\ln(4M/\mu)}{\epsilon^{\prime 2}} \right\rceil,
\end{equation}
we have
\begin{equation}\label{eq:sample_complexity}
    \mathbb{P}\left[\forall i\in[M],j\in[S]:|\overline{\textbf{Z}}(x_j)-\mathbb{E}[\mathbf{Z}(x_j)]|<\epsilon^\prime \right]\geq 1-\mu.
\end{equation}
\end{lemma}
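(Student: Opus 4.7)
The plan is a standard Hoeffding concentration argument followed by a union bound over the $M$ evaluation points. The estimator $\overline{\bm Z}(x_j)$ in Eq.~\eqref{eq:omega_estimator_mean} is an empirical mean of i.i.d.\ complex random variables, so once the per-sample boundedness and the cross-sample independence are in place, the sample complexity bound follows from a textbook calculation.

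To set up the concentration, I would write the $i$-th summand as
\begin{equation}
    W_i^{(j)} := \|\hat g_T\|\, e^{i t_i x_j}\bigl(\bm X(t_i) + i\,\bm Y(t_i)\bigr),
\end{equation}
and note that because $t_i$ is drawn i.i.d.\ from $\hat g_T^\prime$ and, for each $t_i$, the Hadamard outcomes $\bm X(t_i)$ and $\bm Y(t_i)$ are independent $\pm 1$ measurements of the circuits in Fig.~\ref{fig:main}(b) (with $W=I$ and $W=S^\dagger$, respectively), the $W_i^{(j)}$ are i.i.d.\ with $\mathbb{E}[W_i^{(j)}] = \mathbb{E}[\bm Z(x_j)]$. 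The real part $\mathrm{Re}\,W_i^{(j)} = \|\hat g_T\|\bigl[\cos(t_i x_j)\bm X(t_i) - \sin(t_i x_j)\bm Y(t_i)\bigr]$ lies in an interval of width $O(\|\hat g_T\|)$ by Cauchy--Schwarz, and the imaginary part satisfies the analogous bound.

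With these bounds in hand, I would apply Hoeffding's inequality separately to $\mathrm{Re}\,\overline{\bm Z}(x_j) - \mathbb{E}[\mathrm{Re}\,\bm Z(x_j)]$ and $\mathrm{Im}\,\overline{\bm Z}(x_j) - \mathbb{E}[\mathrm{Im}\,\bm Z(x_j)]$, each with deviation target $\epsilon^\prime/\sqrt{2}$. Each Hoeffding application yields a failure probability at most $2\exp\bigl(-c S \epsilon^{\prime 2}/\|\hat g_T\|^2\bigr)$ for an absolute constant $c$, and combining by $|a+ib|\leq |a|+|b|$ then bounds the modulus deviation by $\epsilon^\prime$. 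A union bound over the $M$ evaluation points and over the two (real/imaginary) sub-events produces a total failure probability of at most $4M\exp\bigl(-c S \epsilon^{\prime 2}/\|\hat g_T\|^2\bigr)$; setting this $\leq \mu$ and solving for $S$ reproduces Eq.~\eqref{eq:z_sample} up to absorbing $c$ into the constant.

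The argument is essentially bookkeeping on top of Hoeffding and the union bound, so no step presents a real obstacle. The two points to watch are: (i) confirming that $\bm X(t_i)$ and $\bm Y(t_i)$ come from independent circuit runs so that the $W_i^{(j)}$ are genuinely i.i.d.\ and bounded; and (ii) correctly carrying the normalization factor $\|\hat g_T\|$ through from the truncated integral in Eq.~\eqref{eq:trunc_exp_fourier} to the variance bound, which is where the $\|\hat g_T\|^2$ in the numerator of $S$ originates. Once these are cleanly handled, the stated complexity in Eq.~\eqref{eq:sample_complexity} follows directly.
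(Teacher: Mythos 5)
Your proposal is essentially correct, but note that the paper itself contains no proof of this lemma: it is imported verbatim as Lemma~B.2 of Ref.~\cite{wang2023quantum}, and the appendix explicitly refers the reader there for the details. Your Hoeffding-plus-union-bound reconstruction is exactly the standard argument behind such statements (and behind the cited source): the summands $W_i^{(j)}$ are i.i.d., unbiased for $\mathbb{E}[\mathbf{Z}(x_j)]$, and have real and imaginary parts bounded in magnitude by $\sqrt{2}\,\|\hat g_T\|$, so two-sided Hoeffding on each part plus a union bound over $2M$ events produces precisely the $\ln(4M/\mu)$ dependence in Eq.~\eqref{eq:z_sample}. Two small points of bookkeeping: first, your final combination step should use $|a+ib|=\sqrt{a^2+b^2}<\epsilon^\prime$ (which your choice of deviation targets $\epsilon^\prime/\sqrt{2}$ is tailored to), not the triangle inequality $|a+ib|\leq|a|+|b|$, which would only give $\sqrt{2}\,\epsilon^\prime$; second, vanilla Hoeffding with these bounds yields $S=\mathcal{O}\bigl(\|\hat g_T\|^2\ln(4M/\mu)/\epsilon^{\prime 2}\bigr)$ with a constant factor (roughly $8$) in front, whereas Eq.~\eqref{eq:z_sample} states the bound with constant $1$; this discrepancy is immaterial to how the lemma is used in Theorem~\ref{theorem:gsee_main_theorem}, where only the scaling enters.
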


We are ready to provide the overall complexity analysis of the RQITE algorithm for the GSEE tasks, which we state in the following theorem.
\begin{theorem}[RQITE algorithm for the GSEE tasks]\label{theorem:gsee_main_theorem}
Under the assumptions stated in Definition \ref{def:gsee} and algorithmic setting in Lemma~\ref{lemma:beta_value}, we can estimate the ground-state energy within sufficiently small additive error $\epsilon$, and probability at least $1-\mu$ using resources such that
\begin{itemize}
    \item The maximal Hamiltonian evolution time is $t_{\text{max}}=\mathcal{O}(\epsilon^{-1}\gamma^{-2})$;
    \item The total Hamiltonian evolution time is $t_{\text{total}}=\mathcal{\widetilde{O}}(\epsilon^{-3}\Delta^2\gamma^{-6})$.
    \item The classical postprocess time is $\mathcal{\widetilde{O}}(\epsilon^{-4}\Delta^2\gamma^{-6})$.
\end{itemize}
\end{theorem}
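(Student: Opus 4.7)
The plan is to combine the detection guarantee of Lemma~\ref{lemma:beta_value} with a Fourier-truncation bound for the Cauchy-Lorentz kernel and the Monte-Carlo concentration of Lemma~\ref{lemma:z_sample_complexity}, balancing the three error sources so that our estimator of $R(x)$ stays strictly on the correct side of the threshold $\Xi$ at every probed $x$.

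First, I would fix the hyperparameters as dictated by Lemma~\ref{lemma:beta_value}, namely $\beta=\Delta^{-1}\ln(\gamma^{-2}\epsilon^{-1})$ and $\Xi=(\beta/2+1)\gamma^2\epsilon$. Under Assumption~\ref{assume:gap_accuracy_relation} (which forces $\beta\epsilon\leq 1$ while $\beta\gg 1$), the two bounds of that lemma yield a separation of order $\Theta(p_0\beta\epsilon)\geq\Theta(\gamma^2\beta\epsilon)$ between the ``found'' regime $x\in[E_0-\epsilon/2,E_0]$ and the ``not yet'' regime $x\in[E_{\max},E_0-\epsilon]$. Hence, estimating $R(x)$ with additive precision $\epsilon'=\Theta(\gamma^2\beta\epsilon)$ at every grid point $x_j$, spaced by $\epsilon$ across the promised interval $[E_a,E_b]$, suffices for the first crossing of $\Xi$ to produce $E_0'$ with $|E_0'-E_0|\leq\epsilon$.

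Second, I would bound the truncation error from replacing Eq.~\eqref{eq:exp_fourier} by the $[-T,T]$ integral in Eq.~\eqref{eq:trunc_exp_fourier}. Because $|\bra{\psi_I}e^{-i(H-x)t}\ket{\psi_I}|\leq 1$, that error is dominated by the Cauchy tail $\int_{|t|>T}\hat g(t)\,dt\leq 2\beta/(\pi T)$. Forcing this below $\epsilon'/4$ and using $\beta=\widetilde{\mathcal{O}}(\Delta^{-1})$ yields $T=\mathcal{O}(\gamma^{-2}\epsilon^{-1})$, which is exactly $t_{\max}$. Third, I would invoke Lemma~\ref{lemma:z_sample_complexity} at both $\beta$ and $2\beta$ with target accuracy $\epsilon'/4$, taking a union bound over the $M=\mathcal{O}(1/\epsilon)$ grid points and the two exponents. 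Using $\|\hat g_T\|\leq 1$ this yields a sample count $S=\widetilde{\mathcal{O}}((\gamma^2\beta\epsilon)^{-2})=\widetilde{\mathcal{O}}(\Delta^2\gamma^{-4}\epsilon^{-2})$. Each sample consumes Hamiltonian evolution of length at most $T$, giving total evolution time $S\cdot T=\widetilde{\mathcal{O}}(\Delta^2\gamma^{-6}\epsilon^{-3})$; forming the empirical mean of Eq.~\eqref{eq:omega_estimator_mean} at all $M$ points costs $\widetilde{\mathcal{O}}(MS)$ classical multiplications, recovering the stated post-processing complexity.

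The main obstacle is error composition together with the union bound. The difference estimator $\overline{R}(x_j)=\overline{\bm Z}_\beta(x_j)-\overline{\bm Z}_{2\beta}(x_j)$ accumulates both the truncation bias and the Monte-Carlo fluctuation, and because the same raw samples $\{t_i,\bm X(t_i),\bm Y(t_i)\}$ are reused across all $x_j$ through the phase $e^{it_ix_j}$, the concentration events at different grid points are correlated. One must argue by the triangle inequality that the total deviation stays strictly below half of the Lemma~\ref{lemma:beta_value} gap uniformly in $j$ and in the exponent $\beta$ or $2\beta$, and verify that the resulting failure probability is at most $\mu$. Fortunately, the $\log(4M/\mu)$ factor already embedded in Lemma~\ref{lemma:z_sample_complexity} absorbs this union bound at the cost only of logarithmic factors hidden by $\widetilde{\mathcal{O}}$.
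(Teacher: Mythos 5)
Your proposal is correct and follows essentially the same route as the paper's proof: fix $\beta$ and $\Xi$ via Lemma~\ref{lemma:beta_value}, bound the Fourier truncation error of the Cauchy--Lorentz kernel to get $T=\mathcal{O}(\gamma^{-2}\epsilon^{-1})$ (your direct tail bound $2\beta/(\pi T)$ is exactly the paper's $1-\tfrac{2}{\pi}\arctan(T/\beta)$), and invoke Lemma~\ref{lemma:z_sample_complexity} with target precision $\Theta(\gamma^2\beta\epsilon)$ and a union bound over the $M=\mathcal{O}(\epsilon^{-1})$ grid points to obtain $S=\widetilde{\mathcal{O}}(\epsilon^{-2}\Delta^2\gamma^{-4})$ and hence $t_{\text{total}}=\widetilde{\mathcal{O}}(\epsilon^{-3}\Delta^2\gamma^{-6})$. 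The only (harmless) deviation is in the postprocessing count, where your $\widetilde{\mathcal{O}}(MS)=\widetilde{\mathcal{O}}(\epsilon^{-3}\Delta^2\gamma^{-4})$ is in fact tighter than, and therefore consistent with, the stated upper bound $\widetilde{\mathcal{O}}(\epsilon^{-4}\Delta^2\gamma^{-6})$.
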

\begin{proof}
First, the maximal Hamiltonian evolution time is determined by the truncation $T$ in Eq.~\eqref{eq:trunc_exp_fourier}. 
We observe that 
\begin{eqnarray}
\begin{split}
    \left|\int_{-\infty}^{+\infty} \mathrm{~d}t\frac{\beta}{\pi\left(\beta^2+t^2\right)}e^{ixt}-\int_{-T}^T \mathrm{~d}t\frac{\beta}{\pi\left(\beta^2+t^2\right)}e^{ixt}\right|&=\left|\int_{-\infty}^{-T} \mathrm{~d}t\frac{\beta}{\pi\left(\beta^2+t^2\right)}e^{ixt} + \int_{T}^{+\infty} \mathrm{~d}t\frac{\beta}{\pi\left(\beta^2+t^2\right)}e^{ixt}\right|\\
    &= \int_{T}^{+\infty}\mathrm{~d}t \frac{\beta}{\pi (\beta^2 + t^2)} \left( e^{ixt} + e^{-ixt} \right) \\
    &\leq 2 \int_{T}^{+\infty}\mathrm{~d}t \frac{\beta}{\pi (\beta^2 + t^2)} |\cos(xt)|\\
    &\leq 2 \int_{T}^{+\infty}\mathrm{~d}t \frac{\beta}{\pi (\beta^2 + t^2)}\\
    &=1-\frac{2}{\pi} \arctan(T/\beta).\\
\end{split}
\end{eqnarray}
In the second line, we have used the symmetric property of the Cauchy distribution and let $x=-x$ for the first integral in the first line. The third line, we have used the trigonometry identity $(e^{ixt} + e^{-ixt})=2\cos(xt)$ and the fact that $\cos(xt)\leq |\cos(xt)|$. The fourth line is followed by $|\cos(xt)|\leq1$. Finally, we used the cumulant distribution function of the Cauchy distribution as $f(x)=\frac{1}{\pi} \arctan \left(\frac{x}{\beta}\right)+\frac{1}{2}$. 

Consequently, to guarantee an upper-bounded $\varepsilon$ truncation error, we have the truncation order to be
\begin{eqnarray}
\begin{split}
    T&\geq \beta \tan \left(\frac{\pi}{2}(1-\varepsilon)\right)\\
    &=\beta \frac{1}{\tan\left(\frac{\pi\varepsilon}{2}\right)}\approx \frac{2\beta}{\pi\varepsilon},
    \end{split}
\end{eqnarray}
where in the second line, we have used the identity $\tan(\frac{\pi}{2}-x)=\frac{1}{\tan(x)}$, the Taylor expansion that $\frac{1}{\tan(x)}=\frac{1}{x+x^3/3+\mathcal{O}(x^5)}$ and we truncate the approximation to it first order for sufficiently small $x$.

As such, we invoke [Claim A.5., Ref.~\cite{wang2023quantum}], which states that the truncation error in the convolution error is bounded by the truncation error induced by the truncation of the inverse Fourier transformation. This in turn gives us the error in truncating the partition function in Eq.~\eqref{eq:trunc_exp_fourier}:
$$\left|D_\beta(H-x)- \int_{-T}^T \mathrm{~d}t\frac{\beta}{\pi\left(\beta^2+t^2\right)}  \bra{\psi_I}e^{- i (H-x)t}\ket{\psi_I}\right|\leq \varepsilon.$$
Recall from Lemma \ref{lemma:beta_value} that the termination criteria is to determine the difference of two partition functions to be larger or smaller than a threshold of order $p_0\beta\epsilon$, which indicates that the accuracy for estimation of the partition function should be $\varepsilon= p_0\beta\epsilon/2$. This decides the truncation to be
\begin{eqnarray}
    T=\frac{4}{\pi\gamma^2\epsilon},
\end{eqnarray}
so that maximal evolution time satisfies $t_{\text{max}}=\mathcal{O}(\epsilon^{-1}\gamma^{-2})$. 

Next, we analyze the sample complexity using Lemma \ref{lemma:z_sample_complexity}: Given an $\mathbf{Z}(x_j)$ and accuracy $\epsilon^\prime$, the number of samples is $\mathcal{\widetilde{O}}\left(\epsilon^{\prime -2}\ln\left(\frac{4}{\mu\epsilon }\right)\right)$. From Lemma \ref{lemma:beta_value}, by letting $\epsilon^\prime=p_0\beta\epsilon/2$, we have bounded by the error from truncation of the integral and shot noise as $\varepsilon+\epsilon^\prime=p_0\beta\epsilon$.
Using the triangle inequality, we see that the total error is 
This gives the sample complexity $S=\mathcal{\widetilde{O}}(\epsilon^{-2}\Delta^2\gamma^{-4})$. The total evolution time is determined by the total evolution time and the sample complexity, which gives $\mathcal{\widetilde{O}}(\epsilon^{-3}\Delta^2\gamma^{-6})$.

The number of different $x$ we need to evaluate for $D_\beta(H-x)$ is given by $M=\mathcal{O}(\epsilon^{-1})$ because we run the algorithm for an interval $\epsilon$. This will contribute to the classical postprocessing time for evaluating the partition function at different locations. Besides, this part will contribute to the total number $M$ in Lemma \ref{lemma:z_sample_complexity}. 
Hence, the classical running time is larger than the sample complexity of quantum circuits by $\epsilon^{-1}$ so that we get $\mathcal{\widetilde{O}}(\epsilon^{-4}\Delta^2\gamma^{-6})$.
\end{proof}

Consider Assumption \ref{assume:gap_accuracy_relation}, the RQITE algorithm achieves an almost Heisenberg-limited scaling on $\epsilon$ for the total evolution time. Yet, the dependence on the overlap is worse because of the heavy tail of the Cauchy distribution. To compensate for this, one might think of other functions with more concentrated Fourier-transformed outcomes. For practical concerns, one possible solution is to multiply the Cauchy distribution with a Gaussian function as in Ref.~\cite{huo2023error}. Yet, as our final goal is to provide a quantized algorithm, the exponential function is important to us for the utility of cluster expansion~\cite{haah2024learning,wild2023classical,wu2024efficient}. The dequantized algorithm is affected by the large truncation order $T$ as we will see since it avoids using the convolution theorem.

\section{Classical Algorithm for approximating $D_{\beta}(H)$}

\subsection{Cluster and Interaction Graph}\label{sec:cluster}
\begin{definition}[Local Hamiltonian]
    A local Hamiltonian is composed of linear combinations of Hermitian operators $h_X$ which nontrivially acts on the qubit subset $X\in S$ with the corresponding coefficient $\lambda_X$. Here, the coefficients satisfy $\abs{\lambda_X}\leq 1$ and are chosen such that $\|h_X\|=1$. All subsystems set $X\in S$ are local operators. We define the associated Hamiltonian as $H=\sum_{X\in S}\lambda_Xh_X$.
\end{definition}

In this article, we assume Hermitian terms $h_X$ are distinct and non-identity multi-qubit Pauli operators. Such an assumption naturally satisfies $\|h_X\|=1$. computation.

\begin{definition}[Cluster induced by Hamiltonian]
    Given a local Hamiltonian $$H=\sum_{X\in S}\lambda_Xh_X,$$ a cluster $\bm{V}$ is defined as a nonempty multi-set of subsystems from $S$, where multi-sets allow an element appearing multiple times. The set of all clusters $\bm V$ with size $m$ is denoted by $\mathcal{C}_m$ and the set of all clusters is represented by $\mathcal{C}=\cup_{m\geq 1}\mathcal{C}_m$.
\end{definition}

For example, if the Hamiltonian $H=X_0X_1+Y_0Y_1$, then some possible candidates for $\bm V$ would be $\{X_0X_1\}$, $\{Y_0Y_1\}$, $\{X_0X_1,X_0X_1\},\cdots$. We call the number of times a subsystem $X$ appears in a cluster $\bm{V}$ the multiplicity $\mu_{\bm{V}}(X)$, otherwise we assign $\mu_{\bm{V}}(X)=0$. Traversing all subsets $X\in S$ may determine the size of $\bm{V}$, that is $\abs{\bm{V}}=\sum_{X\in S}\mu_{\bm V}(X)$. In the provided example, when $\bm V=\{X_0X_1,X_0X_1\}$, we have $\mu_{\bm V}(X_0X_1)=2$, $\mu_{\bm V}(Y_0Y_1)=0$ and $\abs{\bm V}=2$. 

\begin{definition}[Interaction Graph]
    We associate with every cluster $\bm{V}$ a simple graph $G_{\bm{V}}$ which is termed as the interaction graph. The vertices of $G_{\bm{V}}$ correspond to the subsystems in $\bm{V}$, with repeated subsystems also appearing as repeated vertices. Two distinct vertices $X$ and $Y$ are connected by an edge if and only if the respective subsystems overlap, that is ${\rm supp}(h_X)\cap{\rm supp}(h_Y)\neq\emptyset$.
\end{definition}

Suppose the cluster $\bm V=\{X_0X_1,X_0X_1\}$, then its corresponding interaction graph $G_{\bm V}$ has two vertices $v_1,v_2$, related to $X_0X_1$ and $X_0X_1$, respectively, and $v_1$ connects to $v_2$ since ${\rm supp}(X_0X_1)\cap {\rm supp}(X_0X_1)\neq \emptyset$. We say a cluster $\bm{V}$ is connected if and only if $G_{\bm{V}}$ is connected. We use the notation $\mathcal{G}_m$ to represent all connected clusters of size $m$ and $\mathcal{G}=\cup_{m\geq 1}\mathcal{G}_m$ for the set of all connected clusters.

In the context of the interaction graph, we utilize the notation $\mathfrak{d}$ to represent the maximum degree of the interaction graph $G_{\bm V}$. More specifically, if all terms contained in the cluster $\bm V$ are $k$-local, the maximum degree can be upper bounded by $$\mathfrak{d}\leq \max_{h_X\in S}4^k{\abs{\partial h_X}\choose k},$$ where $\abs{\partial h_X}$ represents the number of qubits on the boundary of $h_X$. When the Hamiltonian living in a $D$-dimensional lattice, it is shown that the $k$-qubit operator $h_X$ has at most $D2^{D-1}$ sides, where each side has $\mathcal{O}(k^{1/D})$ qubits. In the $D$-dimensional lattice, each qubit may directly connect to other $2D$ qubits, then we can evaluate 
\begin{align}
    \abs{\partial h_X}\leq 2^Dk^{1/D}D^2.
\end{align}
This may yield the maximum degree of the interaction graph
\begin{align}
    \mathfrak{d}\leq\mathcal{O}(4^k2^{Dk}k^{k/D}D^{2k}).
    \label{Eq: maximumdegree}
\end{align}

\subsection{Classical Algorithm for small $\beta$}
\label{sec:small_cluster}
We first consider the cluster expansion of $D_{\beta}(H)=\langle\psi_I|e^{-\beta H}|\psi_I\rangle$. Using the Taylor expansion formula, we have
\begin{align}
    e^{-\beta H}=\sum\limits_{m\geq 0}\frac{\beta^m}{m!}\left(\frac{\partial^m e^{-\beta H}}{\partial\beta^m}\right)_{\beta=0}.
    \label{Eq:origional}
\end{align}
Recall that $H=\sum_X\lambda_Xh_X$, then we define $Z_X=-\beta\lambda_X$ and $Z=(Z_{X_1},Z_{X_2},\cdots)$. As a result, we may write 
\begin{eqnarray}
\begin{split}
     \left(\frac{\partial^m e^{-\beta H}}{\partial\beta^m}\right)_{\beta=0}&=\sum\limits_{X_1,\cdots,X_m}\left(\frac{\partial Z_{X_1}}{\partial\beta}\right)\cdots \left(\frac{\partial Z_{X_m}}{\partial\beta}\right)\frac{\partial^m e^{-\beta H}}{\partial Z_{X_1}\cdots\partial Z_{X_m}}\big|_{Z=(0,0,\cdots,0)}\\
     &=\sum\limits_{X_1,\cdots,X_m}(-1)^m\lambda_{X_1}\cdots\lambda_{X_m}\frac{\partial^m}{\partial Z_{X_1}\cdots\partial Z_{X_m}}\sum\limits_{j\geq0}\frac{(-\beta)^j}{j!}H^j\big|_{Z=(0,0,\cdots,0)}\\
     &=\sum\limits_{X_1,\cdots,X_m}(-1)^m\lambda_{X_1}\cdots\lambda_{X_m}\frac{\partial^m}{\partial Z_{X_1}\cdots\partial Z_{X_m}}\sum\limits_{j\geq0}\frac{(-\beta)^j}{j!}\left(\sum\limits_{\lambda_1,\cdots,\lambda_j}\lambda_1\cdots\lambda_jh_{X_1}\cdots h_{X_j}\right)\big|_{Z=(0,0,\cdots,0)}\\
     &=\sum\limits_{X_1,\cdots,X_m}(-1)^m\lambda_{X_1}\cdots\lambda_{X_m}\frac{1}{m!}\sum\limits_{\sigma\in\mathcal{P}_m}h_{X_{\sigma(1)}}\cdots h_{X_{\sigma(m)}}.
     \label{Eq:partialderevate}
\end{split}
\end{eqnarray}
Taking Eq.~\ref{Eq:partialderevate} into Eq.~\ref{Eq:origional}, we finally obtain the cluster expansion formula
\begin{eqnarray}
    \begin{split}
         D_{\beta}(H)&=\sum\limits_{m\geq 0}\frac{\beta^m}{m!}\sum\limits_{X_1,\cdots,X_m}(-1)^m\lambda_{X_1}\cdots\lambda_{X_m}\frac{1}{m!}\sum\limits_{\sigma\in\mathcal{P}_m}{\rm Tr}\left[|\psi_c\rangle\langle\psi_c|h_{X_{\sigma(1)}}\cdots h_{X_{\sigma(m)}}\right]\\
         &=\sum\limits_{m\geq 0}\sum\limits_{\bm W\in\mathcal{C}_m}\frac{(-\beta)^{\abs{\bm W}}{\bm\lambda}^{\bm W}}{\bm W!}\frac{1}{m!}\sum\limits_{\sigma\in\mathcal{P}_m}{\rm Tr}\left[|\psi_c\rangle\langle\psi_c|h_{X_{\sigma(1)}}\cdots h_{X_{\sigma(m)}}\right]\\
         &=1+\sum\limits_{m\geq 1}\sum\limits_{\bm W\in\mathcal{C}_m}\frac{(-\beta)^{\abs{\bm W}}{\bm\lambda}^{\bm W}}{\bm W!}\frac{1}{m!}\sum\limits_{\sigma\in\mathcal{P}_m}{\rm Tr}\left[|\psi_c\rangle\langle\psi_c|h_{X_{\sigma(1)}}\cdots h_{X_{\sigma(m)}}\right],
    \end{split}
\end{eqnarray}
where $\bm\lambda^{\bm W}=\prod_{X\in S}\lambda_W^{\mu_{\bm W}(X)}$ and $\bm W!=\prod_{X\in S}\mu_{\bm W}(X)!$. 

From the above expression, it is shown that each cluster $\bm W=(h_{X_1},\cdots,h_{X_m})$, and ${\rm Tr}\left[|\psi_c\rangle\langle\psi_c|h_{X_{\sigma(1)}}\cdots h_{X_{\sigma(m)}}\right]$ can be factorized when $\bm W$ is disconnected and $|\psi_c\rangle$ be the product state. In general, the classical initial state $|\psi_I\rangle$ may not represent a product state, but a superposition quantum state with $R\leq {\rm poly}(n)$ configurations. Suppose $|\psi_I\rangle=\sum_xa_x|x\rangle$ where $|x\rangle$ represents a tensor product state and coefficients satisfy $\sum_x\abs{a_x}^2=1$. As a result, the partition function can be decomposed by $$D_{\beta}(H)=\sum_{x,y}a_xa_y^{*}\langle y|e^{-\beta H}|x\rangle.$$ In the following sections, we focus on estimating $d_{x,y,\beta}(H)=\langle y|e^{-\beta H}|x\rangle$.

When the cluster $\bm W$ is disconnected, the related term in the cluster expansion of $d_{x,y,\beta}(H)$ is given by
$$\frac{1}{m!}\sum_{\sigma\in\mathcal{P}_m}{\rm Tr}\left[|x\rangle\langle y|h_{X_{\sigma(1)}}\cdots h_{X_{\sigma(m)}}\right]=\prod\limits_{\bm V\in P_{c,\max}(\bm W)}\langle y|h^{\bm V}|x\rangle,$$
with $P_{c,\max}(\bm W)$ representing the partition of the cluster $\bm W$ into its maximal connected components. For the convenience of the following description, we denote $\mathcal{D}_{\bm W}(d_{x,y,\beta}(H))=(-\beta)^{\bm W}\prod\limits_{\bm V\in P_{c,\max}(\bm W)}\langle y|h^{\bm V}|x\rangle$, which naturally give rises to
\begin{align}
    d_{x,y,\beta}(H)=1+\sum\limits_{m\geq 1}\sum\limits_{\bm W\in\mathcal{C}_m}\prod\limits_{\bm V\in P_{c,\max}(\bm W)}\frac{\bm\lambda^{\bm V}}{\bm V!}\mathcal{D}_{\bm V}(d_{x,y,\beta}(H)).
\end{align}

Directly compute $d_{x,y,\beta}(H)$ is generally hard, however, we can compute the function $\log(d_{x,y,\beta}(H))$ when the inverse temperature $\beta$ is smaller than a threshold. We consider the formal Taylor series 
\begin{align}
    \log(1+z)=\sum\limits_{k=1}^{\infty}\frac{(-1)^{k-1}}{k}z^k.
\end{align}
Taking $d_{x,y,\beta}(H)$ into above series may yield
\begin{eqnarray}
    \begin{split}
        \log\left(d_{x,y,\beta}(H)\right)=\sum\limits_{k=1}^{\infty}\frac{(-1)^k}{k}\sum\limits_{m_1,\cdots,m_k}\sum\limits_{\bm W_1\in\mathcal{C}_{m_1},\cdots,\bm W_k\in\mathcal{C}_{m_k}}&\left(\prod\limits_{\bm V_1\in P_{c,\max}(\bm W_1)}\frac{\lambda^{\bm V_1}}{\bm V_1!}\mathcal{D}_{\bm V_1}(d_{x,y,\beta}(H))\right)\cdots\\
        &\cdots\left(\prod\limits_{\bm V_k\in P_{c,\max}(\bm W_k)}\frac{\lambda^{\bm V_k}}{\bm V_1!}\mathcal{D}_{\bm V_k}(d_{x,y,\beta}(H))\right).
    \end{split}
    \label{Eq:logcluster}
\end{eqnarray}
Using the similar method mentioned in Ref.~\cite{wild2023classical}, we can re-group the sums over all clusters, that is
\begin{align}
    \log\left(d_{x,y,\beta}(H)\right)=\sum\limits_{m\geq 1}\sum\limits_{\bm W\in\mathcal{G}_m}\sum\limits_{P\in\mathcal{P}_c(\bm W)}C(P)\prod\limits_{\bm V\in P}\frac{\lambda^{\bm V}}{\bm V!}\mathcal{D}_{\bm V}(d_{x,y,\beta}(H)).
\end{align}
Here, $\mathcal{G}_m$ represents the set of all connected clusters $\bm W$ with size $m$, $\mathcal{P}_c(\bm W)$ represents the set of all partitions $P$ to the cluster $\bm W$, and the coefficient $C(P)$ can be determined by considering the different ways in which the partition $P$ can be generated by the clusters $\bm W_1,\cdots,\bm W_k$ given by Eq.~\ref{Eq:logcluster}.

\begin{lemma}[Proposition~9 in Ref.~\cite{wild2023classical}]
    Let the cluster $\bm W\in\mathcal{G}_m$, then we have
    \begin{align}
        \abs{\sum\limits_{P\in\mathcal{P}_c(\bm W)}C(P)\prod\limits_{\bm V\in P}\frac{\lambda^{\bm V}}{\bm V!}\mathcal{D}_{\bm V}(d_{x,y,\beta}(H))}\leq \left[2e(\mathfrak{d}+1)\abs{\beta}\right]^m.
    \end{align}
    \label{lemma:error1}
\end{lemma}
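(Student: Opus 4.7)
The plan is to separate analytic from combinatorial contributions and then invoke a tree-graph/Kotecký--Preiss-style bound on the remaining combinatorics. First, I would bound every ingredient of a single summand pointwise. Since $\mathcal{D}_{\bm V}(d_{x,y,\beta}(H))=(-\beta)^{|\bm V|}\langle y|h^{\bm V}|x\rangle$ and each $h_X$ is a Pauli term with $\|h_X\|=1$, I get the clean estimate $|\mathcal{D}_{\bm V}(d_{x,y,\beta}(H))|\leq|\beta|^{|\bm V|}$. Combined with $|\bm\lambda^{\bm V}|\leq 1$ from $|\lambda_X|\leq 1$, each partition contributes at most $\prod_{\bm V\in P}|\beta|^{|\bm V|}/\bm V!$. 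Because $P$ is a partition of $\bm W$, $\sum_{\bm V\in P}|\bm V|=|\bm W|=m$, so the $|\beta|$ factors combine to a clean $|\beta|^m$, which is pulled outside the sum.

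Next I would reduce the claim to a purely combinatorial estimate: it suffices to show
\[
\sum_{P\in\mathcal{P}_c(\bm W)}|C(P)|\prod_{\bm V\in P}\frac{1}{\bm V!}\;\leq\;[2e(\mathfrak{d}+1)]^{m}.
\]
To do this I would recall how the coefficients $C(P)$ arose, namely from the $k$-th power of the $d_{x,y,\beta}(H)-1$ series combined with the Taylor coefficients of $\log(1+z)$. Re-expressing $C(P)$ as a signed count of ordered tuples $(\bm W_1,\dots,\bm W_k)$ of connected clusters whose multiset union equals $\bm W$, one recognises the usual Ursell/Mayer structure. The standard trick is to upper-bound $|C(P)|$ by the count of spanning trees on the blocks of $P$ viewed inside $G_{\bm W}$, which converts the signed sum into a positive tree sum that can be estimated blockwise.

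The heart of the argument—and the step I expect to be the main obstacle—is then to bound the resulting tree sum by $[2e(\mathfrak{d}+1)]^m$. I would follow the standard Kotecký--Preiss / Ursell recipe: fix a rooted spanning tree of $G_{\bm W}$, use the maximum-degree bound $\mathfrak{d}$ to control the number of children at every vertex, and absorb the multiplicities $\mu_{\bm W}(X)$ and $\bm V!$ factors via $\sum_{\bm V}|\beta|^{|\bm V|}/\bm V!\leq e^{|\beta|}$-type Taylor sums at every node. Each edge of the interaction graph then contributes at most a factor $(\mathfrak{d}+1)$ (accounting for the at most $\mathfrak{d}$ overlapping subsystems plus the node itself), and the Taylor-exponential manipulation contributes the factor $e$ per cluster element, while the factor $2$ accounts for orientation and the choice to absorb $1/m!$ into the tree-labelling, giving exactly the claimed $[2e(\mathfrak{d}+1)]^m$.

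Finally I would assemble the pieces: multiplying the $|\beta|^m$ factor pulled out in the first step by the combinatorial bound $[2e(\mathfrak{d}+1)]^m$ yields $[2e(\mathfrak{d}+1)|\beta|]^m$, which is the stated inequality. The rigorous execution of the tree-graph inequality with the correct constants—tracking multiplicities of repeated subsystems in multisets, and verifying that the local sums at each node actually close up to the claimed $e(\mathfrak{d}+1)$ per vertex—is the step most prone to subtle off-by-one factors; everything else is bookkeeping on the previously derived cluster expansion.
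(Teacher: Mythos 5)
The paper does not prove this statement at all: it is imported verbatim as Proposition~9 of Ref.~\cite{wild2023classical} and used as a black box (its only role is to feed the geometric-series truncation bound that follows it in Sec.~\ref{sec:small_cluster}). So there is no internal proof to compare your attempt against; the relevant comparison is with the proof in that reference, which indeed proceeds by the same general cluster-expansion machinery you invoke.

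Judged on its own terms, your proposal has the right architecture but a genuine gap. The preliminary reductions are correct and cheap: $\abs{\mathcal{D}_{\bm V}(d_{x,y,\beta}(H))}\leq\abs{\beta}^{\abs{\bm V}}$ (products of Pauli terms have unit norm and $\ket{x},\ket{y}$ are normalized product states), $\abs{\bm\lambda^{\bm V}}\leq 1$, and $\sum_{\bm V\in P}\abs{\bm V}=m$ lets you pull out $\abs{\beta}^m$. But the statement you reduce to, $\sum_{P\in\mathcal{P}_c(\bm W)}\abs{C(P)}\prod_{\bm V\in P}1/\bm V!\leq[2e(\mathfrak{d}+1)]^m$, is not a smaller technical step --- it \emph{is} the lemma, and your treatment of it is an appeal to ``the standard Kotecký--Preiss / Ursell recipe'' rather than an argument. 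Three concrete issues. First, a naive triangle inequality over $\mathcal{P}_c(\bm W)$ cannot work by itself: the number of partitions of an $m$-element multiset grows like a Bell number, i.e.\ super-exponentially, so everything hinges on bounding $\abs{C(P)}$ by spanning-tree counts and on how those counts interact with the degree bound $\mathfrak{d}$ and the $1/\bm V!$ factors; none of this is executed. Second, your accounting of the constants (``$e$ per cluster element from a Taylor sum'', ``$2$ for orientation and absorbing $1/m!$'') reads as reverse-engineered from the answer rather than derived; in the actual argument these factors come from specific counting estimates (of the flavor $\abs{\mathcal{G}_m}\leq\abs{S}(e\mathfrak{d})^m$ together with a tree-graph inequality), and getting them wrong by even a constant would shift the threshold $\beta^*$ and hence the accuracy limit $\varepsilon^*$ appearing in the paper's main theorems. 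Third, by moving absolute values inside the partition sum you are committing to a statement strictly stronger than the lemma (all cancellations among the signed $C(P)$ and the phases of $(-\beta)^{\abs{\bm V}}\langle y|h^{\bm V}|x\rangle$ are discarded); the standard machinery does deliver this stronger form, but that should be stated and justified, not passed over silently. In short: a plausible plan in the same style as the cited reference's proof, but the heart of the bound is asserted rather than proven.
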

Above lemma enables us to approximate $\log\left(d_{x,y,\beta}(H)\right)$ by truncating the cluster expansion up to $M\leq\mathcal{O}(\log(\abs{S}R/\epsilon))$ order when $\abs{\beta}<1/2e^2\mathfrak{d}(\mathfrak{d}+1)$, that is
\begin{eqnarray}
    \begin{split}
        &\abs{\log\left(d_{x,y,\beta}(H)\right)-\sum\limits_{m=1}^M\sum\limits_{\bm W\in\mathcal{G}_m}\sum\limits_{P\in\mathcal{P}_c(\bm W)}C(P)\prod\limits_{\bm V\in P}\frac{\lambda^{\bm V}}{\bm V!}\mathcal{D}_{\bm V}(d_{x,y,\beta}(H))}\\
        =&\abs{\sum\limits_{m\geq M+1}\sum\limits_{\bm W\in\mathcal{G}_m}\sum\limits_{P\in\mathcal{P}_c(\bm W)}C(P)\prod\limits_{\bm V\in P}\frac{\lambda^{\bm V}}{\bm V!}\mathcal{D}_{\bm V}(d_{x,y,\beta}(H))}\\
        \leq&\sum\limits_{m\geq M+1}\sum\limits_{\bm W\in\mathcal{G}_m}\left[2e(\mathfrak{d}+1)\abs{\beta}\right]^m\\
        \leq&\sum\limits_{m\geq M+1}\abs{S}\left[2e^2\mathfrak{d}(\mathfrak{d}+1)\abs{\beta}\right]^m\\
        =&\frac{\abs{S}\left[2e^2\mathfrak{d}(\mathfrak{d}+1)\abs{\beta}\right]^{M+1}}{1-\left[2e^2\mathfrak{d}(\mathfrak{d}+1)\abs{\beta}\right]}.
    \end{split}
\end{eqnarray}
The third line comes from Lemma~\ref{lemma:error1}, and the fourth line is valid since $\abs{\mathcal{G}_m}\leq \abs{S}(e\mathfrak{d})^m$, with $\abs{S}$ represents the number of local terms in the Hamiltonian $H$. Let 
\begin{align}
    \epsilon=\frac{\abs{S}\left[2e^2\mathfrak{d}(\mathfrak{d}+1)\abs{\beta}\right]^{M+1}}{1-\left[2e^2\mathfrak{d}(\mathfrak{d}+1)\abs{\beta}\right]},
\end{align}
and this directly yields 
\begin{align}
    M\leq\frac{\log\left(\frac{\abs{S}}{\epsilon[1-\left[2e^2\mathfrak{d}(\mathfrak{d}+1)\abs{\beta}\right]]}\right)}{\log(1/\left[2e^2\mathfrak{d}(\mathfrak{d}+1)\abs{\beta}\right])}.
\end{align}
As a result, we can evaluate the running time complexity for computing the truncated $M$-order Taylor series. It is shown that the connected cluster set satisfies $\abs{\mathcal{G}_m}\leq \abs{S}(e\mathfrak{d})^m$. Recall $\mathcal{P}_c(\bm W)$ represents all connected partitions given by the cluster $\bm W=(\bm W_1,\cdots,\bm W_m)$, and enumerating all partitions of $\bm W$ into connected subclusters takes time $\exp(\mathcal{O}(M))$ (Details refer to Proposition~11 in Ref.~\cite{wild2023classical}). Finally, the coefficient $C(P)$ can be computed in time $\exp(\mathcal{O}(\abs{P}))$ by using the algorithm given by Ref.~\cite{bjorklund2008fast}. Taking all together, we can summarize that there exists a $$\abs{S}\exp(\mathcal{O}(M))=\abs{S}{\rm poly}\left[\left(\frac{\abs{S}}{\epsilon[1-\left[2e^2\mathfrak{d}(\mathfrak{d}+1)\abs{\beta}\right]}\right)^{\log(1/\left[2e^2\mathfrak{d}(\mathfrak{d}+1)\abs{\beta}\right])}\right]$$ running time classical algorithm that can output an approximation to $\log\left(d_{x,y,\beta}(H)\right)$ within $\epsilon$ additive error. Equivalently, we can output an approximation $\hat{d}_{x,y,\beta}(H)$ such that 
\begin{align}
    e^{-\epsilon}\abs{\hat{d}_{x,y,\beta}(H)}\leq\abs{d_{x,y,\beta}(H)}\leq e^{\epsilon}\abs{\hat{d}_{x,y,\beta}(H)}.
\end{align}
This implies 
\begin{align}
    \frac{\abs{\hat{d}_{x,y,\beta}(H)-d_{x,y,\beta}(H)}}{\abs{d_{x,y,\beta}(H)}}=\abs{\frac{\hat{d}_{x,y,\beta}(H)}{d_{x,y,\beta}(H)}-1}\approx\abs{\log\left(\frac{\hat{d}_{x,y,\beta}(H)}{d_{x,y,\beta}(H)}\right)}=\abs{\log(\hat{d}_{x,y,\beta}(H))-\log(d_{x,y,\beta}(H))}\leq\epsilon,
\end{align}
where the second approximation is valid since $e^{-\epsilon}\leq \abs{\frac{\hat{d}_{x,y,\beta}(H)}{d_{x,y,\beta}(H)}}\leq e^{\epsilon}$. As a result, we can utilize estimators $\hat{d}_{x,y,\beta}(H)$ to construct an estimator to $D_{\beta}(H)$, such that
\begin{eqnarray}
    \begin{split}
        \abs{D_{\beta}(H)-\hat{D}_{\beta}(H)}&=\abs{\sum\limits_{x,y}a_xa^*_y\left(d_{x,y,\beta}(H)-\hat{d}_{x,y,\beta}(H)\right)}\\
        &\leq\sum\limits_{x,y}\abs{a_xa_y^*}\abs{d_{x,y,\beta}(H)-\hat{d}_{x,y,\beta}(H)}\\
        &\leq \epsilon\sum\limits_{x,y}\abs{a_xa_y^*}\abs{d_{x,y,\beta}(H)}\\
        &\leq \epsilon\max_{x,y}\abs{d_{x,y,\beta}(H)}\\
        &\leq \epsilon e^{-\beta E_0/2},
    \end{split}
\end{eqnarray}
where $E_0$ represents the ground state energy of $H$. 
Finally, we notice that Eq.~\eqref{eq:partition_function} gives the targeted partition function, where a shift $x$ is introduced. The value of $x$ is determined by the RQITE algorithm as given by the interval $[E_a, E_0]$. This gives us 
\begin{eqnarray}
    \abs{D_{\beta}(H-x)-\hat{D}_{\beta}(H-x)}\leq \epsilon e^{-\beta (E_0-x)/2}\leq \epsilon,~x\in[E_a,E_0].
\end{eqnarray}
The above results can be summarized as follows:
\begin{theorem}\label{thm:partition_cluster}
    Given a local Hamiltonian $H$ with $\abs{S}$ local terms, a classical state $|\psi_c\rangle$ with $R$ configurations, arbitrary $x\in[E_a, E_0]$ and inverse temperature $0\leq \abs{\beta}<\beta^*=1/2e^2\mathfrak{d}(\mathfrak{d}+1)$, then there exists a classical algorithm that can output an estimator to $D_{\beta}(H-x)$, such that $\abs{\hat{D}_{\beta}(H-x)-D_{\beta}(H-x)}\leq \epsilon$ within
    \begin{align}\label{eq:partition_trunc_small}
        R^2\abs{S}{\rm poly}\left[\left(\frac{\abs{S}}{\epsilon[1-\abs{\beta}/\beta^*]}\right)^{\log(\beta^*/\abs{\beta})}\right]
    \end{align}
    classical running time, where $E_0$ represents the ground state energy of $H$.
\end{theorem}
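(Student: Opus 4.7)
The plan is to approximate the partition function $D_\beta(H-x)$ via a truncated cluster expansion of its logarithm, reducing the problem to counting and summing contributions from connected clusters of bounded size. Since the semi-classical guiding state admits the decomposition $|\psi_c\rangle=\sum_x a_x|x\rangle$ with at most $R$ product-state configurations, the task factorizes as $D_\beta(H-x)=\sum_{x,y}a_xa_y^*\,d_{x,y,\beta}(H-x)$, so it suffices to approximate each $d_{x,y,\beta}(H-x)=\langle y|e^{-\beta(H-x)}|x\rangle$ independently and take the weighted sum.

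The central step is to Taylor expand $e^{-\beta(H-x)}$ to obtain the formal series for $d_{x,y,\beta}(H-x)$ indexed by multi-sets of local terms (clusters), and then pass to $\log d_{x,y,\beta}(H-x)$ via the series $\log(1+z)=\sum_{k\ge1}(-1)^{k-1}z^k/k$. Re-grouping the resulting multi-cluster sums exactly as in Ref.~\cite{wild2023classical} yields a sum over connected clusters $\bm W\in\mathcal{G}_m$ of partition-weighted contributions. Applying Lemma~\ref{lemma:error1}, each contribution at order $m$ is bounded in magnitude by $[2e(\mathfrak{d}+1)|\beta|]^m$, while the number of connected clusters satisfies $|\mathcal{G}_m|\le|S|(e\mathfrak{d})^m$; combining these gives a geometric tail bounded by $|S|(2e^2\mathfrak{d}(\mathfrak{d}+1)|\beta|)^{M+1}/(1-|\beta|/\beta^*)$, which converges precisely when $|\beta|<\beta^*=1/(2e^2\mathfrak{d}(\mathfrak{d}+1))$. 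Setting this tail equal to a target precision and solving for the truncation order yields $M=\mathcal{O}\bigl(\log(|S|/(\epsilon(1-|\beta|/\beta^*)))/\log(\beta^*/|\beta|)\bigr)$.

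Next I would convert the bound on $\log d_{x,y,\beta}$ into a bound on $d_{x,y,\beta}$ itself. If $|\log\hat d - \log d|\le\epsilon'$, then $|\hat d/d - 1|\le \epsilon'(1+o(1))$, so $|\hat d - d|\le \epsilon'|d|\le \epsilon' e^{-\beta(E_0-x)/2}$. Since $x\in[E_a,E_0]$ forces $e^{-\beta(E_0-x)/2}\le 1$, the relative error on the logarithm transfers directly to additive error on $d_{x,y,\beta}(H-x)$, and summing over the $R^2$ pairs weighted by $|a_xa_y^*|\le 1$ gives additive error at most $\epsilon$ on $D_\beta(H-x)$ after rescaling $\epsilon'$ appropriately.

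Finally, the runtime accounting multiplies three factors: enumerating connected clusters up to size $M$ costs $|S|(e\mathfrak{d})^M$, enumerating partitions of each cluster into connected subclusters costs $\exp(\mathcal{O}(M))$, and the coefficients $C(P)$ are computable in $\exp(\mathcal{O}(|P|))$ time via the algorithm of Ref.~\cite{bjorklund2008fast}; the resulting per-pair cost is $\mathrm{poly}\bigl[(|S|/(\epsilon(1-|\beta|/\beta^*)))^{\log(\beta^*/|\beta|)}\bigr]$, and multiplying by $R^2$ for the pairs and $|S|$ for the cluster enumeration prefactor yields exactly Eq.~\eqref{eq:partition_trunc_small}. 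The main obstacle I anticipate is tightly controlling the conversion from logarithm-error to partition-function-error uniformly in $x\in[E_a,E_0]$ without picking up factors of $e^{\beta\|H\|}$; this is handled cleanly here because the constraint $x\le E_0$ ensures $-\beta(H-x)\preceq 0$, keeping $|d_{x,y,\beta}(H-x)|\le 1$ and preventing blow-up in the relative-to-additive conversion.
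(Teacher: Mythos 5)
Your proposal is correct and follows essentially the same route as the paper's proof: the same decomposition of $D_\beta(H-x)$ into $R^2$ product-state matrix elements $d_{x,y,\beta}$, the same cluster expansion of $\log d_{x,y,\beta}$ truncated at order $M$ using Lemma~\ref{lemma:error1} and the bound $\abs{\mathcal{G}_m}\leq\abs{S}(e\mathfrak{d})^m$, the same relative-to-additive error conversion exploiting $x\leq E_0$ so that $\abs{d_{x,y,\beta}(H-x)}\leq 1$, and the same runtime accounting via cluster/partition enumeration and the coefficient algorithm of Ref.~\cite{bjorklund2008fast}. No substantive differences from the paper's argument.
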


\subsection{Extend the inverse temperature $\beta$ to a general constant}
\label{sec:analytic_continuation}
Following the analytic continuation method given by Ref.~\cite{wild2023classical}, we consider the map $\beta\mapsto\beta\phi(z)$, where the complex variable function $$\phi(z)=\frac{\log(1-z/\nu)}{\log(1-1/\nu^{\prime})}$$ satisfies (i) $\phi(0)=0$, (ii) $\phi(1)=1$ and (iii) $\phi(z)$ is analytic on a disk $\abs{z}\leq \nu$, where ${\rm Im}[\phi(z)]\leq w$ for some constant value $w$ and $1<\nu<\nu^{\prime}$. More specifically, let the complex function $f(z)=\log(D_{\beta\phi(z)}(H))$, and we consider utilizing the complex Taylor approximation method (Lemma~5 in Ref~\cite{wild2023classical}) to upper bound 
\begin{align}
    \epsilon_M(z)=\abs{f(z)-\sum\limits_{m=0}^M\frac{f^{(m)}(0)}{m!}z^m}\leq\frac{\alpha^{M+1}}{1-\alpha}\max_{z\in D_{\alpha \nu}}\abs{f(z)},
    \label{ineq:complextaylor}
\end{align}
where the parameter $\alpha\in(0,1)$. In the following paragraphs, we demonstrate how to upper bound $\abs{f(z)}$ on the closed disk $D_{\alpha \nu}$ for $\alpha\in(0,1)$. 

To do this, we assume that the initial state $|\psi_c\rangle$ can be prepared by a constant depth quantum circuit $U$, that is $|\psi_c\rangle=U|0^n\rangle$. Let the modified Hamiltonian $H^{\prime}=U^{\dagger}HU$, then we have
\begin{eqnarray}
    \begin{split}
        \max_{z\in D_{\alpha \nu}}\abs{\log(D_{\beta\phi(z)}(H))}&=\max_{z\in D_{\alpha \nu}}\abs{\log\left(\langle\psi_c|e^{-\beta\phi(z)H}|\psi_c\rangle
        \right)}\\
        &=\max_{z\in D_{\alpha \nu}}\abs{\log\left(\langle0^n|U^{\dagger}e^{-\beta\phi(z)H}U|0^n\rangle
        \right)}\\
        &=\max_{z\in D_{\alpha \nu}}\abs{\log\left(\langle0^n|e^{-\beta\phi(z)H^{\prime}}|0^n\rangle
        \right)}\\
        &=\max_{z\in D_{\alpha \nu}}\abs{\log\left(\langle0^n|e^{-{\rm Re}[\beta\phi(z)]H^{\prime}}e^{-{\rm Im}[\beta\phi(z)]H^{\prime}}|0^n\rangle\right)}.
    \end{split}
\end{eqnarray}
We denote the normalized quantum states $$|\phi\rangle=c^{-1/2}e^{-{\rm Re}[\beta\phi(z)]H^{\prime}}|0^n\rangle=\frac{e^{-{\rm Re}[\beta\phi(z)]H^{\prime}}|0^n\rangle}{\sqrt{\langle0^n|e^{-2{\rm Re}[\beta\phi(z)]H^{\prime}}}|0^n\rangle}=\sum_{x}\phi_x|x\rangle$$ and $$|\psi\rangle=e^{-{\rm Im}[\beta\phi(z)]H^{\prime}}|0^n\rangle|0^n\rangle=\sum_x\psi_x|x\rangle,$$ where $|x\rangle$ represents a set of computational basis. Without loss of generality, we assume $\abs{\phi_x}\geq 2^{-{\rm poly}(n)}$, $\abs{\psi_x}\geq 2^{-{\rm poly}(n)}$ or $\phi_x\neq 0$ and $\psi_x\neq 0$. Otherwise, the component may be negligible regarding the semi-classical state.

Using above notations, we upper bound $\max_{z\in D_{\alpha \nu}}\abs{\log(D_{\beta\phi(z)}(H))}$ from two perspectives. On one hand, if $\abs{\psi_x}<\abs{\phi_x}$, we have 
\begin{eqnarray}\label{eq:op_norm0}
    \begin{split}
         &\max_{z\in D_{\alpha \nu}}\abs{\log\left(\langle0^n|e^{-{\rm Re}[\beta\phi(z)]H^{\prime}}e^{-(\beta\phi(z)-{\rm Re}[\beta\phi(z)])H^{\prime}}|0^n\rangle\right)}\\
         =&\max_{z\in D_{\alpha \nu}}\abs{\log\left(c^{1/2}\langle\phi|e^{-i{\rm Im}[\beta\phi(z)]H^{\prime}}|0^n\rangle\right)}\\
         =&\max_{z\in D_{\alpha \nu}}\abs{\log\left(c^{1/2}\sum_{x}\phi_x^*\langle x|e^{-i{\rm Im}[\beta\phi(z)]H^{\prime}}|0^n\rangle\right)}\\
         \leq&\max_{z\in D_{\alpha \nu}}\abs{\log\left(c^{1/2}\right)}+\max_{z\in D_{\alpha \nu}}\abs{\log\left(\sum_{x}\abs{\phi_x}^2\frac{\psi_x}{\phi_x}\right)}\\
         \leq&\max_{z\in D_{\alpha \nu}}\frac{1}{2}\abs{\log(e^{-2{\rm Re}(\beta\phi(z))E_0})}+\max_{z\in D_{\alpha \nu}}\sum_{x}\abs{\phi_x}^2\abs{\log\left(\frac{\psi_x}{\phi_x}\right)}\\
         \leq& \max_{z\in D_{\alpha \nu}}\frac{1}{2}\abs{\log(e^{-2{\rm Re}(\beta\phi(z))E_0})}+ \max_{z\in D_{\alpha \nu},x}\abs{\log(\psi_x)}+\max_{z\in D_{\alpha \nu},x}\abs{\log(\phi_x)}.
    \end{split}
\end{eqnarray}
The fifth line is valid since $\abs{\log\left(\sum_{x}\abs{\phi_x}^2\frac{\psi_x}{\phi_x}\right)}=\abs{\log\left(\abs{\sum_{x}\abs{\phi_x}^2\frac{\psi_x}{\phi_x}}\right)+i\theta}$, where $\theta=\mathcal{O}(1)$ represents the phase of $z=\sum_{x}\abs{\phi_x}^2\frac{\psi_x}{\phi_x}$. We then take $m=2$ and $\vec{a}_k=\pm \vec{1}$ into Theorem~4 of Ref.~\cite{ansari2005more}, and we may lower bound $\abs{z}$ by $2^{-n}\sum_x\abs{\phi_x}^2\abs{\frac{\psi_x}{\phi_x}}$. Finally, apply the Jensen inequality to function $f(z)=\abs{\log2^{-n}\sum_x\abs{\phi_x}^2\abs{\frac{\psi_x}{\phi_x}}}$ yielding the fifth line, where we omit the phase $\theta$ and a term $n$.

On other hand, if $\abs{\psi_x}>\abs{\phi_x}$, the upper bound
\begin{eqnarray}\label{eq:op_norm0}
    \begin{split}
         &\max_{z\in D_{\alpha \nu}}\abs{\log\left(\langle0^n|e^{-{\rm Re}[\beta\phi(z)]H^{\prime}}e^{-(\beta\phi(z)-{\rm Re}[\beta\phi(z)])H^{\prime}}|0^n\rangle\right)}\\
         =&\max_{z\in D_{\alpha \nu}}\abs{\log\left(\langle0^n|e^{-{\rm Re}[\beta\phi(z)]H^{\prime}}|\psi\rangle\right)}\\
         =&\max_{z\in D_{\alpha \nu}}\abs{\log\left(\sum_{x}\psi_x\langle0^n|e^{-{\rm Re}[\beta\phi(z)]H^{\prime}}|x\rangle\right)}\\
         \leq&\max_{z\in D_{\alpha \nu}}\abs{\log\left(c^{1/2}\right)}+\max_{z\in D_{\alpha \nu}}\abs{\log\left(\sum_{x}\abs{\psi_x}^2\frac{\langle0^n|e^{-{\rm Re}[\beta\phi(z)]H^{\prime}}|x\rangle}{c^{1/2}\psi^{*}_x}\right)}\\
         \leq&\max_{z\in D_{\alpha \nu}}\frac{1}{2}\abs{\log(e^{-2{\rm Re}(\beta\phi(z))E_0})}+\max_{z\in D_{\alpha \nu}}\sum_{x}\abs{\psi_x}^2\abs{\log\left(\frac{\phi_x}{\psi^*_x}\right)}\\
         \leq& \max_{z\in D_{\alpha \nu}}\frac{1}{2}\abs{\log(e^{-2{\rm Re}(\beta\phi(z))E_0})}+ \max_{z\in D_{\alpha \nu},x}\abs{\log(\psi_x)}+\max_{z\in D_{\alpha \nu},x}\abs{\log(\phi_x)}.
    \end{split}
\end{eqnarray}
is still valid.

As a result, in both scenarios ($\abs{\psi_x}>\abs{\phi_x}$ or $\abs{\psi_x}\leq\abs{\phi_x}$), we have
\begin{eqnarray}\label{eq:op_norm_final}
    \begin{split}
       &\max_{z\in D_{\alpha \nu}}\abs{\log(D_{\beta\phi(z)}(H))}\\
       \leq&\max_{z\in D_{\alpha \nu}}\frac{1}{2}\abs{\log(e^{-2{\rm Re}(\beta\phi(z))E_0})}+ \max_{z\in D_{\alpha \nu},x}\abs{\log(\psi_x)}+\max_{z\in D_{\alpha \nu},x}\abs{\log(\phi_x)}\\
       \leq & \max_{z\in D_{\alpha \nu}}\frac{1}{2}\abs{\log(e^{-2{\rm Re}(\beta\phi(z))E_0})}+\sum\limits_{m\geq 1}(e\mathfrak{d})^m\sum\limits_{P\in\mathcal{P}_c(\bm W)}C(P)\prod\limits_{\bm V\in P}\frac{\lambda^{\bm V}}{\bm V!}\max_{z}\abs{-{\rm Im}(\beta\phi(z))}^{|\bm V|}+\max_{z\in D_{\alpha \nu},x}\abs{\log(\phi_x)}\\
       \leq & \abs{{\rm Re}(\beta\phi(z))E_0}+\abs{S}\frac{2e^2\mathfrak{d}(\mathfrak{d}+1)w\beta}{1-2e^2\mathfrak{d}(\mathfrak{d}+1)w\beta}+{\rm poly}(n).
    \end{split}
    \label{ineqmax}
\end{eqnarray}

Taking \ref{ineqmax} into \ref{ineq:complextaylor}, we immediately have
\begin{align}
    \epsilon(1)\leq\frac{\alpha^{M+1}}{1-\alpha}\left(\abs{\beta E_0}+\abs{S}\frac{2e^2\mathfrak{d}(\mathfrak{d}+1)w\beta}{1-2e^2\mathfrak{d}(\mathfrak{d}+1)w\beta}+{\rm poly}(n)\right).
\end{align}
Let $\alpha=1/\nu$, $w=\frac{-\pi}{2\log(1-1/\nu^{\prime})}$ with constant values $\nu, \nu^{\prime}$ satisfying $1<\nu<\nu^{\prime}$. Besides, for the choice of $\nu^\prime$, we let $w=0.5(\beta^*/\beta)$, where $\beta^*=1/2e^2\mathfrak{d}(\mathfrak{d}+1)$. Then, we finally have
\begin{align}
    M\leq e^{2\pi\beta/\beta^*}\log\left[\frac{ e^{2\pi\beta/\beta^*}}{\epsilon}\left(\abs{\beta E_0}+\rm{poly}(\abs{S})\right)\right].
\end{align}


From the above analysis, the analytic continuation method requires conditions: (i) ${\rm Im}(\phi(z))\leq w$, (ii) $w\beta<1/2e^2\mathfrak{d}(\mathfrak{d}+1)$, and (iii) $D_{\beta\phi(z)}(H)$ is zero free when $\abs{z}\leq\nu$. In Theorem~\ref{thm:zero-free}, we rigorously proved that ${\rm Re}(\beta)>0$ enable the zero free property of $D_{\beta\phi(z)}(H)$. Combine all constraints together, the analytic domain of $f(z)=\log(D_{\beta\phi(z)}(H))$ can be characterized by
\begin{eqnarray}
    \left\{
        \begin{aligned}
        -1/2e^2\mathfrak{d}(\mathfrak{d}+1)<&{\rm Im}[\beta\phi(z)]<1/2e^2\mathfrak{d}(\mathfrak{d}+1)\\
        0<&{\rm Re}[\beta\phi(z)].\\
        \end{aligned}
    \right.
\end{eqnarray}

\subsection{Classical Algorithm and Complexity Analysis}
\label{sec:arbitrary_complexity}
Here, we provide technical details in evaluating the function
\begin{align}\label{eq:target}
    f(1)=\log\left(D_{\beta\phi(1)}(H)\right)=\log\left(\langle0^n|e^{-\beta\phi(1)H^{\prime}}|0^n\rangle\right).
\end{align}
Let $\log\left(\langle0^n|e^{-\beta\phi(1)H^{\prime}}|0^n\rangle\right)=\sum_{l=0}^{\infty}A_l\beta^l$ and $\phi(z)=\sum_{l=0}^{\infty}\phi_lz^l$. According to the Taylor series in the complex plane, we may approximate $f(1)$ by
\begin{eqnarray}
    \begin{split}
         f(1)=&\sum\limits_{m=0}^M\frac{f^{(m)}(0)}{m!}\\
         =&\sum\limits_{m=0}^M\left(\sum\limits_{l=1}^mA_l\beta^l\sum\limits_{m_1,\cdots,m_l\geq 1, m_1+\cdots+m_l=k}\phi_{m_1}\cdots\phi_{m_l}\right),
    \end{split}
\end{eqnarray}
    where
    \begin{align}
        A_l=\sum\limits_{\bm W\in\mathcal{G}_l}(-1)^{\bm W}\sum\limits_{P\in\mathcal{P}_c(\bm W)}C(P)\prod\limits_{\bm V\in P}\frac{\lambda^{\bm V}}{\bm V!}\langle 0^n|h^{\bm V}|0^n\rangle
    \end{align}
is given by the cluster expansion method. As a result, the computational complexity is upper bounded by
\begin{align}\label{eq:analy_complex}
    \exp(M)\leq \left[\frac{e^{2\pi \beta/\beta^*}}{\epsilon}\left(\abs{\beta E_0}+\rm{poly}(\abs{S})\right)\right]^{e^{2\pi \beta/\beta^*}}
\end{align}

We conclude this section with the above results summarized by the following theorem. Besides, we focus on the case that $\rm{Re}(\beta)\geq 0$ to facilitate later discussion for the dequantization of the RQITE algorithm.
\begin{theorem}\label{thm:partition_analytic_continuation}
    Given a local Hamiltonian $H$ with $\abs{S}$ local terms, a classical state $|\psi_c\rangle$ with $R$ configurations, arbitrary $x\in[E_a, E_0]$, then there exists a classical algorithm that can output an estimator to $f(H-x)=\log(D_{\beta}(H-x))$, such that $\abs{\hat{f}(H-x)-f(H-x)}\leq \epsilon$ for $\beta\in\mathbb{C}$, $\rm{Re}(\beta)\geq 0$ within classical running time
    \begin{align}\label{eq:analytic_cost}
        \left[\frac{e^{2\pi \beta/\beta^*}}{\epsilon}\rm{poly}(\abs{S})\right]^{e^{2\pi \beta/\beta^*}},
    \end{align}
    where $\beta^*=1/2e^2\mathfrak{d}(\mathfrak{d}+1)$.
\end{theorem}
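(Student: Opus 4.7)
The plan is to chain together the analytic-continuation machinery developed earlier in Section~\ref{sec:analytic_continuation} with the small-$\beta$ cluster expansion of Theorem~\ref{thm:partition_cluster}, thereby extending the region of applicability from $\abs{\beta}<\beta^*$ to arbitrary $\beta\in\mathbb{C}$ with ${\rm Re}(\beta)\geq 0$. I would introduce the conformal map $\phi(z)=\log(1-z/\nu)/\log(1-1/\nu')$ with constants $1<\nu<\nu'$ chosen so that $\phi(0)=0$, $\phi(1)=1$, and $\phi$ is analytic on $\{\abs{z}\leq\nu\}$, and define $f(z):=\log(D_{\beta\phi(z)}(H-x))$. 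The task then reduces to approximating $f(1)$ by the truncated Taylor series $\sum_{m=0}^M f^{(m)}(0)/m!$ expanded around $z=0$.

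For the truncation error I would invoke Lemma~5 of Ref.~\cite{wild2023classical}, which yields $\epsilon_M(1)\leq\tfrac{\alpha^{M+1}}{1-\alpha}\max_{\abs{z}\leq\alpha\nu}\abs{f(z)}$ for some $\alpha\in(0,1)$. Two ingredients are needed: (i) $f$ is analytic on a disk strictly containing $z=1$, which reduces to showing that $D_{\beta\phi(z)}(H-x)$ has no zeros there; and (ii) a $\poly(\abs{S})$ bound on $\abs{f(z)}$ uniformly over that disk. For (i) I would appeal to Theorem~\ref{thm:zero-free}, which certifies that the partition function is nonvanishing whenever ${\rm Re}(\beta\phi(z))>0$, a condition secured by the hypothesis ${\rm Re}(\beta)\geq 0$ together with the geometry of $\phi$. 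For (ii) I would split $\beta\phi(z)$ into real and imaginary parts, write $\ket{\psi_c}=U\ket{0^n}$, and apply the decomposition leading to Eq.~\eqref{ineqmax}, handling both sub-cases $\abs{\psi_x}\gtrless\abs{\phi_x}$ to obtain $\max_{\abs{z}\leq\alpha\nu}\abs{f(z)}\leq \abs{{\rm Re}(\beta\phi(z))E_0}+\abs{S}\cdot\tfrac{2e^2\mathfrak{d}(\mathfrak{d}+1)w\beta}{1-2e^2\mathfrak{d}(\mathfrak{d}+1)w\beta}+\poly(n)$, where $w$ upper bounds $\abs{{\rm Im}(\phi(z))}$ on the disk.

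Parameter tuning then sets $\alpha=1/\nu$, $w=(\beta^*/\beta)/2$ (so that $w\beta<\beta^*$ keeps the denominator above bounded away from zero), and uses $w=-\pi/[2\log(1-1/\nu')]$ to fix $\nu'$ and hence $\nu$. A short computation yields $M\leq e^{2\pi\beta/\beta^*}\log[e^{2\pi\beta/\beta^*}\poly(\abs{S})/\epsilon]$. Each Taylor coefficient $f^{(m)}(0)/m!$ is then produced by expanding both $\log D_{\beta\phi(z)}(H-x)=\sum_l A_l(\beta\phi(z))^l$ (with $A_l$ supplied by the cluster-expansion subroutine of Theorem~\ref{thm:partition_cluster} in time $\exp(\mathcal{O}(l))$) and $\phi(z)=\sum_l\phi_l z^l$, then convolving them via the Cauchy product to extract the coefficient of $z^m$. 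Summing over $m\leq M$ produces total runtime $\exp(\mathcal{O}(M))$, matching Eq.~\eqref{eq:analytic_cost}.

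The chief technical obstacle is ingredient (ii): uniformly bounding $\abs{f(z)}$ over the enlarged disk while $f$ is a logarithm that could blow up near zeros or small values of $D_{\beta\phi(z)}(H-x)$. Both the zero-free region from Theorem~\ref{thm:zero-free} and the delicate case split $\abs{\psi_x}\gtrless\abs{\phi_x}$ used in Eq.~\eqref{eq:op_norm_final} are essential here, since they separate the real-time and imaginary-time contributions through the constant-depth preparation $\ket{\psi_c}=U\ket{0^n}$ and keep the logarithm finite. Once those analytic inputs are secured, balancing $\nu$, $\nu'$, $\alpha$, and $w$ so that both the analytic-continuation radius and the cluster-expansion convergence radius are simultaneously respected is essentially deterministic bookkeeping, and the stated runtime follows.
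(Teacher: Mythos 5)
Your overall route is the same as the paper's: it proves this theorem by combining the conformal map $\phi(z)$, the complex Taylor truncation bound, the case-split norm estimate of Eqs.~\eqref{eq:op_norm0}--\eqref{ineqmax}, the zero-free region of Theorem~\ref{thm:zero-free}, and the cluster-expansion computation of the coefficients $A_l$, exactly as you lay out. However, there is a genuine gap at the one step that is, in fact, the entire content of the paper's own proof of this theorem. Your uniform bound on $\max_{z}\abs{f(z)}$ still contains the term $\abs{{\rm Re}(\beta\phi(z))E_0}$, yet in the next sentence you assert $M\leq e^{2\pi\beta/\beta^*}\log\left[e^{2\pi\beta/\beta^*}\poly(\abs{S})/\epsilon\right]$ with the ground-state-energy term silently gone, so that your runtime matches Eq.~\eqref{eq:analytic_cost}. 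The machinery you invoke, as derived for the unshifted $D_{\beta\phi(z)}(H)$, only yields the weaker bound of Eq.~\eqref{eq:analy_complex}, which carries the extra factor $\abs{\beta E_0}$; you never justify its removal, and for physical Hamiltonians $\abs{E_0}$ grows with the system size.

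The paper closes exactly this hole by using the shift that appears in the theorem's hypothesis: for $x\in[E_a,E_0]$ the operator ${\rm Re}(\beta)(H-x)$ is positive semi-definite, hence the normalization factor entering Eq.~\eqref{eq:op_norm0} obeys $c^{1/2}=\sqrt{\langle 0^n|e^{-2{\rm Re}[\beta\phi(z)](H^{\prime}-x)}|0^n\rangle}\leq 1$ (the imaginary-time operator is trace non-increasing), so the ground-state-energy contribution never enters the bound at all. In your write-up the shift plays no role: you define $f(z)=\log(D_{\beta\phi(z)}(H-x))$ but then apply the unshifted estimate of Eq.~\eqref{ineqmax} verbatim. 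To repair the argument you must either redo that estimate for $H-x$ (the paper's route), or at minimum observe that $\abs{E_0}\leq\|H\|\leq\abs{S}$ under the normalization $\|h_X\|=1$, $\abs{\lambda_X}\leq1$, so that $\abs{\beta E_0}$ can be absorbed into $e^{2\pi\beta/\beta^*}\poly(\abs{S})$. Either fix works, but one of them must be stated; the first is the cleaner one and explains why the hypothesis $x\in[E_a,E_0]$ appears in the theorem at all.
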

\begin{proof}
The change of subject from $\log (D_{\beta}(H))$ to $\log (D_{\beta}(H-x))$ with $x\in[E_a, E_0]$ will affect the operator norm of the subject, which plays a key role in the complex Taylor approximation method. To see the effect, the dependence on the ground-state energy of the operator norm originates from the $c^{1/2}=\sqrt{\langle0^n|e^{-2{\rm Re}[\beta\phi(z)]H^{\prime}}}|0^n\rangle$ term in Eq.~\eqref{eq:op_norm0}. As the variable of the partition function changes from $H$ to $(H-x)$, $c^{1/2}$ accordingly becomes 
\begin{eqnarray}
\begin{split}
    &\sqrt{\langle0^n|e^{-2{\rm Re}[\beta\phi(z=1)](H^{\prime}-x)}|0^n\rangle}\\
    =&\sqrt{\langle \psi_c|e^{-2{\rm Re}[\beta](H-x)}|\psi_c\rangle}\\
    \leq&\sqrt{\langle \psi_0|e^{-2{\rm Re}[\beta](E_0-x)}|\psi_0\rangle}\leq 1.
\end{split}
\end{eqnarray}
In the first line, we have taken $z=1$ for the estimation of $f(H-x)$ as $\phi(1)=1$. The second line utilizes the fact that $H^\prime=U^\dagger H U$ and $\ket{\psi_c}=U\ket{0^n}$. In the last line, we substitute $\ket{\psi_c}$ with $\ket{\psi_0}$, i.e.~the ground state of $H$, because $\rm Re[\beta](H-x)$ is positive semi-definite for all choices of $x$, which makes the ITE operator $e^{-2{\rm Re}[\beta](H-x)}$ trace non-increasing. As such, $c^{1/2}$ is bounded by $1$ and the dependence on the ground-state energy is dismissed.
\end{proof}

\section{Dequantization of the RQITE algorithm}\label{sec:deq}
In this section, we provide details on the dequantization of the RQITE algorithms. 

To extend the results to arbitrary constant accuracy, one could resort to analytic continuation of the partition function as in Ref.~\cite{wild2023classical}. Yet, the analytic continuation relies on the logarithm of the partition function to be analytic, i.e.~identifying regions of the partition function to be zero-free. Such a problem is known to be challenging from the studying of the partition function of the Gibbs states, which is $\operatorname{Tr}(e^{-\beta H})$. The connection between our partition function and that of the Gibbs states is that our partition function can be viewed as the expectation value of a positive semi-definite observable w.r.t.~the Gibbs state: $\operatorname{Tr}(\rho_I e^{-\beta H}),~\rho_I \succcurlyeq 0$, which is studied in Ref.~\cite{mann2024algorithmic}. What makes the situation even worse is that the hardness of approximation results suggests that there exists a constant temperature $\beta_c$ such that $\beta>\beta_c$ becomes classically intractable~\cite{sly2012computational,galanis2016inapproximability,goldberg2017complexity,mann2024algorithmic}. The implications here are that one cannot in general approximate the Gibbs state at arbitrary constant temperature $\beta$.

Despite such a grim situation, we identify that in certain situations, the zero-free regions of our partition function could be pinpointed so that an arbitrary constant accuracy could be achieved for dequantizing the RQITE algorithm. One situation lets us circumvent the problem mentioned above by observing that an analytic region could be located when the ground-state component $p_0$ in the initial state is greater than a certain threshold.

\subsection{Classical algorithm for solving the GSEE problem with limited accuracy}
\label{sec:deq_limit}
We are now in place to describe the dequantized RQITE algorithm. The fact that cluster expansion only allows a limited imaginary evolution time will constrain the accuracy we can reach, which is summarized in the following.

\begin{theorem}[Dequantization of the RQITE algorithm with limited accuracy]\label{thm:deq_rqiteapp}
Let $\ket{\psi_c}=\sum_{i=1}^R \alpha_i\ket{i}$, be the semi-classical state, where $R=\mathcal{O}(\text{poly}(n))$. For a targeted local Hamiltonian $H$, assume that $\ket{\psi_c}$ has a lower-bounded overlap $\gamma$ with the ground state of $H$. Then, under Assumption~\ref{assume:gap_accuracy_relation}, there exists a classical algorithm with run time
\begin{align}\label{eq:partition_trunc_small}
        \frac{R^2\abs{S}}{\epsilon}{\rm poly}\left[\left(\frac{\abs{S}}{\gamma^2\beta\epsilon[1-\beta/\beta^*]}\right)^{\log(\beta^*/\beta)}\right],
    \end{align}
where $\beta=\Delta^{-1}\ln(\gamma^{-2}\epsilon^{-1})$ and $\beta^*=(2e^2\mathfrak{d}(\mathfrak{d}+1))^{-1}$ satisfies $\beta<\beta^*$, solves the GSEE problem in Definition \ref{def:gsee} subjects to 
\begin{eqnarray}
    \epsilon> \epsilon^*=2e^2\mathfrak{d}(\mathfrak{d}+1).
\end{eqnarray}
\end{theorem}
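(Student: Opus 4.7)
The plan is to dequantize the RQITE algorithm by replacing every quantum estimate of $D_\beta(H-x)$ and $D_{2\beta}(H-x)$ with the classical cluster-expansion estimator of Theorem~\ref{thm:partition_cluster}, while preserving the outer control structure (grid over $x$, residue $R(x)$, termination threshold $\Xi$) established in Lemma~\ref{lemma:beta_value}. All the heavy machinery has already been set up, so the proof reduces to choosing parameters consistently and tallying costs.

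First I would fix $\beta=\Delta^{-1}\ln(\gamma^{-2}\epsilon^{-1})$ as prescribed by Lemma~\ref{lemma:beta_value}, so that by Assumption~\ref{assume:gap_accuracy_relation} one has $\beta\leq 1/\epsilon$, and the residue $R(x)=D_\beta(H-x)-D_{2\beta}(H-x)$ discriminates the interval $[E_{\max},E_0-\epsilon]$ from $[E_0-\epsilon/2,E_0]$ whenever each partition function is known to additive error $\epsilon'=\Theta(\gamma^2\beta\epsilon)$. The RQITE outer loop scans $M=\mathcal{O}(\epsilon^{-1})$ values of $x$ over the promised interval $[E_a,E_b]$ and terminates at the first $x$ for which $\hat R(x)<\Xi$.

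Next I would verify admissibility of the cluster expansion. Combining $\beta\leq 1/\epsilon$ with $\beta^*=1/(2e^2\mathfrak{d}(\mathfrak{d}+1))$, the requirement that both $\beta$ and $2\beta$ lie strictly below $\beta^*$ reduces, up to an absorbable constant, to $\epsilon>\epsilon^*=2e^2\mathfrak{d}(\mathfrak{d}+1)$ — which is precisely the stated lower bound on the accuracy. Under this condition Theorem~\ref{thm:partition_cluster} (applied with target error $\epsilon'$) produces, for each grid point $x$ and each inverse temperature in $\{\beta,2\beta\}$, an estimator with runtime
\[
R^2|S|\,\mathrm{poly}\!\left[\left(\frac{|S|}{\gamma^2\beta\epsilon\,[1-\beta/\beta^*]}\right)^{\log(\beta^*/\beta)}\right].
\]
Multiplying by the $M=\mathcal{O}(\epsilon^{-1})$ grid evaluations reproduces the runtime claimed in~\eqref{eq:partition_trunc_small}. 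Correctness follows from the triangle inequality: with all $2M$ estimates within additive error $\epsilon'$ of their true values, the estimated $\hat R(x)$ differs from $R(x)$ by at most $2\epsilon'$, strictly smaller than the separation between the two regimes established in Lemma~\ref{lemma:beta_value}, so the crossing of the threshold $\Xi$ occurs within $\epsilon$ of the true ground-state energy.

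The main obstacle is not algorithmic but parameter-matching: the accuracy $\epsilon'=\Theta(\gamma^2\beta\epsilon)$ dictated by the RQITE termination rule enters the cluster-expansion cost raised to the power $\log(\beta^*/\beta)$, so one must simultaneously keep $\beta$ large enough to resolve the spectral gap and small enough for the cluster series to converge. This is exactly the tension that Assumption~\ref{assume:gap_accuracy_relation}, together with $\epsilon>\epsilon^*$, resolves: it forces $\log(\beta^*/\beta)>0$ and $(1-\beta/\beta^*)^{-1}=\mathcal{O}(1)$, and any attempt to push $\epsilon$ below $\epsilon^*$ would make the cluster-expansion bound of Lemma~\ref{lemma:error1} diverge, consistent with the hardness barriers discussed in the main text.
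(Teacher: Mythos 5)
Your proposal is correct and follows essentially the same route as the paper's own proof: replace the quantum estimates of $D_\beta(H-x)$ and $D_{2\beta}(H-x)$ in the RQITE loop by the cluster-expansion estimator of Theorem~\ref{thm:partition_cluster} at accuracy $\epsilon'=\Theta(\gamma^2\beta\epsilon)$, observe that Assumption~\ref{assume:gap_accuracy_relation} gives $\beta\leq 1/\epsilon$ so that $\epsilon>\epsilon^*$ guarantees $\beta<\beta^*$, and multiply the per-point cost by the $\mathcal{O}(\epsilon^{-1})$ grid evaluations to obtain the stated runtime. Your explicit treatment of the $2\beta<\beta^*$ requirement (absorbed into constants) and the triangle-inequality check against the separation of Lemma~\ref{lemma:beta_value} are in fact slightly more careful than the paper's own write-up, which glosses over both points.
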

\begin{proof}
The cluster expansion restricts the maximal imaginary time we can reach is given by $\beta< \beta^*=(2e^2\mathfrak{d}(\mathfrak{d}+1))^{-1}$. Subsequently, from our choice of $\beta$ given by Lemma \ref{lemma:beta_value} and Assumption \ref{assume:gap_accuracy_relation} on the relationship between the gap and accuracy, we know that $\beta\epsilon\geq 1$. The minimal error that we can reach is given by $(\beta^*)^{-1}$.

Next, from the RQITE algorithm given by Theorem~\ref{theorem:gsee_main_theorem}, we know that to achieve an $\epsilon$-accuracy estimation on the ground state energy, an $\epsilon^\prime=\gamma^2 \beta\epsilon$ accuracy is required for estimating the partition function. Subsequently, we invoke Theorem \ref{thm:partition_cluster} to estimate the partition function using cluster expansion, in which case, we substitute in the $\epsilon^\prime$. Finally, an overall $\epsilon^{-1}$ factor is accounted for because we need to evaluate the partition function for $\mathcal{O}(\epsilon^{-1})$ different $x$.
\end{proof}

It should be remarked that because of the limited accuracy we can achieve, Assumption \ref{assume:gap_accuracy_relation} also sets constraints on the spectral gap for the quantum system. This renders the dequantization algorithm rather restrictive. To circumvent such a conundrum, in the next section, we identify certain situations where analytic continuation could be leveraged to extend the results to arbitrary constant accuracy.

\subsection{Analytic regions for initial state with large enough overlap}
Now, we extend the dequantization of the GSEE algorithm to arbitrary constant accuracy by utilizing tools of analytic continuation to extend $\beta$ to an arbitrary constant in the cluster expansion. As mentioned at the beginning of Sec.~\ref{sec:deq}, we must identify an analytic region for the target function given by Eq.~\eqref{eq:target}. This is equivalent to determining the partition function's zero-free region, a conundrum to overcome in general settings. To this end, we consider the scenario where the initial state has a large enough overlap with the ground state, the partition function is guaranteed to be non-zero for $\beta$ around a disk in the complex plane, for which we have the following result.
\begin{theorem}[Zero-free region of the partition function]\label{thm:zero-free}
For the GSEE problem defined in Definition \ref{def:gsee}, when the initial-state overlap with the ground state satisfies $|\langle \psi_I| \psi_0\rangle|\geq \gamma= \frac{1}{\sqrt{2}},$
we have that the partition function $D_\beta(H)$ is zero-free for 
\begin{eqnarray}\label{eq:zero_free_condition}
    \rm{Re}(\beta)> 0,
\end{eqnarray}
where $\Delta$ is the energy gap of the Hamiltonian.
\end{theorem}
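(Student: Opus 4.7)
The plan is to work in the eigenbasis of $H$ and show that the ground-state contribution to the partition function strictly dominates all other contributions in modulus whenever $\mathrm{Re}(\beta) > 0$. Specifically, writing $p_j = |\langle \psi_I | \psi_j\rangle|^2$ and $\sum_j p_j = 1$, the partition function admits the spectral expansion
\begin{equation}
D_\beta(H) = p_0 e^{-\beta E_0} + \sum_{j \geq 1} p_j e^{-\beta E_j}.
\end{equation}
The goal is to establish $|p_0 e^{-\beta E_0}| > \bigl|\sum_{j\geq 1} p_j e^{-\beta E_j}\bigr|$, after which the reverse triangle inequality immediately gives $D_\beta(H) \neq 0$.

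To bound the excited contribution, I would set $\beta = a + ib$ with $a > 0$ and apply the triangle inequality term-by-term:
\begin{equation}
\Bigl|\sum_{j \geq 1} p_j e^{-\beta E_j}\Bigr| \leq \sum_{j \geq 1} p_j e^{-a E_j}.
\end{equation}
By the gap assumption in Definition~\ref{def:gsee} we have $E_j \geq E_1 \geq E_0 + \Delta > E_0$ for every $j \geq 1$, and since $a > 0$ the function $E \mapsto e^{-aE}$ is strictly decreasing. Hence $e^{-a E_j} < e^{-a E_0}$ for every $j \geq 1$, and summing yields $\sum_{j\geq 1} p_j e^{-a E_j} < (1-p_0) e^{-a E_0}$ provided at least one $p_j$ with $j \geq 1$ is positive (the case $p_0 = 1$ is trivial since $D_\beta(H) = e^{-\beta E_0}$). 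The hypothesis $|\langle \psi_I|\psi_0\rangle| \geq 1/\sqrt{2}$ translates to $p_0 \geq 1/2 \geq 1 - p_0$, and combining everything gives
\begin{equation}
\Bigl|\sum_{j \geq 1} p_j e^{-\beta E_j}\Bigr| < (1 - p_0) e^{-a E_0} \leq p_0 e^{-a E_0} = |p_0 e^{-\beta E_0}|,
\end{equation}
which is exactly what we need.

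The main obstacle (really just a subtlety) is the boundary case $p_0 = 1/2$, where $p_0$ and $1 - p_0$ are equal and the dominance looks tight. Here the extra strict slack must come from somewhere: it comes from $E_j > E_0$ for $j \geq 1$, which turns $\sum_{j\geq 1} p_j e^{-aE_j} \leq (1-p_0) e^{-aE_0}$ into a \emph{strict} inequality the moment any excited $p_j$ is nonzero. Since $p_0 = 1/2 < 1$ forces some excited weight, the argument still closes. No restriction on $\mathrm{Im}(\beta)$ is needed because the bound depends only on $a = \mathrm{Re}(\beta)$, which is precisely why the zero-free region is the entire open right half-plane.
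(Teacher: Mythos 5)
Your proof is correct, and it rests on the same core observation as the paper's — expand $D_\beta(H)$ in the eigenbasis and exploit ground-state dominance $p_0\geq 1/2$ together with the decay $|e^{-\beta(E_j-E_0)}|=e^{-\mathrm{Re}(\beta)(E_j-E_0)}<1$ for $\mathrm{Re}(\beta)>0$ — but your finishing argument is genuinely different and, as written, more rigorous. The paper factors out the ground-state phase and then asserts the lower bound $|S_\beta(H)|=\bigl|p_0+\sum_{i\geq 1}p_i e^{-\beta(E_i-E_0)}\bigr|\geq \bigl|p_0+(1-p_0)e^{-\beta\Delta}\bigr|$, on the grounds that the ``worst case'' is all excited weight concentrated on the first excited state; it then solves $p_0+(1-p_0)e^{-\beta\Delta}=0$ explicitly, finding zeros at $\beta=\frac{1}{\Delta}\bigl(\pm i\pi+\ln\frac{1-p_0}{p_0}\bigr)$ with $\mathrm{Re}(\beta)\leq 0$ when $p_0\geq 1/2$. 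As a literal pointwise inequality that reduction step can fail — for instance with real $\beta>0$, $p_0=1/2$, and excited weight split between levels at gaps $\Delta$ and $2\Delta$, the left side is strictly smaller than the right side — so the paper is really invoking an unproven worst-case heuristic rather than a bound. Your term-by-term triangle-inequality estimate $\bigl|\sum_{j\geq1}p_j e^{-\beta E_j}\bigr|<(1-p_0)e^{-aE_0}\leq p_0e^{-aE_0}=|p_0e^{-\beta E_0}|$, with strictness supplied by the spectral gap, sidesteps this issue entirely and handles the boundary case $p_0=1/2$ cleanly. What the paper's route buys in exchange is the explicit location of the nearest zeros, which it uses in a follow-up remark to argue that the analyticity radius is $\Omega(\pi/\Delta)$ when $p_0=1/2$ and to compare with prior work; your domination argument proves zero-freeness of the open right half-plane but does not by itself locate where zeros sit on or beyond the imaginary axis.
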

\begin{proof}
We first expand the partition function in the eigenbasis of the Hamiltonian:
\begin{eqnarray}\label{eq:zero_pf1}
\begin{split}
    D_\beta(H)&=\bra{\psi_I}e^{-\beta H} \ket{\psi_I}\\
    &=\sum_{i=0}^{N-1}p_i e^{-\beta E_i}\\
    &=e^{\beta E_0}\sum_{i=0}^{N-1}p_i e^{-\beta (E_i-E_0)}=:e^{\beta E_0}S_\beta(H),
\end{split}
\end{eqnarray}
where in the last line we have extracted the ground-state energy out and denoted the summation part as $S_\beta(H)$. We observe that the term $|e^{\beta E_0}|$ approaches zero only in the limit that $|\Re(\beta)|\rightarrow \infty$ so that zero points of the partition function are determined by the second part in the last line of Eq.~\eqref{eq:beta_proof_1}. For the second part, we observe that
\begin{eqnarray}\label{eq:eq:beta_proof_2}
\begin{split}
    |S_\beta(H)|&=\left|p_0+\sum_{i=1}^{N-1}p_i e^{-\beta (E_i-E_0)}\right|\\
    &\geq \left|p_0 +(1-p_0)e^{-\beta \Delta}\right|.
\end{split}
\end{eqnarray}
In the first line, we have singled out the contribution of the ground state, and we explain the results in the second line in the following. Because $p_0\geq0.5=\gamma^2$, which means the ground-state contribution dominants in the initial state, $|S_\beta(H)|$ becomes zero only when the excited-state contribution cancels out with the ground-state. It then follows that because $\rm{Re}(\beta)>0$ so that $|p_i e^{-\beta(E_i-E_0)}|\leq p_i,i\neq 0$, the excited-state elements in the summation are non-increasing with their energy separation with the ground state. We then identify the worst-case scenario, where the initial-state component that is orthogonal to the ground-state concentrates to the first excited state, i.e., $p_1=1-p_0$. Finally, by letting $p_0 +(1-p_0)e^{-\beta \Delta}=0$, we get
\begin{eqnarray}\label{eq:zero-free-final}
    \beta = \frac{1}{\Delta}\left(\pm i \pi+\ln{\frac{1-p_0}{p_0}}\right),
\end{eqnarray}
Then, by noting that $\rm{Re}(\beta)=\ln{\frac{1-p_0}{p_0}}\leq0$ since $p_0\geq0.5$, we find the zero point with the largest real part is not present in the right half complex plane. This completes the proof.
\end{proof}

Before we close this section, we note that when $p_0=0.5$, the zero point given by Eq.~\eqref{eq:zero-free-final} becomes $\frac{\pm i\pi}{\Delta}$. This suggests the radius of the analytic region of the logarithm of the partition function has the lower-bounded value $\Omega\left(\frac{\pi}{\Delta}\right)$. A similar result is obtained in [Theorem 9, Ref.~\cite{mann2024algorithmic}], where a concrete instance is established by the authors, and the condition is proved to be asymptotically optimal. Our construction here, on the other hand, could be seen as a non-constructive proof for further utility.


\subsection{Classical algorithm for solving the GSEE problem with general constant accuracy}
We provide details on the classical method for dequantization of the RQITE algorithm for arbitrary constant accuracy. Our construction follows from the deduction in Sec.~\ref{sec:analytic_continuation} and \ref{sec:arbitrary_complexity} for the cluster expansion.

A central difference between the construction here and the scenario where only limited accuracy can be achieved in Sec.~\ref{sec:deq_limit} is that here we apply a similarity-transformed Hamiltonian $H^\prime=U^\dagger H U$, where $\ket{\psi_c}=U\ket{0^n}$ is the semi-classical guiding state. As we assumed that the unitary $U$ prepares the semi-classical state of constant depth $d$, $H^\prime$ remains local. We denote the maximum degree of the interaction graph related to $H^\prime$ as $\mathfrak{d}^\prime$. To see how the degree changes from $\mathfrak{d}$ to $\mathfrak{d}^\prime$, we focus on two cases: (i) $D$-dimensional geometrically local quantum lattices, where the quantum circuit $U$ is accordingly assumed to be geometrically local, i.e.~brick wall quantum circuit, where the locality $k\mapsto d^Dk$ (ii) All-to-all interaction is allowed, yet each quantum gate of $U$ is only allowed to act on a constant number of qubits, where the locality $k\mapsto k2^d$. In both scenarios, we may upper bound the modified maximum degree $\mathfrak{d}^\prime$ via using Eq.~\ref{Eq: maximumdegree}.


Now, using the classical algorithm provided in Sec.~\ref{sec:arbitrary_complexity}, we have the following result.
\begin{theorem}[Dequantization of the RQITE algorithm with arbitrary accuracy]\label{thm:deq_rqiteapp}
Let $\ket{\psi_c}=U\ket{0^n}$ be the semi-classical state. For a targeted local Hamiltonian $H$, assume that $\ket{\psi_c}$ has a lower-bounded overlap $1/\sqrt{2}$ with the ground state of $H$. Let the locality of the similarity-transformed Hamiltonian $H^\prime=U^\dagger HU$ be denoted as $\mathfrak{d}^\prime$. Then, under Assumption~\ref{assume:gap_accuracy_relation}, there exists a classical algorithm that solves the GSEE problem in Definition \ref{def:gsee} with a run time
\begin{align}\label{eq:arbitrary_final_cost}
        \left[\frac{e^{2\pi\beta/\beta^*}}{\beta\epsilon^2}\rm{poly}(\abs{S})\right]^{ e^{2\pi\beta/\beta^*}},
    \end{align}
where $\beta=\Delta^{-1}\ln(\gamma^{-2}\epsilon^{-1})$, and $\beta^*=(2e^2\mathfrak{d}^\prime(\mathfrak{d}^\prime+1))^{-1}$.
\end{theorem}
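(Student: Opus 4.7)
The plan is to dequantize the RQITE algorithm of Theorem~\ref{theorem:gsee_main_theorem} by replacing every quantum evaluation of the partition function $D_{\beta}(H-x)$ with a classical one that combines cluster expansion with analytic continuation. Recall that the RQITE procedure reduces the GSEE problem to estimating $D_{\beta}(H-x)$ and $D_{2\beta}(H-x)$ at $\mathcal{O}(\varepsilon^{-1})$ grid points $x\in[E_a,E_0]$, each to additive accuracy $\varepsilon'=\Theta(\gamma^{2}\beta\varepsilon)$, and that by construction we can restrict to $x$ satisfying $H-x\succeq 0$ so that $\mathrm{Re}(\beta)(H-x)\succeq 0$ along the trajectory.

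First, I would rewrite each target quantity in the similarity-transformed form $D_{\beta}(H-x)=\langle 0^{n}|e^{-\beta(H'-x)}|0^{n}\rangle$ with $H'=U^{\dagger}HU$, so that the guiding state becomes the product state $|0^{n}\rangle$ and the cluster expansion of Section~\ref{sec:cluster} applies directly. Because $U$ has constant depth $d$, a lightcone argument shows that $H'$ is still a local Hamiltonian whose interaction graph has bounded maximum degree $\mathfrak{d}'=\mathcal{O}(1)$, with the explicit bounds quoted in the paragraph preceding the theorem covering both the geometrically local and constant-qubit-gate circuit regimes. Consequently $H'$ admits cluster expansion with convergence radius $\beta^{*}=(2e^{2}\mathfrak{d}'(\mathfrak{d}'+1))^{-1}$.

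Second, since the target $\beta=\Delta^{-1}\ln(\gamma^{-2}\varepsilon^{-1})$ can exceed $\beta^{*}$, I would extend the expansion by the analytic continuation map $\beta\mapsto\beta\phi(z)$ of Section~\ref{sec:analytic_continuation} and approximate $f(z)=\log D_{\beta\phi(z)}(H'-x)$ by its truncated Taylor series at $z=0$, evaluated at $z=1$. For this to be legitimate, $f(z)$ must be analytic on a disk containing $z=1$, which in turn requires $D_{\beta\phi(z)}(H'-x)$ to be zero-free on the image region. Here the overlap hypothesis $\gamma=1/\sqrt{2}$ is crucial: by Lemma~\ref{thm:analytic_region} and Theorem~\ref{thm:zero-free}, $D_{\beta}(H-x)$ is zero-free whenever $\mathrm{Re}(\beta)>0$, which permits $\phi$ to be chosen so that the image of the disk stays inside the strip $\mathrm{Re}(\beta\phi(z))>0$ and $|\mathrm{Im}(\beta\phi(z))|<\beta^{*}$ required for the cluster-expansion bounds to close. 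I would then invoke Theorem~\ref{thm:partition_analytic_continuation}, which gives an estimator $\hat f(H'-x)$ with additive error $\varepsilon''$ in classical time $[e^{2\pi\beta/\beta^{*}}/\varepsilon''\cdot\mathrm{poly}(\abs{S})]^{e^{2\pi\beta/\beta^{*}}}$, and exponentiate to recover $\hat D_{\beta}(H-x)$. Crucially, the operator-norm prefactor $c^{1/2}=\sqrt{\langle 0^{n}|e^{-2\mathrm{Re}(\beta)(H'-x)}|0^{n}\rangle}$ appearing in the complex Taylor bound is bounded by $1$ for $x\in[E_a,E_0]$, so no $\|H\|$-dependent blow-up occurs. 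Setting $\varepsilon''=\Theta(\varepsilon')=\Theta(\gamma^{2}\beta\varepsilon)$ and multiplying by the $\mathcal{O}(\varepsilon^{-1})$ grid of $x$-values yields the overall runtime in Eq.~\eqref{eq:arbitrary_final_cost}.

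The main obstacle I anticipate is the interplay among the three independent error tolerances: the RQITE discrimination tolerance $\varepsilon'=\gamma^{2}\beta\varepsilon$ mandated by Lemma~\ref{lemma:beta_value}, the Taylor truncation error $\varepsilon''$, and the grid step $\varepsilon$. Each must be tight enough that the termination threshold $\Xi$ is correctly discriminated at every $x$ with high probability via a union bound, yet loose enough that the product of the three cost factors still collapses into the doubly-exponential form stated. A secondary subtlety lies in converting an additive error on $\log D_{\beta}(H-x)$ into an additive error on $D_{\beta}(H-x)$ itself: since $D_{\beta}(H-x)\leq 1$ uniformly on the region of interest, the transfer is benign, but one must verify that the classical estimate remains sufficiently far from zero to legitimize the logarithm, which is exactly what the zero-free region guaranteed by the $\gamma=1/\sqrt{2}$ hypothesis provides.
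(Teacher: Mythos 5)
Your proposal is correct and follows essentially the same route as the paper's own proof: similarity-transforming to $H'=U^\dagger HU$ with bounded degree $\mathfrak{d}'$, invoking the zero-free region from the $\gamma=1/\sqrt{2}$ overlap (Theorem~\ref{thm:zero-free}) to justify analytic continuation, applying Theorem~\ref{thm:partition_analytic_continuation} with the $c^{1/2}\leq 1$ bound, converting logarithmic to additive error, and setting the partition-function accuracy to $\Theta(\gamma^2\beta\epsilon)$ with an $\mathcal{O}(\epsilon^{-1})$ grid overhead. The error-transfer subtlety you flag is handled in the paper by the elementary bound $e^{\epsilon'}-1\leq \epsilon' e^{\epsilon'}$ (choosing $\epsilon'=\varepsilon/2$), exactly the benign conversion you anticipate.
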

\begin{proof}
The classical simulation algorithm follows from Sec.~\ref{sec:arbitrary_complexity}. That is we first obtain the similarity-transformed Hamiltonian $H^\prime$ so that the locality of the Hamiltonian is changing from $\mathfrak{d}$ to $\mathfrak{d}^\prime$. It should be noted that $\mathfrak{d}^\prime$ remains a constant as long as $\mathfrak{d}$ is a constant. Then, we apply the cluster expansion in combination with the analytic continuation to evaluate each $D_\beta(H^\prime-x)$ in the RQITE algorithm. To do so, we output an $\epsilon^\prime$-accuracy estimation $\log\hat{f}(x)$ to $\log f(x)=\log\left(D_{\beta\phi(1)}(H-x)\right)=\log\left(\langle0^n|e^{-\beta(H^{\prime}-x)}|0^n\rangle\right)$ using the cluster expansion algorithm described in Sec.~\ref{sec:arbitrary_complexity} with the cost given by Eq.~\eqref{eq:analytic_cost} in Theorem \ref{thm:partition_analytic_continuation}. It then implies that 
\begin{eqnarray}
    e^{-\epsilon^\prime}\hat{f}(x) \leq f(x) \leq e^{\epsilon^\prime} \hat{f}(x),
\end{eqnarray}
where $0\leq f(x), \hat{f}(x)\leq 1$ since $0\leq\langle0^n|e^{-\beta(H^{\prime}-x)}|0^n\rangle\leq 1$, for $\beta\in \mathbb{R}$ because $e^{-\beta(H^{\prime}-x)}$ is positive semi-definite and trace non-increasing. We thus take $\hat{f}(x)$ to be non-negative and all negative output can be rounded up to zero. This gives us that 
\begin{eqnarray}
    (1-e^{-\epsilon^\prime}) \leq |f(x)-\hat{f}(x)| \leq (e^{\epsilon^\prime}-1) .
\end{eqnarray}
For $(e^{\epsilon^\prime}-1)$, we use the trick that $e^{x}-1=\sum_{j=1}^\infty \frac{x^j}{j!}\leq x\sum_{j=0}^\infty \frac{x^j}{j!}=xe^x$ for sufficiently small $x$ and vice versa for $(1-e^{-\epsilon^\prime})$. Now, let $e^{\epsilon^\prime}-1\leq \epsilon^\prime e^{\epsilon^\prime}\leq \varepsilon$, which yields $\epsilon^\prime=W_0(\varepsilon)$ such that $W_0(x)$ is the Lambert W function. Finally, we take for convenience $\epsilon^\prime=\varepsilon/2$, which trivially satisfies the above analysis. To summarize, to fulfill an additive-error estimation in the partition function, it suffices to estimate the logarithm of the partition function with the accuracy doubled.

For dequantization of the RQITE algorithm, we thus set the accuracy in estimating the partition function to be $\varepsilon=2\gamma^2\beta\epsilon$, $\gamma^2=0.5$ and $\beta$ given by Eq.~\eqref{eq:beta_value} and invoke Theorem \ref{thm:partition_analytic_continuation}. Finally, we remark that a $\epsilon^{-1}$ overhead is introduced because there are $\mathcal{O}(\epsilon^{-1})$ of partition function we need to evaluate. This gives the final result in Eq.~\eqref{eq:arbitrary_final_cost}.
\end{proof}

\end{document}